\title{On the Parameterized Complexity of Grundy Domination and Zero Forcing Problems} 
\author{Robert Scheffler}{Institute of Mathematics, Brandenburg University of Technology, Cottbus, Germany}{robert.scheffler@b-tu.de}{https://orcid.org/0000-0001-6007-4202}{}
\authorrunning{R. Scheffler} 
\keywords{Grundy domination, dominating sequence, zero forcing set, parameterized complexity, treewidth} 
\tikzstyle{vertex}=[draw, circle]
\tikzstyle{svertex}=[draw, circle, inner sep=1.5pt]
\definecolor{myblue}{HTML}{007FFF}
\newcommand{\T}{\ensuremath{\mathfrak{T}}}
\newcommand{\fR}{\ensuremath{\mathfrak{R}}}
\newcommand{\cP}{\ensuremath{\mathcal{P}}}
\newcommand{\C}{\ensuremath{\mathcal{C}}}
\renewcommand{\S}{\ensuremath{\mathcal{S}}}
\newcommand{\D}{\ensuremath{\mathfrak{D}}}
\newcommand{\E}{\ensuremath{\mathcal{E}}}
\newcommand{\K}{\ensuremath{\mathcal{K}}}
\renewcommand{\H}{\ensuremath{\mathcal{H}}}
\newcommand{\R}{\ensuremath{\mathcal{R}}}
\newcommand{\A}{\ensuremath{\mathcal{A}}}
\newcommand{\NP}{\ensuremath{\mathsf{NP}}}
\newcommand{\XP}{\ensuremath{\mathsf{XP}}}
\newcommand{\FPT}{\ensuremath{\mathsf{FPT}}}
\newcommand{\W}{\ensuremath{\mathsf{W}}}
\newcommand{\WT}{\ensuremath{\W[t]}}
\newcommand{\WOne}{\ensuremath{\W[1]}}
\newcommand{\WTwo}{\ensuremath{\W[2]}}
\renewcommand{\O}{\ensuremath{\mathcal{O}}}
\newcommand{\N}{\ensuremath{\mathbb{N}}}
\newcommand{\false}{\texttt{false}}
\newcommand{\true}{\texttt{true}}
\newcommand{\ZTD}{\ensuremath{\{Z,T,D\}}}
\newcommand{\ZT}{\ensuremath{\{Z,T\}}}
\newcommand{\TD}{\ensuremath{\{T,D\}}}
\newcommand{\ZD}{\ensuremath{\{Z,D\}}}
\newcommand{\crule}[1]{\ensuremath{\overset{\scriptscriptstyle #1}{\longrightarrow}}}
\newcommand{\z}{\crule{Z}}
\renewcommand{\t}{\crule{T}}
\renewcommand{\d}{\crule{D}}
\newcommand{\x}{\crule{X}}
\newcommand{\y}{\crule{Y}}
\newcommand{\BIP}{\textsc{One-Sided Grundy Total Domination}}
\newcommand{\GN}{\textsc{Grundy Domination}}
\newcommand{\TN}{\textsc{Grundy Total Domination}}
\newcommand{\ZN}{\textsc{Z-Grundy Domination}}
\newcommand{\LN}{\textsc{L-Grundy Domination}}
\newcommand{\LLN}{\textsc{Local L-Grundy Domination}}
\newcommand{\LLNC}{\textsc{(Local) L-Grundy Domination}}
\newcommand{\MCP}{\textsc{Multicolored Clique}}
\newcommand{\HYPER}{\textsc{Grundy Covering in Hypergraphs}}
\newcommand{\ZFS}{\textsc{Zero Forcing Set}}
\newcommand{\FORCE}[1]{\ensuremath{\textsc{Force}\,(#1)}}
\newcommand{\CONC}[1]{\textsc{(Connected) #1}}
\newcommand{\CON}[1]{\textsc{Connected #1}}
\newcommand{\TOT}[1]{\textsc{Total #1}}
\newcommand{\DUAL}[1]{\textsc{Dual #1}}
\newcommand{\PDS}{\textsc{Power Dominating Set}}
\newcommand{\UD}{\textsc{Upper Domination}}
\newcommand{\UI}{\textsc{Upper Irredundance}}
\newtheorem{assumption}{Assumption}
\begin{document}

\maketitle

\begin{abstract}
    We consider two different problem families that deal with domination in graphs. On the one hand, we focus on dominating sequences. In such a sequence, every vertex dominates some vertex of the graph that was not dominated by any earlier vertex in the sequence. The problem of finding the longest dominating sequence is known as \GN{}. Depending on whether the closed or the open neighborhoods are used for domination, there are three other versions of this problem: \TN{}, \LN{}, and \ZN{.} We show that all four problem variants are \WOne-complete when parameterized by the solution size.

    On the other hand, we consider the family of zero forcing problems which form the parametric duals of the Grundy domination problems. In these problems, one looks for the smallest set of vertices initially colored blue such that certain color change rules are able to color all other vertices blue. Bhyravarapu et al.~[IWOCA 2025] showed that the dual of \ZN{}, known as \ZFS, is in \FPT{} when parameterized by the treewidth or the solution size. We extend their treewidth result to the other three variants of zero forcing and their respective Grundy domination problems. Our algorithm also implies an \FPT{} algorithm for \GN{} when parameterized by the number of vertices that are not in the dominating sequence. In contrast, we show that \LN{} is \WOne-hard for that parameter.
\end{abstract}

\section{Introduction}
\label{intro}

\subparagraph{Grundy Domination Problems} A dominating set of a graph $G$ is a set $S$ of vertices such that all vertices of $G$ are either in $S$ or have a neighbor in $S$. When we want to find such a dominating set, we could greedily take vertices that dominate some vertex that was not dominated by any vertex chosen before. This greedy algorithm produces a so-called \emph{dominating sequence}\footnote{The names of the vertex sequences used in the following are taken from Lin~\cite{lin2019zero}. Note that they may differ from the names used in the papers that introduced the notions.} $(v_1, \dots, v_k)$ of vertices such that $N[v_i] \setminus \bigcup_{j=1}^{i-1} N[v_j]$ is not empty for every $i \in \{1, \dots, k\}$.  Obviously, this greedy algorithm is not guaranteed to find the smallest dominating set. Therefore, it is an interesting question how bad the greedy solution might perform, i.e., what is the longest dominating sequence in the graph. This graph parameter was introduced by Bre\v{s}ar et al.~\cite{bresar2014dominating} as the \emph{Grundy domination number}. 

In~\cite{bresar2016total}, a similar notion was introduced for the concept of total domination where every vertex has to have some neighbor in the set $S$. A \emph{total dominating sequence} $(v_1, \dots, v_k)$ fulfills the condition that $N(v_i) \setminus \bigcup_{j=1}^{i-1} N(v_j)$ is not empty for every $i \in \{1, \dots, k\}$. The maximal length of a total dominating sequence is called the \emph{Grundy total domination number}~\cite{bresar2016total}. 

Observe that for dominating sequences, one considers the closed neighborhood of the vertices while for total dominating sequences the open neighborhoods are considered. The authors of~\cite{bresar2017grundy} introduce two versions of dominating sequences that mix the two notions of neighborhood. In a \emph{$Z$-sequence}, $N(v_i) \setminus \bigcup_{j=1}^{i-1} N[v_j]$ is not empty, while in an \emph{$L$-sequence} $N[v_i] \setminus \bigcup_{j=1}^{i-1} N(v_j)$ is not empty for all $i \in \{1, \dots, k\}$. The \emph{$Z$-Grundy domination number} and \emph{$L$-Grundy domination number} are the maximum lengths of such sequences. 

So far, the algorithmic research on these four types of dominating sequence problems mostly focused on classical complexity theory. There are several \NP-completeness proofs for these problems on different graphs classes~\cite{bresar2020grundy,bresar2018total,bresar2022computational,bresar2017grundy,bresar2014dominating,bresar2016total} (see \cref{fig:reductions} for an overview of the used reductions). Besides this, there have been given polynomial-time algorithms on graph classes such as forests~\cite{bresar2020grundy,bresar2018total,bresar2014dominating}, cographs~\cite{bresar2014dominating}, chain graphs~\cite{bresar2023computation,bresar2022computational}, interval graphs~\cite{bresar2016dominating} as well as bipartite distance-hereditary graphs~\cite{bresar2018total}. Note that the complexity of different problems may vary dramatically on the same class of graphs. For example, there is a simple linear-time algorithm that computes the Grundy domination number of split graphs~\cite{bresar2014dominating}, while computing the Grundy total domination number of split graphs is \NP-hard~\cite{bresar2018total}.

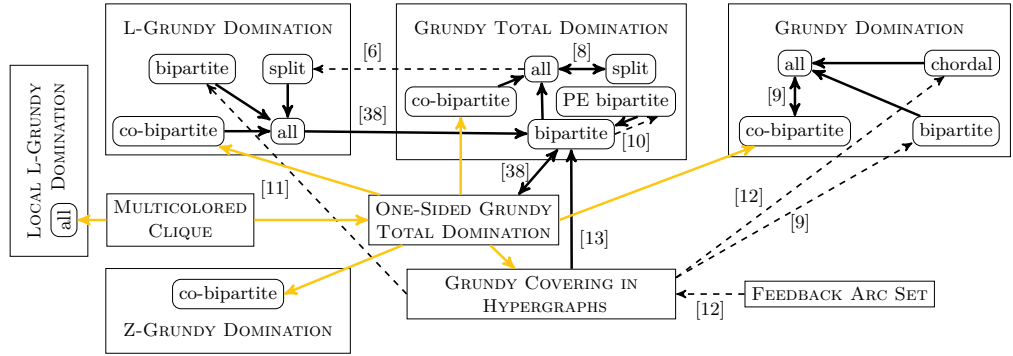
\begin{figure}
    \centering
    \resizebox{\textwidth}{!}{
    \tikzstyle{fpt}=[very thick, -stealth']
\tikzstyle{fptboth}=[very thick, stealth'-stealth']
\tikzstyle{fptmy}=[very thick, -stealth', lipicsYellow]
\tikzstyle{poly}=[dashed, thick, -stealth']
\tikzstyle{polymy}=[dashed, thick, -stealth',lipicsYellow]
\tikzstyle{problem}=[draw, rectangle]
\tikzstyle{class}=[draw, rectangle, rounded corners]

\begin{tikzpicture}
\small
    \node[problem] (FAS) at (4.5,-7.5) {\textsc{Feedback Arc Set}};
    \node[problem, align=center] (BIP) at (-1.8,-6.25) {\textsc{One-Sided Grundy} \\ \textsc{Total Domination}};
    \node[align=center, problem, minimum width=4.5cm] (H) at (-0.5,-7.5) {\textsc{Grundy Covering in} \\ \textsc{ Hypergraphs}};

    \node[problem, align=center] (MCP) at (-6.55,-6.25) {\textsc{Multicolored} \\ \textsc{Clique}};
    
    \node (GNtext) at (5,-3) {\GN};
    \node[class] (GNcb) at (3.75,-4.75) {co-bipartite};
    \node[class] (GNb) at (6.45,-4.75) {bipartite};
    \node[class] (GNc) at (6.55,-3.6) {chordal};
    \node[class] (GNa) at (3.75,-3.6) {all};
    \node (GN) [problem, fit={(GNtext) (GNb) (GNc) (GNcb)}, inner sep=5pt] {};
    
    \node (TNtext) at (-0.5,-3) {\TN};
    \node[class] (TNa) at (-0.5,-3.7) {all};
    \node[class] (TNb) at (-0,-4.8) {bipartite};
    \node[class] (TNs) at (1,-3.7) {split};
    \node[class] (TNcb) at (-1.85,-4.25) {co-bipartite};
    \node[class, align=center] (TNpeb) at (0.75,-4.25) {PE bipartite};

    \node (TN) [problem, fit={(TNtext) (TNb) (TNs)}, inner sep=5pt] {};
    
    \node (LNtext) at (-5.75,-3) {\LN};
    \node[class] (LNb) at (-6.35,-3.7) {bipartite};
    \node[class] (LNcb) at (-6.75,-4.75) {co-bipartite};
    \node[class] (LNa) at (-4.75,-4.75) {all};
    \node[class] (LNs) at (-4.75,-3.7) {split};
    \node (LN) [problem, fit={(LNtext) (LNb) (LNa)}, inner sep=5pt] {};

    \node[rotate=90, align=right] (LLNtext) at (-9.45,-5.9) {\textsc{Local L-Grundy}  \textsc{Domination}};
    \node[class, rotate=90] (LLNa) at (-8.75,-6.25) {all};
    \node[class, rotate=90] (LLNb) at (-8.75,-4.7) {bipartite};
    \node[class, rotate=90] (LLNs) at (-8.75,-7.5) {split};
    \node (LLN) [problem, fit={(LLNtext) (LLNa)}, inner sep=5pt] {};

    \node (ZNtext) at (-5.75,-8.1) {\ZN};
    \node[class] (ZNcb) at (-5.75,-7.5) {co-bipartite};
    \node (ZN) [problem, fit={(ZNtext) (ZNcb)}, inner sep=5pt] {};

    \draw[fpt] (H.40) to node[pos=0.25,right]{\cite{bresar2016total}} (TNb);
    \draw[fpt] (LNa) to node[pos=0.305,above]{\cite{lin2019zero}} (TNb);
    \draw[fptboth] (TNa) to node[above]{\cite{bresar2018total}} (TNs);
    \draw[fpt] (TNb.152) to (TNa);
    \draw[fptboth] (GNa) to node[left]{\cite{bresar2023computation}} (GNcb);
    \draw[fpt] (LNb) to (LNa);
    \draw[fpt] (GNc) to (GNa);
    \draw[fpt] (GNb) to (GNa);
    \draw[fpt] (LNs) to (LNa);
    \draw[fpt] (LNcb) to (LNa);
    \draw[fpt] (TNcb) to (TNa);
    \draw[fpt] (TNpeb.325) to (TNb.11);

    \draw[fptboth] (BIP.25) to node[left]{\cite{lin2019zero}} (TNb);

    \draw[poly] (TNa) to node[pos=0.8,above right]{\cite{bresar2020grundy}} (LNs);
    \draw[poly] (FAS) to node[below]{\cite{bresar2014dominating}} (H);
    \draw[poly] (H.180) to node[pos=.5,left=0.15cm]{\cite{bresar2017grundy}} (LNb);
    \draw[poly] (H.7) to node[pos=0.4,right=0.2cm]{\cite{bresar2023computation}} (GNb.195);
    \draw[poly] (H.7) to node[pos=0.4,left=0.2cm]{\cite{bresar2014dominating}} (GNc);
    \draw[poly] (TNb.0) to node[pos=0.5, below]{\cite{bresar2022computational}} (TNpeb.340);

    \draw[polymy] (MCP) to (LLNb);
    \draw[polymy] (MCP) to (LLNs);

    \draw[fpt] (LLNs) to (LLNa);
    \draw[fpt] (LLNb) to (LLNa);

    \draw[fptmy] (BIP.97) to (TNcb);
    \draw[fptmy] (BIP) to (LNcb);
    \draw[fptmy] (BIP.0) to (GNcb);
    \draw[fptmy] (BIP) to (H);
    \draw[fptmy] (BIP) to (ZNcb.0);
    \draw[fptmy] (MCP) to (BIP);
    \draw[fptmy] (MCP) to (LLNa);
\end{tikzpicture}
    }
    \caption{The reductions for Grundy domination problems. Rectangular nodes represent problems, rounded corner nodes within these problem nodes represent restrictions of the problem to the respective graph class. Black arrows represent reductions given in the literature, yellow arrows stand for reductions given in this paper. Thick solid arrows represent polynomial-time \FPT{} reductions, while dashed arrows stand for polynomial-time reductions that are not \FPT{} reductions. Note that trivial reductions from a graph class to some superclass do not need references.}
    \label{fig:reductions}
\end{figure}

Problems related to \GN{} are \UD{}~\cite{cheston1990computational} and \UI~\cite{cockayne1978properties}. Both problems ask for a largest set $S$ of vertices that is \emph{irredundant}, i.e., every vertex of $S$ has an element in its closed neighborhood that is not in the closed neighborhood of any other vertex of $S$. In \UD{}, the set $S$ must additionally form a dominating set of the graph. It is easy to see that the maximal size of an irredundant set is a lower bound on the Grundy domination number. In fact, a set is irredundant if and only if every permutation of its elements forms a dominating sequence. Nevertheless, the difference between these values can be arbitrarily large.\footnote{Consider two cliques $\{u_0, \dots, u_n\}$ and $\{v_0, \dots, v_n\}$ such that $u_i$ and $v_j$ are adjacent if and only if $i > j$. Then the largest irredundant set has size~2. However, there is a dominating sequence of length $n+2$. If we replace the edges between the cliques by a matching that saturates all vertices but $u_0$ and $v_0$, then the largest irredundant set and dominating sequence have size $n+2$ while the largest dominating irredundant set has size~2.} 

\subparagraph{Zero Forcing Problems} 

Several graph problems are based on the following situation. We have a graph where some vertices are colored blue and all other vertices are colored white. If there is a blue vertex that has exactly one white neighbor, then this white neighbor is allowed to be colored blue. The main goal of the graph problems is to find the smallest initial set of blue vertices that allows to color all vertices of the graph blue. This color change scheme has been mentioned in very different applications, for example quantum control~\cite{burgarth2007full} or the supervision of electrical power systems~\cite{brueni2005pmu}. The first study of this concept from a graph-theoretical point of view was given by Haynes et al.~\cite{haynes2002domination} who introduced the problem \PDS{}. In this problem, one looks for the smallest vertex set $S$ such that coloring the vertices of $S$ and their neighbors blue allows the described color change rule to color all vertices of the graph blue. An alternative problem formulation, now known as \ZFS{}, was introduced around the year 2008 independently by several researchers~\cite{aazami2008hardness,aim2008zero,severini2008nondiscriminatory}. Here, only the vertices of $S$ are colored blue at the beginning. The authors of~\cite{aim2008zero} considered the problem for their study of the minimum rank of a graph $G$, i.e. the smallest rank of a symmetric matrix that can be constructed by replacing the ones in the adjacency matrix of $G$ by arbitrary non-zero reals. Later, other color change rules have been considered for other matrix problems~\cite{barioli2013parameters,ima2010minimum}. Furthermore, \emph{connected forcing sets}~\cite{brimkov2016characterizations,brimkov2017complexity,davila2018bounds} and \emph{total forcing sets}~\cite{davila2019total,davila2015bounding} have been considered, where the set induces a connected graph or a graph without isolated vertices, respectively. A survey on all these concepts can be found in the textbook of Hogben et al.~\cite{hogben2022inverse}.

It is shown in~\cite{bresar2017grundy} that \ZFS{} is strongly related to \ZN{}. There is a zero forcing set of size $\leq k$ in an $n$-vertex graph if and only if there is a $Z$-sequence of length $\geq n - k$. This result was later extended by Lin~\cite{lin2019zero} to the other versions of dominating sequences and other variants of zero forcing sets. So from a parameterized complexity perspective, the zero forcing problems are the parametric duals of the Grundy domination problems just as \textsc{Independent Set} is the dual of \textsc{Vertex Cover} or \textsc{Nonblocker}~\cite{dehne2006nonblocker} is the dual of \textsc{Dominating Set}.

The first complexity result on \ZFS{} was an \NP-hardness proof for a weighted version given by Aazami~\cite{aazami2008hardness}. Using a different name for the problem,  Yang~\cite{yang2013fast} extended this to the unweighted case. The \NP-hardness of the connected and the total case has been proven by Brimkov and Hicks~\cite{brimkov2017complexity} as well as Davila and Henning~\cite{davila2019total}, respectively. Polynomial-time algorithms have been given for several graph classes including trees~\cite{severini2008nondiscriminatory} and proper interval graphs~\cite{cazals2019power}.

\subparagraph{Parameterized Results}
The number of results from the field of parameterized complexity on Grundy domination problems and zero forcing problems is relatively small. Gonzalez~\cite{gonzalez2024problemas} presented \XP{} algorithms for the Grundy (total) domination number when parameterized by the treewidth of the graph. Most recently, Bhyravarapu et al.~\cite{bhyravarapu2025parameterized} claimed an \FPT{} algorithm for \ZFS{} when parameterized by the solution size. This is a strong contrast to \PDS{} which is \WTwo-hard when parameterized by the solution size~\cite{guo2008improved,kneis2006parameterized}. A key ingredient of Bhyravarapu et al.~\cite{bhyravarapu2025parameterized} is an \FPT{} algorithm for the problem when parameterized by the treewidth. So their result also implies that the $Z$-Grundy domination number can be computed in \FPT{} time when parameterized by the treewidth. Besides this, Cazals et al.~\cite{cazals2019power} showed that a precolored version of \ZFS{} is \WTwo-hard. Here, some of the vertices in the forcing set are already fixed and the parameter is the maximal number of additional vertices that might be added to the forcing set. The above mentioned problems \UD{} and \UI{} are also \WOne-hard, while their parametric duals are in \FPT{}~\cite{bazgan2018many,downey2000complexity}.

\subparagraph{Our Contribution}

We extensively study the parameterized complexity of Grundy domination problems and zero forcing problems (see \cref{tab:results} for an overview of the main results).
In \cref{sec:grundy}, we show that all four Grundy domination problems mentioned above are \WOne-complete when parameterized by the solution size (see \cref{fig:reductions} for the used reductions). Furthermore, we show the \WOne-completeness of a problem variant for hypergraphs as well as for the connected variants, where the vertices of the dominating sequence must induce a connected subgraph. Another variant of dominating sequences -- called \emph{local $L$-sequences} -- is introduced in \cref{sec:lln}, where we show that this problem as well as its connected variant are \WOne-complete as well.

Asides from these local $L$-sequences, \cref{sec:zero} focuses on zero forcing problems, the parametric duals of the Grundy domination problems. We show that the duals of all five Grundy domination problems can be solved in $2^{\O(k^2)} \cdot n$ time on an $n$-vertex graph when parameterized by the treewidth~$k$. Note that this implies the same result for the Grundy domination problems. Similar as it has been done for the dual of \ZN{} in~\cite{bhyravarapu2025parameterized}, we show that this algorithm can also be used to solve the dual of \GN{} in the same time bound when $k$ is the solution size. This adaptation does not work for the other duals since the treewidth of graphs with a forcing set of size at most~$k$ is unbounded for these variants -- as are most of the reasonable graph parameters. In fact, we show that the dual problems of {\LN} and {\LLN} are \WOne-hard, leaving open only the parameterized complexity of the dual of \TN{}.

\begin{table}[t]
	\centering
	\caption{Summary of the main complexity results given in this paper. Value $n$ stands for the number of vertices of the graph, value $k$ stands for the length of the dominating sequence.}\label{tab:results}
    \resizebox{\textwidth}{!}{
	\begin{tabular}{l  l l l l l l}
		\addlinespace
		\toprule
		problem \textbackslash{} parameter &  & $k$ & & $n - k$ && treewidth  \\[0.5ex] \toprule
       \GN & & \WOne-complete & & \FPT && \FPT \\[0.5ex]\midrule
       \TN & & \WOne-complete & & ? && \FPT \\[0.5ex]\midrule
       \ZN & & \WOne-complete & & \FPT~(see also \cite{bhyravarapu2025parameterized}) && \FPT~(see also \cite{bhyravarapu2025parameterized}) \\[0.5ex]\midrule
       \LN & & \WOne-complete & & \WOne-hard && \FPT \\[0.5ex]\midrule
       \LLN & & \WOne-complete & & \WOne-hard && \FPT \\[0.5ex]\bottomrule \addlinespace
	\end{tabular}%
    }
\end{table}

\section{Preliminaries}

\subparagraph{General Notions}
For $k \in \N$, we write $[k]$ as short form for $\{1,\dots,k\}$.
All graphs considered here are finite. We denote by $V(G)$ the vertex set and by $E(G)$ the edge set of a graph $G$. The set $N(v)$ is the \emph{(open) neighborhood} of $G$, i.e., the set of vertices $w$ for which there is an edge $vw \in E(G)$. We denote by $N[v]$ the \emph{closed neighborhood} of $G$, i.e., $N[v] = N(v) \cup \{v\}$. A \emph{hypergraph} $\H = (X, \E)$ consists of a vertex set $X$ and an edge set $\E \subseteq \cP(X)$ where $\cP(X)$ is the power set of $X$.

A graph is \emph{bipartite} if there is a partition of $V(G)$ into independent sets $A$ and $B$. A~graph is \emph{co-bipartite} if it is the complement of a bipartite graph. Equivalently, a graph is co-bipartite if $V(G)$ can be partitioned into two cliques $A$ and $B$. A graph is a \emph{split graph} if $V(G)$ can be partitioned into a clique and an independent set. A \emph{vertex cover} of a graph $G$ is a set $S \subseteq V(G)$ such that for all $vw \in E(G)$ it holds that $\{v,w\} \cap S$ is not empty.

We will also consider \emph{directed graphs}, or \emph{digraphs} for short. We will call the edges of such a graph \emph{arcs}. A digraph is \emph{acyclic} if it does not contain a directed cycle. A \emph{topological sorting} of a digraph $G$ is an ordering $(v_1, \dots, v_n)$ of its vertices such that if $(v_i,v_j)$ is an arc of $G$, then $i < j$. Note that a digraph has a topological sorting if and only if it is acyclic.

Let $\mathbb{G}$ be the family of all graphs. A \emph{graph parameter} is a mapping $\xi : \mathbb{G} \to \N$. We say that a graph parameter $\xi$ is \emph{unbounded} if for every $k \in \N$ there is a graph $G \in \mathbb{G}$ such that~$\xi(G) \geq k$.

\begin{definition}
    A \emph{tree decomposition} of a graph $G$ is a pair $(T,\{X_t\}_{t\in V(T)})$ consisting of a tree $T$ and a mapping assigning to each node $t\in V(T)$ a set $X_t\subseteq V(G)$ (called \emph{bag}) such that 
    \begin{enumerate}
    \item $\bigcup_{t \in V(T)} X_t = V(G)$,
    \item for every edge $uv \in E(G)$, there is a node $t \in V(T)$ such that $u,v \in X_t$,
    \item for every vertex $v\in V(G)$, the nodes of bags containing $v$ form a subtree of $T$.
    \end{enumerate}
    The \emph{width} of a tree decomposition is the maximum size of a bag minus~1. The \emph{treewidth} of a graph $G$ is the minimal width of a tree decomposition of $G$.
\end{definition}

The terms \emph{path decomposition} and \emph{pathwidth} are defined accordingly, where the tree $T$ is replaced by a path.

\subparagraph{Parameterized Complexity}

We give a short introduction to parameterized algorithms and complexity. For further information, we refer to the textbooks of Cygan~et~al.~\cite{cygan2015param} as well as Downey and Fellows~\cite{downey2013fundamentals}.

A \emph{parameterized problem} is a language $L \subseteq \Sigma^* \times \N$, where $\Sigma$ is a fixed, finite alphabet. For an instance $(x,k) \in \Sigma^* \times \N$, we say that $k$ is the \emph{parameter}. A parameterized problem $L \subseteq \Sigma^* \times \N$ is \emph{fixed-parameter tractable} (\FPT{}) if there is an algorithm $\A$ and a computable function $f$ such that for every instance $(x,k) \in \Sigma^* \times \N$, algorithm $\A$ correctly decides whether $(x,k) \in L$ in time bounded by $f(k) \cdot |(x,k)|^{\O(1)}$. An \emph{\FPT{} reduction} from one parameterized problem $L_1$ to another parameterized problem $L_2$ on the same alphabet $\Sigma$ is a mapping $R : \Sigma \times \N \to \Sigma \times \N$ such that for all $(x,k) \in \Sigma^* \times \N$ with $(x',k') = R((x,k))$ it holds: (1)~$(x,k) \in L_1$ if and only if $(x',k') \in L_2$,~(2) $k' \leq g(k)$ for some computable function $g$, and~(3) $(x',k')$ can be computed in $f(k) \cdot |(x,k)|^{\O(1)}$ time for some computable function $f$. We will also consider \emph{polynomial-time \FPT{} reductions}, where in (3) the term $f(k)$ is replaced by a constant. We say that two parameterized problems $L_1$ and $L_2$ are \emph{\FPT-equivalent} if there is an \FPT{} reduction from $L_1$ to $L_2$ and vice versa.

A parameterized problem $L \subseteq \Sigma^* \times \N$ is \emph{slice-wise polynomial} (\XP{}) if there is an algorithm~$\A$ and a computable function $f$ such that for every instance $(x,k) \in \Sigma^* \times \N$, algorithm~$\A$ correctly decides whether $(x,k) \in L$ in time bounded by $f(k) \cdot |(x,k)|^{f(k)}$. So in difference to \FPT{} we are allowed to have an exponent on the input size that is not constant but depends on the parameter $k$.

To define the \emph{$\W$-hierarchy}, we sketch the definition of \emph{Boolean circuits}. Such a Boolean circuit can be seen as a directed acyclic graph whose vertices are either AND gates or OR gates which transform the values of their (possibly negated) inputs into their conjunction or disjunction, respectively. We distinguish \emph{small gates} and \emph{large gates}, where small gates have at most four inputs and large gates have more than four inputs. Note that the value four could be replaced by any other fixed value larger than one, but four works best for our purposes. The \emph{depth} of a circuit is the maximal number of gates on a path from an input variable to the output. The \emph{weft} of a circuit is the maximal number of \emph{large} gates on such a path. We define $\C_{t,d}$ to be the class of circuits of weft at most $t$ and depth at most $d$. For some class~$\C$ of circuits, the problem WCS$[\C]$ asks whether a given circuit has a satisfying assignment of \emph{weight}~$k$, i.e., with exactly $k$ variables receiving value $\true$. A parameterized problem $L$ is in the class \WT{} for some $t \in \N$ if there is an \FPT{} reduction from $L$ to WCS[$\C_{t,d}$] for some fixed $d \geq 1$. A parameterized problem $L$ is \emph{\WT-hard} if for all problems $L' \in \WT$, there is an \FPT{} reduction from $L'$ to $L$. Since the relation of having an \FPT{} reduction is transitive, it is sufficient to reduce one \WT-hard problem to $L$. If a \WT-hard problem $L$ is also in $\WT$, then $L$ is \emph{\WT-complete}.

We will use the following parameterized problem in our reduction that was shown to be $\WOne$-hard by Pietrzak~\cite{pietrzak2003parameterized} and independently by Fellows et al.~\cite{fellows2009param}.

\begin{center}
\fbox{\parbox{0.98\textwidth}{\noindent
{\MCP}\\[.8ex]
\begin{tabular*}{.93\textwidth}{rl}
{\em Input:} & A graph $G$, $k \in \N$, a partition of $V(G)$ into $k$ independent sets $V^1, \dots, V^k$. \\
{\em Parameter:} & $k$\\
{\em Question:} & Is there a clique of size $k$ in $G$?
\end{tabular*}
}}
\end{center}

The Exponential Time Hypothesis, formulated by Impagliazzo et al.~\cite{impagliazzo2001which}, states that there is an $\epsilon > 0$ such that 3-SAT cannot be solved in $2^{\epsilon n}$ time for formulas with $n$ variables. Assuming that this hypothesis holds, one can give lower bounds on the running time of algorithms for \MCP{}.

\begin{theorem}[Cygan et al.~\cite{cygan2015param}, Lokshtanov et al.~\cite{lokshtanov2011lower}]\label{thm:lower}
Assuming the Exponential Time Hypothesis, there is no $f(k) n^{o(k)}$ time algorithm for \MCP{} for any computable function~$f$, where $n$ is the number of vertices of the graph.
\end{theorem}

\section{Grundy Domination Problems}\label{sec:grundy}

\subsection{Definitions}

We start by defining the four types of vertex sequences considered in this section.

\begin{definition}[Bre\v{s}ar et al.~\cite{bresar2017grundy,bresar2014dominating,bresar2016total}]
    Let $G$ be a graph. Let $\sigma = (v_1, \dots, v_k)$ be a sequence of pairwise different vertices of $G$. If for each $i \in [k]$
    \begin{itemize}
        \item $N[v_i] \setminus \bigcup_{j=1}^{i-1} N[v_j]$ is not empty, then $\sigma$ is a \emph{dominating sequence} of $G$;
        \item $N(v_i) \setminus \bigcup_{j=1}^{i-1} N(v_j)$ is not empty, then $\sigma$ is a \emph{total dominating sequence} of $G$;
        \item $N(v_i) \setminus \bigcup_{j=1}^{i-1} N[v_j]$ is not empty, then $\sigma$ is a \emph{$Z$-sequence} of $G$;
        \item $N[v_i] \setminus \bigcup_{j=1}^{i-1} N(v_j)$ is not empty, then $\sigma$ is an \emph{$L$-sequence} of $G$.
    \end{itemize}
\end{definition}

For every of those sequences, we say that $v_i$ \emph{footprints} the vertices in $N\langle v_i\rangle_1 \setminus \bigcup_{j=1}^{i-1} N\langle v_j\rangle_2$, where $\langle \cdot \rangle_1$ and $\langle \cdot \rangle_2$ are replaced by the respective brackets. Following the definition given by Haynes and Hedetniemi~\cite{haynes2021vertex}, we say a sequence is \emph{connected} if its vertices induce a connected subgraph of $G$.
We will consider the following parameterized problem.
\begin{center}
\fbox{\parbox{0.98\textwidth}{\noindent
{\GN}\\[.8ex]
\begin{tabular*}{.93\textwidth}{rl}
{\em Input:} & A graph $G$, $k \in \N$.\\
{\em Parameter:} & $k$ \\
{\em Question:} & Is there a dominating sequence of $G$ of length at least $k$?
\end{tabular*}
}}
\end{center}

The problems \TN{}, \ZN{}, and \LN{} are defined analogously. For all four problems, we will also consider the connected variant, where we ask for a connected sequence of length at least $k$. 

Note that isolated vertices are rather uninteresting for the Grundy domination problems as they either are part of every maximal sequence (in case of dominating and $L$-sequences) or of no sequence (in case of total dominating and $Z$-sequences or some connected variant). Hence, we assume in the following that no instance contains isolated vertices.

\begin{assumption}\label{assumption}
    All instances considered here do not contain isolated vertices.
\end{assumption}

For a hypergraph $\H = (X, \E)$, we say that a sequence $(C_1, \dots, C_k)$ with $C_i \in \E$ for all $i \in [k]$ is a \emph{covering sequence} if for each $i \in [k]$ the set $C_i \setminus \bigcup_{j=1}^{i-1} C_j$ is not empty. This motivates the following parameterized problem that was introduced in \cite{bresar2014dominating}.

\begin{center}
\fbox{\parbox{0.98\textwidth}{\noindent
{\HYPER{}}\\[.8ex]
\begin{tabular*}{.93\textwidth}{rl}
{\em Input:} & A hypergraph $\H = (X,\E)$, $k \in \N$.\\
{\em Parameter:} & $k$\\
{\em Question:} & Is there a covering sequence of $\H$ of length at least $k$?
\end{tabular*}
}}
\end{center}

\subsection{An Auxiliary Problem: One-Sided Grundy Total Domination}

Instead of proving the \WOne-hardness of the four variants directly, we first prove the \WOne-hardness of the following auxiliary problem.

\begin{center}
\fbox{\parbox{0.98\textwidth}{\noindent
{\BIP}\\[.8ex]
\begin{tabular*}{.95\textwidth}{rl}
{\em Input:} & A bipartite graph $G$ with bipartition $V(G) =A \dot \cup B$, $k \in \N$.\\
{\em Parameter:} & $k$\\
{\em Question:} & Is there a total dominating sequence of length $k$ containing only vertices of $A$?
\end{tabular*}
}}
\end{center}

We will say that such a sequence is a total dominating sequence of $A$. Note that -- in difference to what the name suggests -- the vertices in $A$ will not necessarily be dominated by the sequence.
The rest of the subsection is dedicated to the proof of the following theorem.

\begin{theorem}\label{thm:bip}
   \BIP{} is \WOne-hard when it is parameterized by the solution size.
\end{theorem}

\newcommand{\sel}{\alpha}
\newcommand{\ver}{\beta}

\begin{figure}
    \centering
    \begin{tikzpicture}[scale=0.9125]
\scriptsize
\begin{scope}[xshift=-0.25cm]
\begin{scope}[scale=0.7]
    \node[svertex, label={[name=li11] 90:$x^1_1(1)$}] (xi11) at (0,0) {};
    \node at (0.75,0) {$\dots$};
    \node[svertex, label={[name=li1a] 90:$x^1_1(\sel)$}] (xi1a) at (1.5,0) {};
\end{scope}

   \node[label={[name=li1] 180:$X^1_1$}, rectangle, rounded corners, draw=lipicsBulletGray, inner sep=2pt, fit={(xi11) (xi1a) (li11) (li1a)}] (Xi1) {};
    \node at (1.9,0.25) {\dots};

\begin{scope}[xshift=2.75cm,scale=0.7]
    \node[svertex, label={[name=liq1] 90:$x^1_q(1)$}] (xiq1) at (0,0) {};
    \node at (0.75,0) {$\dots$};
    \node[svertex, label={[name=liqa] 90:$x^1_q(\sel)$}] (xiqa) at (1.5,0) {};
\end{scope}

    \node[label={[name=liq] 0:$X^1_q$}, rectangle, rounded corners, draw=lipicsBulletGray, inner sep=2pt, fit={(xiq1) (xiqa) (liq1) (liqa)}] (Xiq) {};

\begin{scope}[xshift=1.375cm,yshift=-0.75cm,scale=0.7]
    \node[svertex, label={[name=lyi1] 180:$y^1(1)$}] (yi1) at (0,0) {};
    \node at (0.75,0) {$\dots$};
    \node[svertex, label={[name=lyia] 0:$y^1(\sel)$}] (yia) at (1.5,0) {};
\end{scope}

\draw (xi11) -- (yi1);
\draw (xi1a) -- (yia);

\draw (xiq1) -- (yi1);
\draw (xiqa) -- (yia);

\node[label={170:$\S^1$}, rectangle, rounded corners, draw=lipicsBulletGray, inner sep=2pt, fit={(Xi1) (Xiq) (li1) (liq) (li1a) (lyi1) (lyia)}] (Si) {};
\end{scope}

\node at (5.3,-0.2) {\dots};

\begin{scope}[xshift=7cm]
\begin{scope}[scale=0.7]
    \node[svertex, label={[name=li11] 90:$x^k_1(1)$}] (xi11) at (0,0) {};
    \node at (0.75,0) {$\dots$};
    \node[svertex, label={[name=li1a] 90:$x^k_1(\sel)$}] (xi1a) at (1.5,0) {};
\end{scope}

   \node[label={[name=li1] 180:$X^k_1$}, rectangle, rounded corners, draw=lipicsBulletGray, inner sep=2pt, fit={(xi11) (xi1a) (li11) (li1a)}] (Xi1) {};
    \node at (1.9,0.25) {\dots};

\begin{scope}[xshift=2.75cm,scale=0.7]
    \node[svertex, label={[name=liq1] 90:$x^k_q(1)$}] (xiq1) at (0,0) {};
    \node at (0.75,0) {$\dots$};
    \node[svertex, label={[name=liqa] 90:$x^k_q(\sel)$}] (xiqa) at (1.5,0) {};
\end{scope}

    \node[label={[name=liq] 0:$X^k_q$}, rectangle, rounded corners, draw=lipicsBulletGray, inner sep=2pt, fit={(xiq1) (xiqa) (liq1) (liqa)}] (Xiq) {};

\begin{scope}[xshift=1.375cm,yshift=-0.75cm,scale=0.7]
    \node[svertex, label={[name=lyi1] 180:$y^k(1)$}] (yi1) at (0,0) {};
    \node at (0.75,0) {$\dots$};
    \node[svertex, label={[name=lyia] 0:$y^k(\sel)$}] (yia) at (1.5,0) {};
\end{scope}

\draw (xi11) -- (yi1);
\draw (xi1a) -- (yia);

\draw (xiq1) -- (yi1);
\draw (xiqa) -- (yia);

\node[label={10:$\S^k$}, rectangle, rounded corners, draw=lipicsBulletGray, inner sep=2pt, fit={(Xi1) (Xiq) (li1) (liq) (li1a) (lyi1) (lyia)}] (Si) {};
\end{scope}
\end{tikzpicture}
    \caption{The selection gadgets of the proof of \cref{thm:bip}.}
    \label{fig:sel}
\end{figure}

We reduce from \MCP{}. Let $G$ be the input graph and $V^1, \dots, V^k$ be the partition of $V(G)$ into independent sets. W.l.o.g.~we may assume that $k \geq 2$ and that all color classes are of the same size, i.e., we assume that $V^i = \{v^i_1, \dots, v^i_q\}$ for $i \in [k]$ and some fixed value $q \in \N$. To simplify the notation, we define $\sel := \ver := 2k + 1$. We build the graph $G'$ as follows:

\begin{description}
    \item[Selection Gadgets] For every $i \in [k]$, we have a selection gadget $\S^i$ that contains for every $p \in [q]$ the vertex set $X^i_p := \{x^i_p(a) \mid a \in [\alpha]\}$. We call these vertices \emph{selection vertices}. We define $X^i := \bigcup_{p=1}^{q} X^i_p$. Furthermore, $\S^i$ contains the vertices $y^i(1), \dots, y^i(\sel)$. For every $a \in [\sel]$ and every $p \in [q]$, the vertex $x^i_p(a)$ is adjacent to $y^i(a)$. We define $Y := \{y^i(a) \mid i \in [k], a \in [\alpha]\}$ (see \cref{fig:sel} for an illustration).
    \item[Verification Gadgets] For every two-element set $\{i,j\} \subseteq [k]$, we have $\ver$ verification gadgets $\C^{ij}(1), \dots, \C^{ij}(\ver)$. Such a gadget $\C^{ij}(b)$ contains for every edge $v^i_pv^j_r \in E(G)$ an \emph{edge vertex} $w^{ij}_{pr}(b)$. Furthermore, it contains a \emph{verification vertex} $c^{ij}(b)$. The vertex $c^{ij}(b)$ is adjacent to all edge vertices $w^{ij}_{pr}(b)$. Note that we can write $\C^{ij}(b)$ and $\C^{ji}(b)$ interchangeably. Similarly, the vertices $c^{ij}(b)$ and $c^{ji}(b)$ as well as the vertices $w^{ij}_{pr}$ and $w^{ji}_{rp}$ are identical (see \cref{fig:ver} for an illustration).
    \item[Blocker Gadget] The gadget contains the two adjacent vertices $f$ and $g$.
\end{description}

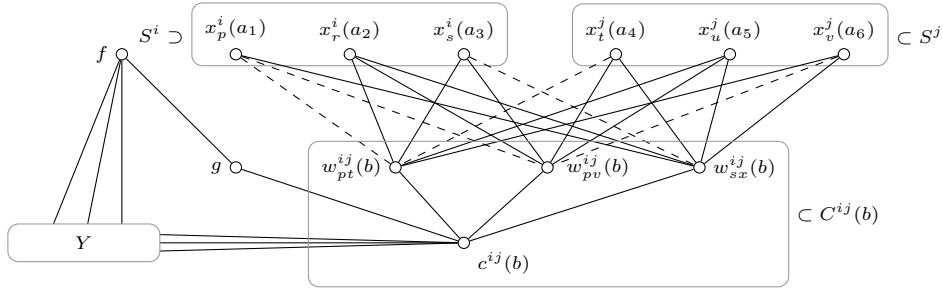
\begin{figure}
    \centering
    \begin{tikzpicture}
\scriptsize
    \begin{scope}[xshift=-1.9cm]
        \node[svertex, label={[name=lw1] -180:$w^{ij}_{pt}(b)$}] (w1) at (0,0) {};
        \node[svertex, label={[name=lw2, label distance=2pt] -2:$w^{ij}_{pv}(b)$}] (w2) at (2,0) {};
        \node[svertex, label={[name=lw3] -0:$w^{ij}_{sx}(b)$}] (w3) at (4,0) {};
    \end{scope}

    \begin{scope}[xshift=-4cm, yshift=1.5cm]
        \node[svertex, label={[name=lxi1] -270:$x^{i}_{p}(a_1)$}] (xi1) at (0,0) {};
        \node[svertex, label={[name=lxi2] -270:$x^{i}_{r}(a_2)$}] (xi2) at (1.5,0) {};
        \node[svertex, label={[name=lxi3] -270:$x^{i}_{s}(a_3)$}] (xi3) at (3,0) {};

        \node[label={[name=li1] 180:$\S^i \supset$}, rectangle, rounded corners, draw=lipicsBulletGray, inner sep=2pt, fit={(xi1) (xi3) (lxi1) (lxi3)}] (Xi1) {};
    \end{scope}

    \begin{scope}[xshift=1cm, yshift=1.5cm]
        \node[svertex, label={[name=lxj1] -270:$x^{j}_{t}(a_4)$}] (xj1) at (0,0) {};
        \node[svertex, label={[name=lxj2] -270:$x^{j}_{u}(a_5)$}] (xj2) at (1.5,0) {};
        \node[svertex, label={[name=lxj3] -270:$x^{j}_{v}(a_6)$}] (xj3) at (3,0) {};

        \node[label={[name=lj1] 0:$\subset \S^j$}, rectangle, rounded corners, draw=lipicsBulletGray, inner sep=2pt, fit={(xj1) (xj3) (lxj1) (lxj3)}] (Xj1) {};
    \end{scope}

    \draw[draw=black,dashed] (xi1) -- (w1);
    \draw[draw=black,dashed] (xi1) -- (w2);
    \draw[draw=black] (xi1) -- (w3);
    \draw[draw=black] (xi2) -- (w1);
    \draw[draw=black] (xi2) -- (w2);
    \draw[draw=black] (xi2) -- (w3);
    \draw[draw=black] (xi3) -- (w1);
    \draw[draw=black] (xi3) -- (w2);
    \draw[draw=black,dashed] (xi3) -- (w3);

    \draw[draw=black,dashed] (xj1) -- (w1);
    \draw[draw=black] (xj1) -- (w2);
    \draw[draw=black] (xj1) -- (w3);
    \draw[draw=black] (xj2) -- (w1);
    \draw[draw=black] (xj2) -- (w2);
    \draw[draw=black] (xj2) -- (w3);
    \draw[draw=black] (xj3) -- (w1);
    \draw[draw=black,dashed] (xj3) -- (w2);
    \draw[draw=black] (xj3) -- (w3);

    \node[svertex, label={[name=lcij] -20:$c^{ij}(b)$}] (cij) at (-1,-1) {};

    \node[svertex, label={[name=lf] 180:$f$}] (f) at (-5.5,1.5) {};
    \node[svertex, label={[name=lg] 180:$g$}] (g) at (-4,0) {};

    \begin{scope}[xshift=-6.5cm, yshift=-1cm]

        \node (Y1) at (0,0.15) {};
        \node (Y2) at (0,0) {};
        \node (Y3) at (0,-0.15) {};

        \draw[draw=black] (cij) -- (Y1);
        \draw[draw=black] (cij) -- (Y2);
        \draw[draw=black] (cij) -- (Y3);

        \node (Y1) at (1,0) {};
        \node (Y2) at (0.5,0) {};
        \node (Y3) at (0,0) {};
        
        \draw[draw=black] (f) -- (g);
        \draw[draw=black] (f) -- (Y1);
        \draw[draw=black] (f) -- (Y2);
        \draw[draw=black] (f) -- (Y3);
        
        \draw[draw=lipicsBulletGray, rounded corners, fill=white] (-0.5,0.25) rectangle (1.5,-0.25);

        \node (Y4) at (0.5,0) {$Y$};
    \end{scope}

    \draw[draw=black] (cij) -- (w1);
    \draw[draw=black] (cij) -- (w2);
    \draw[draw=black] (cij) -- (w3);
    \draw[draw=black] (cij) -- (g);

    \node[label={[name=li1] 0:$\subset \C^{ij} (b)$}, rectangle, rounded corners, draw=lipicsBulletGray, inner sep=2pt, fit={(cij) (w1) (w2) (w3) (lw1) (lw2) (lw3) (lcij)}] (Cij) {};
\end{tikzpicture}
    \caption{One verification gadget of the proof of \cref{thm:bip} and its connection to some of the other vertices. We assume that the values $p$, $r$, $s$, $t$, $u$, $v$, and $x$ are pairwise different. Non-existing edges between selection vertices and edge vertices are drawn as dashed lines. Multiple edges to $Y$ imply that the respective vertex is adjacent to all elements of $Y$. Note that the edges between $Y$ and the selection vertices are missing as they are depicted in \cref{fig:sel}.}
    \label{fig:ver}
\end{figure}

Besides the edges within the gadgets, there are also edges between vertices of different gadgets (see \cref{fig:ver}).

\begin{enumerate}[(E1)]
    \item Vertex $f$ is adjacent to all vertices in $Y$.
    \item Every verification vertex is adjacent to $g$ as well as to all vertices in $Y$.
    \item Selection vertex $x^i_p(a)$ is adjacent to edge vertex $w^{ij}_{rs}(b)$ if and only if $r \neq p$.\label{item:bfs-w1-edge1}
\end{enumerate}

The graph $G'$ is a bipartite graph with the bipartition consisting of $A = \bigcup_{i=1}^{k}X^i \cup \bigcup_{i,j \in[k], i\neq j} \bigcup_{b=1}^\ver c^{ij}(b) \cup \{f\}$ and $B = V(G') \setminus A$.

In the following, we will prove that there is a multicolored clique in $G$ of size $k$ if and only if there is a total dominating sequence of $A$ with length $\Gamma = \sel \cdot k + \ver \cdot \binom{k}{2} + 1$. 
The idea of the construction is the following. The selection vertices in the total dominating sequence represent a choice of one vertex per color class of $G$ that should form a multicolored clique. This choice is done at the beginning during the \emph{selection phase}. This selection phase is finalized by adding the vertex $f$ to the the sequence. Then the \emph{verification phase} starts where the goal is to visit all the verification vertices. We will prove that this is possible if and only if our selection phase has chosen the representatives of a multicolored clique.

In the first step of the proof of correctness of our construction we show that a multicolored clique in $G$ implies a total dominating sequence of $A$ of length~$\Gamma$.

\begin{lemma}\label{lemma:w1-hin}
    If there is a multicolored clique $\{v^1_{p_1}, \dots, v^k_{p_k}\}$ in $G$, then there is a total dominating sequence of $A$ of length $\Gamma$.
\end{lemma}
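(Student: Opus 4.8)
The plan is to produce an explicit total dominating sequence of $A$ of length exactly $\Gamma = \sel \cdot k + \ver \cdot \binom{k}{2} + 1$, assembled from three consecutive blocks that footprint, in turn, the whole set $Y$, the single vertex $g$, and the ``clique'' edge vertices $w^{ij}_{p_ip_j}(b)$. I would show for each appended vertex that its open neighborhood contains a vertex not yet in the open neighborhood of any earlier vertex, and then simply add up the three block sizes.

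\emph{Selection block.} For every $i\in[k]$ and every $a\in[\sel]$ I would append the selection vertex $x^i_{p_i}(a)$, i.e.\ the copies indexed by the chosen clique vertex $v^i_{p_i}$. Inside $S^i$ the vertex $y^i(a)$ is adjacent exactly to $x^i_p(a)$ over $p\in[q]$; since colour class $i$ only ever contributes the index $p_i$ in this block, the vertex $x^i_{p_i}(a)$ footprints $y^i(a)$, and no two of these steps compete (distinct $(i,a)$ give distinct, non-adjacent targets). Hence these $\sel\cdot k$ vertices footprint all of $Y$, each contributing a fresh vertex, and their internal order is immaterial.

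\emph{Blocker and verification blocks.} Next I append the blocker vertex $f$, which lies in $A$ (it is adjacent to $Y\subseteq B$ and to $g$, so the bipartition places it opposite $g$). Its open neighborhood is $\{g\}\cup Y$; all of $Y$ is already dominated by the selection block, while $g$ is adjacent only to $f$ and to the verification vertices, none of the latter used yet, so $f$ footprints $g$ and supplies the ``$+1$''. Finally, for every pair $\{i,j\}$ and every $b\in[\ver]$ I append $c^{ij}(b)$. The decisive point is that the clique guarantees the edge $v^i_{p_i}v^j_{p_j}\in E(G)$, so the edge vertex $w^{ij}_{p_ip_j}(b)$ exists; by rule (E3) a selection vertex $x^i_p(a)$ is adjacent to it only when $p\neq p_i$, and since only the index $p_i$ (resp.\ $p_j$) was used in class $i$ (resp.\ $j$), no selection vertex of the sequence dominates $w^{ij}_{p_ip_j}(b)$, nor does $f$, and among verification vertices only $c^{ij}(b)$ does. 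Thus each $c^{ij}(b)$ footprints its own $w^{ij}_{p_ip_j}(b)$, yielding $\ver\cdot\binom{k}{2}$ pairwise distinct fresh footprints; because $g$ was already taken by $f$, no verification step is wasted on it. Summing the blocks gives $\sel\cdot k + 1 + \ver\cdot\binom{k}{2}=\Gamma$, as required.

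I expect the main subtlety to be the ordering that legitimizes the extra $+1$. Every verification vertex is adjacent to all of $Y$ and to $g$, so a single such vertex added early would collapse all of those footprints into one step and cap the length at $\Gamma-1$; the argument must therefore footprint $Y$ with the selection block first, then seize $g$ with $f$ before any verification vertex appears. The conceptual heart, which ties the combinatorics back to the clique, is the consistent index choice: picking $p_i$ in colour class $i$ makes precisely the clique edge vertices $w^{ij}_{p_ip_j}(b)$ invisible to the selection block via (E3), leaving exactly one untouched target for each verification vertex and exactly one free vertex ($g$) for $f$.
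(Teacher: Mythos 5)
Your proposal is correct and follows essentially the same construction as the paper's proof: the selection block $x^i_{p_i}(1),\dots,x^i_{p_i}(\sel)$ footprinting $Y$, then $f$ footprinting $g$, then the verification vertices each footprinting their clique edge vertex $w^{ij}_{p_ip_j}(b)$ via (E3). Your added remarks on why $f$ must precede the verification vertices and on $f$'s placement in the bipartition are sensible elaborations of the same argument.
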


\begin{proof}
    We build the sequence $\sigma$ of $A$ as follows. For every $i \in [k]$, we take the vertices $x^i_{p_i}(1), \dots, x^i_{p_i}(\sel)$ in that order. Next we take vertex $f$. Now we traverse the verification vertices in arbitrary order. It is obvious that $\sigma$ has length $\Gamma = \sel \cdot k + \ver \cdot \binom{k}{2} + 1$. It remains to show that all elements of $\sigma$ footprint some vertex of $B$.
    \begin{itemize}
        \item For every $a \in [\sel]$, the vertex $x^i_{p_i}(a)$ footprints vertex $y^i(a)$.
        \item Vertex $f$ footprints vertex $g$.
        \item Consider the vertex $c^{ij}(b)$. Since $v^i_{p_i}$ and $v^j_{p_j}$ are adjacent in $G$, there is a vertex $w^{ij}_{p_ip_j}(b)$ in $G'$ and this vertex is neither adjacent to $x^i_{p_i}(a)$ nor to $x^j_{p_j}(a)$ for all $a \in [\alpha]$, due to (E\ref{item:bfs-w1-edge1}). Therefore, $c^{ij}(b)$ footprints $w^{ij}_{p_ip_j}(b)$.
    \end{itemize}
    Hence, $\sigma$ is a total dominating sequence of $A$.
\end{proof}

It remains to show that a total dominating sequence of $A$ of length $\Gamma$ implies the existence of a multicolored clique of size $k$ in $G$. We will prove this using several lemmas. We assume in the following that $\sigma$ is a total dominating sequence of $A$ of length at least~$\Gamma$.

\begin{lemma}\label{lemma:bfs-w1-footprint}
    For every $i \in [k]$, there are at most two vertices of $X^i$ in $\sigma$ that footprint edge vertices.
\end{lemma}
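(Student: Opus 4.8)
The plan is to argue by contradiction, exploiting the fact that the adjacency rule~(E\ref{item:bfs-w1-edge1}) makes the footprinting of edge vertices by selection vertices very restrictive. First I would unpack what it means for a selection vertex $x^i_p(a)$ to footprint an edge vertex in the total dominating sequence $\sigma$: since $\sigma$ is a total dominating sequence, it must footprint a vertex of its \emph{open} neighborhood that lies in the open neighborhood of no earlier vertex. Throughout, by the \emph{index} of a selection vertex $x^i_{p}(a)$ I mean its color-$i$ subscript $p$. If $x^i_p(a)$ footprints the edge vertex $w^{ij}_{rs}(b)$, then on the one hand $x^i_p(a)$ must be adjacent to it, which by~(E\ref{item:bfs-w1-edge1}) forces $p \neq r$; on the other hand $w^{ij}_{rs}(b)$ must not lie in the open neighborhood of any earlier vertex of $\sigma$. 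The key consequence I would extract is a constraint on the earlier vertices of $X^i$: by~(E\ref{item:bfs-w1-edge1}) a selection vertex $x^i_{p'}(a')$ is adjacent to $w^{ij}_{rs}(b)$ exactly when $p' \neq r$, so for $w^{ij}_{rs}(b)$ to remain undominated, \emph{every} vertex of $X^i$ occurring strictly before $x^i_p(a)$ in $\sigma$ must have index exactly $r$. This holds uniformly over all $j$, $s$, and $b$, since~(E\ref{item:bfs-w1-edge1}) governs the adjacencies to edge vertices in every verification gadget.

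With this observation in hand, I would suppose towards a contradiction that three pairwise distinct vertices $u_1, u_2, u_3 \in X^i$ each footprint some edge vertex, ordered as they occur in $\sigma$; writing $u_\ell = x^i_{p_\ell}(a_\ell)$ and letting $r_\ell$ denote the color-$i$ index of the edge vertex that $u_\ell$ footprints, the observation applied to $u_3$ shows that all earlier vertices of $X^i$ — in particular $u_1$ and $u_2$ — carry index $r_3$, so $p_1 = p_2 = r_3$. Applying it to $u_2$ gives $p_1 = r_2$, whence $r_2 = r_3$. But $u_2$ is itself adjacent to the edge vertex it footprints, so $p_2 \neq r_2$, contradicting $p_2 = r_3 = r_2$. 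Hence no three vertices of $X^i$ can footprint edge vertices, which is exactly the claim.

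The argument is essentially bookkeeping, so I do not expect a genuine obstacle; the one point that requires care is reading~(E\ref{item:bfs-w1-edge1}) in the correct direction — the vertex $x^i_p(a)$ is adjacent to an edge vertex precisely when their shared color-$i$ coordinates \emph{differ} — and correctly using that, in the open-neighborhood (total) setting, the ``still undominated'' requirement is what pins all earlier $X^i$-vertices to a single common index. I would also note explicitly that the vertices of $\sigma$ are pairwise distinct and linearly ordered, so that speaking of $u_1, u_2, u_3$ ``in this order'' is well defined.
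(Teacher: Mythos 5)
Your proof is correct and rests on the same mechanism as the paper's: reading rule~(E3) to see that an edge vertex with color-$i$ index $r$ stays undominated only if every earlier vertex of $X^i$ has index exactly $r$, while the footprinting vertex itself must have index $\neq r$. The paper packages this as two separate claims (at most one footprinter per class $X^i_p$, and no footprinting after two distinct indices have appeared), whereas you derive the same contradiction by index-chasing across three hypothetical footprinters, but the argument is essentially identical.
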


\begin{proof}
    For each $a, a' \in [\sel]$, the vertices $x^i_p(a)$ and $x^i_p(a')$ have the same neighborhood in the set of edge vertices. Thus, there can be at most one $a \in [\sel]$ such that $x^i_p(a)$ footprints some edge vertex.

    Let $x^i_p(a_p)$, $x^i_r(a_r)$ and $x^i_s(a_s)$ be three vertices of $X^i$ such that $p \neq r$, $x^i_p(a_p)$ is to the left of $x^i_r(a_r)$, and $x^i_r(a_r)$ is to the left of $x^i_s(a_s)$. We claim that $x^i_s(a_s)$ does not footprint an edge vertex. Let $w^{ij}_{tu}(b)$ be some edge vertex. If $t \neq p$, then $w^{ij}_{tu}(b)$ is adjacent to $x^i_p(a_p)$. If $t = p$, then $w^{ij}_{tu}(b)$ is adjacent to $x^i_r(a_r)$. Therefore, $x^i_s(a)$ does not footprint any vertex~$w^{ij}_{tu}(b)$.
\end{proof}

This implies the following upper bound on the length of $\sigma$.

\begin{lemma}\label{corol:bfs-w1-maxium}
    The sequence $\sigma$ contains at most
    \begin{itemize}
        \item $\alpha + 2$ vertices of $X^i$ for all $i \in [k]$
        \item $\beta$ vertices of $\C^{ij}$ for all $\{i,j\} \subseteq [k]$.
    \end{itemize}
    Therefore, the length of $\sigma$ is at most $(\sel + 2) \cdot k + \ver \cdot \binom{k}{2} + 1 = \Gamma + 2k$.
\end{lemma}

\begin{proof}
    Due to \cref{lemma:bfs-w1-footprint}, all but two vertices of $X^i$ must footprint some vertex $y^i(a)$. As there are only $\sel$ vertices $y^i(a)$, there can be at most $\sel + 2$ vertices of $X^i$. For every $\{i,j\} \subseteq [k]$, it holds that $|\C^{ij} \cap A| = \ver$. Thus, the bound on the verification vertices is trivial. Summing up, there are at most $(\sel + 2) \cdot k$ selection vertices and at most $\ver \cdot \binom{k}{2}$ verification vertices in $\sigma$. The only remaining addional vertex is $f$. Thus, the given bound holds.
\end{proof}

Furthermore, for every $i,j \in [k]$ with $i\neq j$ there is at least one verification vertex in $\sigma$.

\begin{lemma}\label{lemma:bfs-w1-ver}
    For all $i,j \in [k]$ with $i \neq j$, there is some $b \in [\ver]$ such that $c^{ij}(b)$ is in~$\sigma$.
\end{lemma}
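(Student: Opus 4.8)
The plan is to show that every one of the $\binom{k}{2}$ pairs $\{i,j\}$ must contribute at least one verification vertex $c^{ij}(b)$ to $\sigma$, by a counting argument that plays the upper bound of \cref{corol:bfs-w1-maxium} against the lower bound forced by the assumption $|\sigma| \geq \Gamma = \sel \cdot k + \ver \cdot \binom{k}{2} + 1$. The key observation is that the vertices of $A$ split into exactly two kinds: the selection vertices (of which \cref{corol:bfs-w1-maxium} shows at most $(\sel+2)\cdot k$ can appear in $\sigma$) and the verification vertices $c^{ij}(b)$ (together with the single extra vertex counted in the ``$+1$''). So the total length of $\sigma$ is bounded by the number of selection vertices used plus the number of verification vertices used.

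First I would argue by contradiction: suppose there is a pair $\{i,j\}$ such that \emph{no} vertex $c^{ij}(b)$ appears in $\sigma$ for any $b \in [\ver]$. Then $\sigma$ can contain verification vertices only from the remaining $\binom{k}{2}-1$ pairs, and for each such pair there are at most $\ver$ of them; hence at most $\ver \cdot \left(\binom{k}{2}-1\right)$ verification vertices occur in $\sigma$. Combining this with the selection-vertex bound of \cref{corol:bfs-w1-maxium} (at most $(\sel+2)\cdot k$ selection vertices) and accounting for the vertices of $A$ that are neither, I would obtain
\[
|\sigma| \;\leq\; (\sel + 2)\cdot k + \ver\cdot\Bigl(\tbinom{k}{2}-1\Bigr) + 1 .
\]
Recalling that $\sel = \ver = 2k+1$, the right-hand side equals $\Gamma + 2k - \ver = \Gamma + 2k - (2k+1) = \Gamma - 1 < \Gamma$, contradicting $|\sigma| \geq \Gamma$. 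This is exactly where the choice $\ver = 2k+1 > 2k$ is used: the ``budget'' of $2k$ extra selection vertices allowed by \cref{corol:bfs-w1-maxium} is strictly smaller than the cost $\ver$ of dropping a whole verification class, so losing even one class pushes the length below the threshold.

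The main obstacle I anticipate is bookkeeping the membership of each vertex type in $A$ versus $B$ and making sure the count of ``vertices of $A$ that are not selection vertices'' is handled consistently, since the blocker vertex $f$ lies in $B$ and only the verification vertices $c^{ij}(b)$ (besides the selection vertices) live in $A$. Concretely, $A$ consists precisely of the selection vertices $\bigcup_i X^i$ together with all verification vertices, so the ``$+1$'' in $\Gamma$ does not correspond to a vertex of $A$ at all but rather to the fact that $f \in B$ can footprint $g$; I would need to recheck the arithmetic of \cref{corol:bfs-w1-maxium} against this to be certain the contradiction comes out to exactly $\Gamma - 1$. Once the accounting is pinned down, the lemma follows immediately from the strict inequality, so the argument itself is short — the care is entirely in the constants.
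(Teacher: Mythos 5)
Your proof is correct and is essentially the paper's own argument: assume some pair $\{i,j\}$ contributes no verification vertex, subtract the $\ver = 2k+1$ vertices of that class from the upper bound $\Gamma + 2k$ of \cref{corol:bfs-w1-maxium}, and conclude $|\sigma| \leq \Gamma - 1 < \Gamma$. Your side worry about the bookkeeping for $f$ is moot: $f$ belongs to $A$ (the displayed definition of $A$ omits it, apparently a typo, since $f$ appears in the sequence of \cref{lemma:w1-hin} and is exactly the ``$+1$'' non-selection vertex counted in \cref{corol:bfs-w1-maxium}), and your arithmetic already handles this consistently.
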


\begin{proof}
    Assume for contradiction that this is not the case. Then, due to \cref{corol:bfs-w1-maxium}, $\sigma$ contains at most $\Gamma + 2k - \ver = \Gamma - 1$ vertices; a contradiction to the fact that $\sigma$ has length~$\Gamma$.
\end{proof}

Next, we show that vertex $f$ can be assumed to be to the left of all verification vertices.

\begin{lemma}\label{lemma:bfs-w1-f-left}
    There is a total dominating sequence $\sigma'$ of $A$ which has the same length as $\sigma$ such that $f$ is to the left of all verification vertices in $\sigma'$.
\end{lemma}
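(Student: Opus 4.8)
The plan is to take an arbitrary total dominating sequence $\sigma$ of $A$ of length equal to the claimed value and produce $\sigma'$ by an exchange argument that moves $f$ leftward past every verification vertex without changing the length or violating the total-domination property. First I would observe what $f$ footprints: by construction $f$ is adjacent exactly to $g$ and to all of $Y$ (by (E1)), so the candidate vertices $f$ can footprint are $g$ together with the $Y$-vertices. The verification vertices $c^{ij}(b)$ are, by (E2), adjacent to $g$ and to all of $Y$ as well (and to their edge vertices). So $f$ and every $c^{ij}(b)$ have overlapping "useful" neighborhoods on $\{g\} \cup Y$; the essential point is that $f$ is the \emph{only} vertex of $A$ adjacent to $g$ other than the verification vertices, and that $g$ can be footprinted by whichever of these comes first.

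The key step is to show that pulling $f$ to the position immediately before the leftmost verification vertex in $\sigma$ preserves validity. Concretely, I would argue that relocating $f$ to sit just left of all $c^{ij}(b)$ does not destroy the footprint of any element. The only elements whose footprint sets could shrink when $f$ is moved earlier are the vertices that previously lay between $f$'s old and new positions — but these are precisely the verification vertices $c^{ij}(b)$ (selection vertices footprint $Y$-vertices $y^i(a)$, which are untouched), and each such $c^{ij}(b)$ still footprints one of its edge vertices $w^{ij}_{pr}(b)$ whenever it did before, since edge vertices are not in $N(f)$. Conversely, moving $f$ earlier can only enlarge the set of previously-undominated vertices available to $f$, so $f$ itself still footprints something (namely $g$, if $f$ now precedes every $c^{ij}(b)$, or some $Y$-vertex otherwise). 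The one genuine subtlety is $g$: only $f$ and the verification vertices are adjacent to $g$, so whoever among them comes first footprints $g$. After the move $f$ precedes all verification vertices, so $f$ footprints $g$, and no $c^{ij}(b)$ loses anything it needed, because each $c^{ij}(b)$ already relied on an edge vertex rather than on $g$ for its footprint (or if it relied on $g$ or a $Y$-vertex, one shows it can instead use an edge vertex, using that it is in $\sigma$ and thus footprinted \emph{something}).

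The main obstacle I anticipate is this bookkeeping for $g$ and for the $Y$-vertices: I must verify that no verification vertex $c^{ij}(b)$ was footprinting $g$ or a $Y$-vertex in an \emph{essential} way that the relocated $f$ would now steal. The clean way to handle this is to argue that every verification vertex appearing in $\sigma$ can be assumed to footprint an edge vertex: if some $c^{ij}(b)$ footprinted only $g$ or a $Y$-vertex, I would use that there are $\ver = 2k+1$ copies and invoke the counting from \cref{lemma:bfs-w1-ver} together with the available edge vertices to reassign footprints, or simply note that $f$ and the $c^{ij}(b)$ compete only over $\{g\}\cup Y$ while each $c^{ij}(b)$ has its private edge-vertex neighborhood available. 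Once it is established that the verification vertices draw their footprints from edge vertices and $f$ draws its footprint from $g$, the leftward move of $f$ is immediate and the length is unchanged since we only permute, never delete.
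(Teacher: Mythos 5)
There is a genuine gap. Your argument is built around \emph{moving} $f$ leftward past the verification vertices, but that scenario cannot occur: by (E1) and (E2) we have $N(f)=\{g\}\cup Y\subseteq N(c^{ij}(b))$ for every verification vertex, so if $f$ appeared in $\sigma$ to the right of any $c^{ij}(b)$, every neighbor of $f$ would already be dominated and $f$ would footprint nothing, contradicting that $\sigma$ is a total dominating sequence of $A$. Hence the only nontrivial case is that $f$ is \emph{absent} from $\sigma$ altogether, and then there is nothing to permute. The paper's proof handles exactly this case by a replacement, not a move: it substitutes $f$ for the \emph{leftmost} verification vertex $c^{ij}(b)$ at that same position. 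Length is preserved because one vertex is swapped for another; $f$ footprints $g$ because it is now the leftmost neighbor of $g$; vertices to the left are unaffected; and every vertex to the right of that position could only ever footprint edge vertices (since all of $Y\cup\{g\}$ is dominated from that position onward), which $f$ does not touch, so their footprints survive.

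The ``subtlety'' you flag at the end --- that a verification vertex which was footprinting $g$ or a $Y$-vertex would need to be reassigned to one of its edge vertices --- is in fact a fatal obstruction to the insert/move version of the argument, and your proposed resolutions do not close it. Such a fallback edge vertex need not exist: all edge vertices of $C^{ij}(b)$ may already be dominated by selection vertices occurring earlier in $\sigma$ (indeed, \cref{lemma:bfs-w1-exactly,lemma:w1-rueck} rely precisely on the fact that a verification vertex forced to footprint an edge vertex certifies an edge of the clique; if a fallback were always available the reduction would collapse). The counting via $\ver=2k+1$ and \cref{lemma:bfs-w1-ver} gives you the \emph{existence} of verification vertices in $\sigma$, not the ability to re-route their footprints. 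The replacement argument sidesteps the whole issue by deleting exactly the one verification vertex whose footprint $f$ would otherwise steal.
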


\begin{proof}
   Let $c^{ij}(b)$ be the leftmost verification vertex in $\sigma$. Due to \cref{lemma:bfs-w1-ver}, this vertex exists. Assume for contradiction that $f$ is not to the left of $c^{ij}(b)$ in $\sigma$. Then $f$ is not part of $\sigma$ since all its neighbors are also neighbors of $c^{ij}(b)$. We create $\sigma'$ from $\sigma$ by replacing $c^{ij}(b)$ with $f$. Then $f$ footprints $g$ since it is the leftmost neighbor of $g$ in $\sigma'$. All vertices to the left of $f$ still footprint the vertices in $\sigma'$ that they footprinted in $\sigma$. All vertices to the right of $c^{ij}(b)$ in $\sigma$ can only footprint edge vertices since $c^{ij}(b)$ footprints all vertices of $Y \cup \{g\}$. As $f$ is not adjacent to any edge vertex, these vertices still footprint the same vertices in $\sigma'$ as in $\sigma$. Therefore, $\sigma'$ is a total dominating sequence of $A$ with the same length as $\sigma$.
\end{proof}

Due to this lemma, we may assume in the following that vertex $f$ is to the left of all verification vertices in $\sigma$. We now show that the selection phase really selects one vertex of $G$ per color class.

\begin{lemma}\label{lemma:bfs-w1-exactly}
    For every $i \in [k]$, there is a unique $p_i \in [q]$ such that there is a vertex of $X^i_{p_i}$ to the left of $f$ in $\sigma$.
\end{lemma}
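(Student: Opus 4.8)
The plan is to establish the two halves of the statement---existence (at least one such group) and uniqueness (at most one)---separately, each by contradiction, relying throughout on the normalization from \cref{lemma:bfs-w1-f-left} that $f$ precedes every verification vertex in $\sigma$. The single fact I would use repeatedly is the following: since $f$ is adjacent to all of $Y$ and (being to the left of all verification vertices, the only other neighbors of $g$) footprints $g$ itself, no vertex of $Y\cup\{g\}$ is available to be footprinted by any vertex occurring after $f$. Consequently every element of $\sigma$ lying to the right of $f$ can footprint only edge vertices.

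For the existence part, I would assume that no selection vertex of color $i$ lies to the left of $f$. Then all selection vertices of $X^i$ occur after $f$ and hence footprint only edge vertices, so by \cref{lemma:bfs-w1-footprint} there are at most two of them in $\sigma$. Re-running the count behind \cref{corol:bfs-w1-maxium} with $X^i$ contributing at most $2$ rather than $\sel+2$ selection vertices lowers the upper bound on $|\sigma|$ by $\sel=2k+1$, yielding $|\sigma|\le\Gamma+2k-\sel=\Gamma-1$, which contradicts $|\sigma|\ge\Gamma$. Hence at least one group $X^i_{p_i}$ has a representative to the left of $f$.

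For the uniqueness part, I would assume two distinct groups $X^i_p$ and $X^i_{p'}$ with $p\neq p'$ both have a vertex to the left of $f$. Fixing any $j\neq i$ (which exists since $k\ge 2$), \cref{lemma:bfs-w1-ver} supplies a verification vertex $c^{ij}(b)\in\sigma$, and by \cref{lemma:bfs-w1-f-left} it lies to the right of $f$. I would then show that $c^{ij}(b)$ can footprint nothing. Its neighbors are $Y\cup\{g\}$---already covered by $f$---together with the edge vertices $w^{ij}_{qr}(b)$ of its own gadget. For every index $q$, since $p\neq p'$ at least one of $p,p'$ differs from $q$, and the corresponding selection vertex, which lies to the left of $f$ and hence to the left of $c^{ij}(b)$, is adjacent to $w^{ij}_{qr}(b)$ by (E\ref{item:bfs-w1-edge1}); thus it covers $w^{ij}_{qr}(b)$ beforehand. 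So every edge vertex of the gadget is already covered when $c^{ij}(b)$ appears, contradicting that $c^{ij}(b)$ must footprint some vertex.

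I expect the uniqueness direction to be the main obstacle: it is the only place where the precise incidence pattern (E\ref{item:bfs-w1-edge1}) between selection and edge vertices is essential, and the delicate point is verifying that the two offending selection vertices \emph{jointly} dominate every edge vertex of $C^{ij}(b)$, no matter which edge $c^{ij}(b)$ might attempt to footprint. The existence direction, in contrast, is a routine refinement of the counting already performed for \cref{corol:bfs-w1-maxium}, the only new ingredient being that a color with no representative left of $f$ saves exactly $\sel$ from the length bound, one more than the slack $2k$ permits.
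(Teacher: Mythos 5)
Your proposal is correct and follows essentially the same route as the paper: the existence half redoes the counting of \cref{corol:bfs-w1-maxium} with $X^i$ contributing at most two vertices to gain the slack $\sel=2k+1>2k$, and the uniqueness half uses \cref{lemma:bfs-w1-ver,lemma:bfs-w1-f-left} to find a verification vertex $c^{ij}(b)$ after $f$ and then argues via (E\ref{item:bfs-w1-edge1}) that the two offending selection vertices jointly cover every edge vertex of $C^{ij}(b)$, so $c^{ij}(b)$ footprints nothing. The only cosmetic difference is that the paper fixes the specific edge vertex $c^{ij}(b)$ would have to footprint and derives $s=p_i$, whereas you quantify over all edge vertices of the gadget; the arguments are interchangeable.
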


\begin{proof}
    First we show that there is at least one vertex of $X^i$ to the left of $f$ in $\sigma$. Assume for contradiction that this is not the case. As $f$ is adjacent to all vertices of $Y$, the vertices of $X^i$ can only footprint edge vertices. By \cref{lemma:bfs-w1-footprint}, there are at most two such vertices. Due to \cref{corol:bfs-w1-maxium}, the length of $\sigma$ is at most $(\sel + 2) \cdot (k-1) + 2 + \ver \cdot \binom{k}{2} + 1 = \Gamma + 2k - \sel = \Gamma - 1$; a contradiction as $\sigma$ has length $\Gamma$.
    
    Thus, let $x^i_{p_i}(a)$ be a vertex to the left of $\sigma$. Assume for contradiction that there is some index $r \in [q]$ with $r \neq p_i$ such that $x^i_r(a')$ is also to the left of $f$ for some $a' \in [\sel]$. Let $j \in [k]$ be arbitrary with $i \neq j$. Due to \cref{lemma:bfs-w1-ver}, there is a vertex $c^{ij}(b)$ in $\sigma$. Due to \cref{lemma:bfs-w1-f-left}, we may assume that vertex $f$ is to the left of $c^{ij}(b)$ and, thus, $c^{ij}(b)$ does not footprint any vertex of $Y \cup \{g\}$. Therefore, $c^{ij}(b)$ footprints some edge vertex $w^{ij}_{st}(b)$. As $x^i_{p_i}(a)$ is not adjacent to $w^{ij}_{st}(b)$, it holds that $s = p_i$, due to (E\ref{item:bfs-w1-edge1}). However, then $x^i_r(a')$ is adjacent to $w^{ij}_{st}(b)$ and $c^{ij}(b)$ cannot footprint $w^{ij}_{st}(b)$; a contradiction.
\end{proof}

This allows us to finalize the proof of the second direction of \cref{thm:bip}.

\begin{lemma}\label{lemma:w1-rueck}
    Let the values $p_1, \dots, p_k$ be chosen as in \cref{lemma:bfs-w1-exactly}. Then the set $\{v^1_{p_1}, \dots, v^k_{p_k}\}$ induces a clique in  $G$.
\end{lemma}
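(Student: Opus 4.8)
The plan is to show that any two selected vertices $v^i_{p_i}$ and $v^j_{p_j}$ (for $i \neq j$) are adjacent in $G$, which immediately gives that $\{v^1_{p_1}, \dots, v^k_{p_k}\}$ is a clique of size $k$ in $G$ (I note the statement says $G'$, but the intended graph is surely the original input graph $G$, since adjacency of the $v^i_p$ is only defined there). Fix an arbitrary pair $i,j \in [k]$ with $i \neq j$. By \cref{lemma:bfs-w1-ver}, there is some $b \in [\ver]$ with the verification vertex $c^{ij}(b)$ appearing in $\sigma$. By \cref{lemma:bfs-w1-f-left} we may assume $f$ lies to the left of $c^{ij}(b)$, so that $c^{ij}(b)$ can no longer footprint any vertex of $Y \cup \{g\}$ (all of these are footprinted by $f$ or by earlier verification vertices together with $f$). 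Hence $c^{ij}(b)$ must footprint some edge vertex $w^{ij}_{st}(b)$.

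The key step is to pin down the indices $s$ and $t$ of this footprinted edge vertex. By \cref{lemma:bfs-w1-exactly}, the unique index whose selection vertices appear to the left of $f$ in block $S^i$ is $p_i$, and likewise $p_j$ for block $S^j$. I would argue that $s = p_i$: the vertex $x^i_{p_i}(a)$ (which lies to the left of $f$ and hence also to the left of $c^{ij}(b)$) does not footprint-dominate $w^{ij}_{st}(b)$ via the earlier part of the sequence only if $x^i_{p_i}(a)$ is \emph{not} adjacent to $w^{ij}_{st}(b)$; by (E\ref{item:bfs-w1-edge1}), nonadjacency of $x^i_{p_i}(a)$ to $w^{ij}_{st}(b)$ forces $s = p_i$. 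An entirely symmetric argument on the $j$-side, using that $w^{ij}_{st}$ and $w^{ji}_{ts}$ are identified and that $x^j_{p_j}(a')$ lies to the left of $f$, forces $t = p_j$. Therefore the footprinted edge vertex is $w^{ij}_{p_i p_j}(b)$.

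Finally, an edge vertex $w^{ij}_{p_i p_j}(b)$ exists in $G'$ only when $v^i_{p_i} v^j_{p_j} \in E(G)$, by construction of the verification gadgets. Hence $v^i_{p_i}$ and $v^j_{p_j}$ are adjacent in $G$. Since the pair $i,j$ was arbitrary, all $\binom{k}{2}$ pairs among the selected vertices are adjacent, so $\{v^1_{p_1}, \dots, v^k_{p_k}\}$ is a clique of size $k$, as claimed.

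I expect the main obstacle to be the careful bookkeeping in the second step: one must be sure that no vertex to the left of $c^{ij}(b)$ has already dominated $w^{ij}_{p_i p_j}(b)$, so that $c^{ij}(b)$ genuinely footprints it, and one must verify that the nonadjacency conditions from (E\ref{item:bfs-w1-edge1}) pin down \emph{both} coordinates rather than just one. The symmetry between the $i$-side and the $j$-side (via the identification $w^{ij}_{pr} = w^{ji}_{rp}$) is what closes this, but it requires applying \cref{lemma:bfs-w1-exactly} simultaneously on both colour blocks and tracking which selection vertices precede $f$.
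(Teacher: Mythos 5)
Your proof is correct and follows essentially the same route as the paper's: both rely on \cref{lemma:bfs-w1-ver}, \cref{lemma:bfs-w1-f-left}, and \cref{lemma:bfs-w1-exactly} together with (E\ref{item:bfs-w1-edge1}) to conclude that the verification vertex $c^{ij}(b)$ must footprint the edge vertex $w^{ij}_{p_ip_j}(b)$, whose existence forces $v^i_{p_i}v^j_{p_j}\in E(G)$ (the paper phrases this contrapositively, assuming nonadjacency and deriving that no $c^{ij}(b)$ can appear in $\sigma$). Your observation that the clique lives in $G$ rather than $G'$ correctly identifies a typo in the statement.
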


\begin{proof}
    Assume for contradiction that $v^i_{p_i}$ is not adjacent to $v^j_{p_j}$. Let $x^{i}_{p_i}(a)$ and $x^{j}_{p_j}(a')$ be two vertices that are to the left of $f$ in $\sigma$. Note that, due to \cref{lemma:bfs-w1-f-left}, we may assume that these vertices are also to the left of all verification vertices. Since $v^i_{p_i}$ is not adjacent to $v^j_{p_j}$ in $G$, there are no edge vertices $w^{ij}_{p_ip_j}(b)$ in $G'$ for every $b \in [\ver]$. All other edge vertices $w^{ij}_{rs}(b)$ are adjacent to $x^i_{p_i}(a)$ or $x^j_{p_j}(a')$, due to (E\ref{item:bfs-w1-edge1}). Hence, no verification vertex $c^{ij}(b)$ can be an element of $\sigma$. This contradicts \cref{lemma:bfs-w1-ver}.
\end{proof}

\Cref{lemma:w1-hin,lemma:w1-rueck} imply that the graph $G'$ is a proper reduction from \MCP{} to \BIP{}. Since $G'$ can be constructed in polynomial-time and since $\Gamma \in \O(k^3)$, the graph $G'$ forms a polynomial-time \FPT{} reduction. This implies the correctness of \cref{thm:bip}.

\subsection{W[1]-Completeness of Grundy Domination Problems}

First, we will use \cref{thm:bip} to show \WOne-hardness for the four Grundy domination problems and their connected variants. We start with the following lemma.

\begin{lemma}[Lin~{\cite[Proposition 4.2]{lin2019zero}}]\label{lemma:A-B}
    Let $G$ be a bipartite graph with bipartition $V(G) = A \dot\cup B$. Then the following statements are equivalent:
    \begin{enumerate}[(i)]
        \item There is a total dominating sequence of $A$ with $k$ elements.
        \item There is a total dominating sequence of $B$ with $k$ elements.
        \item There is a total dominating sequence of $G$ with $2k$ elements.
    \end{enumerate}
\end{lemma}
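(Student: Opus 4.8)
The plan is to route everything through a single structural observation about bipartite graphs and then prove the implications (i)$\Rightarrow$(iii), (iii)$\Rightarrow$(i), and (i)$\Leftrightarrow$(ii). The key observation is that the two sides of the bipartition \emph{decouple}: for $a \in A$ we have $N(a) \subseteq B$, and for $b \in B$ we have $N(b) \subseteq A$. Consequently, in \emph{any} vertex sequence of $G$, whether a vertex $a \in A$ footprints something depends only on the vertices of $A$ appearing before it, since the neighborhoods of earlier $B$-vertices lie in $A$ and hence cannot meet $N(a) \subseteq B$. In particular, given a total dominating sequence $\tau$ of $G$, its subsequence $\tau_A$ of $A$-vertices is automatically a total dominating sequence of $A$, and likewise $\tau_B$ for $B$.

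For the equivalence (i)$\Leftrightarrow$(ii) I would use a \emph{witness-reversal} argument. Given a total dominating sequence $\sigma = (a_1, \dots, a_k)$ of $A$, choose for each $a_i$ a witness $b_i \in N(a_i) \setminus \bigcup_{j<i} N(a_j) \subseteq B$; by definition $b_i$ is adjacent to $a_i$ but to no $a_j$ with $j < i$. One checks that the reversed witness sequence $(b_k, b_{k-1}, \dots, b_1)$ is a total dominating sequence of $B$: the witnesses are pairwise distinct (if $b_i = b_j$ with $i<j$ then $b_j$ is adjacent to $a_i$, contradicting the choice of $b_j$), and each $b_i$ footprints $a_i$ in the reversed order, because $a_i$ is adjacent to $b_i$ while $a_i$ is not adjacent to any $b_j$ with $j > i$ (again by the witness property). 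By symmetry in $A$ and $B$ this yields (i)$\Leftrightarrow$(ii).

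The same construction gives (i)$\Rightarrow$(iii): I would claim that $(a_1, \dots, a_k, b_k, b_{k-1}, \dots, b_1)$ is a total dominating sequence of $G$ of length $2k$. The prefix $(a_1,\dots,a_k)$ still footprints as before, since every earlier vertex is in $A$, and each $b_i$ footprints $a_i$ because $a_i$ is adjacent to $b_i$ but not to any vertex of $A$ (bipartiteness) nor to any $b_j$ with $j>i$ (witness property); moreover all $2k$ vertices are distinct. Finally, for (iii)$\Rightarrow$(i) I would split a total dominating sequence $\tau$ of $G$ of length $2k$ into $\tau_A$ and $\tau_B$ as above; since $|\tau_A| + |\tau_B| = 2k$, one of them has length at least $k$, and truncating it to its first $k$ elements (a prefix of a total dominating sequence is again one, as the footprint condition for each vertex refers only to earlier vertices) gives either (i) directly or (ii), from which (i) follows by the equivalence just proved.

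I expect the decoupling observation to do most of the work and render (iii)$\Rightarrow$(i) almost immediate; the main point requiring care is the witness-reversal argument, where one must verify simultaneously that the chosen witnesses are pairwise distinct and that reversing the order turns each ``is-footprinted-by'' relation around correctly.
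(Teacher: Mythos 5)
Your proof is correct. Note that the paper does not prove this lemma at all -- it is imported verbatim from Lin~\cite[Proposition~4.2]{lin2019zero} -- so there is no in-paper argument to compare against; your self-contained proof via the decoupling observation (neighborhoods of $A$-vertices lie in $B$ and vice versa, so the two sides of the bipartition never interfere) together with the witness-reversal construction is sound in every step: the witnesses are pairwise distinct, the reversed witness sequence footprints correctly, the concatenation $(a_1,\dots,a_k,b_k,\dots,b_1)$ works for (i)$\Rightarrow$(iii), and the split-and-truncate argument gives (iii)$\Rightarrow$(i). This is essentially the standard argument for this equivalence.
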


This lemma implies directly that there is a trivial \FPT{} reduction from \BIP{} to \TN{} on bipartite graphs and vice versa. Besides this, we observe that the polynomial-time reduction from \TN{} to \TN{} on split graphs given in~\cite{bresar2018total} is in fact an \FPT{} reduction. Thus, the following holds.

\begin{theorem}
    \TN{} is \WOne-hard on bipartite graphs and on split graphs when they are parameterized by the solution size.
\end{theorem}

Next, we show that \GN{} and \ZN{} as well as their connected variants are \WOne-hard.

\begin{theorem}\label{thm:gn}
    \CONC{\GN} and \CONC{\ZN} are \WOne-hard on co-bipartite graphs when parameterized by the solution size.
\end{theorem}

\begin{proof}
    Let $(G,k)$ be an instance of \BIP{} with $V(G) = A \dot\cup B$. We construct the graph $G'$ by making the sets $A$ and $B$ to cliques. We claim that $G$ has a total dominating sequence of $A$ of length $k$ if and only if $G'$ has a (connected) dominating sequence ($Z$-sequence) of length~$k$. 
    It is easy to see that every total dominating sequence of $A$ in $G$ is a connected dominating sequence and a connected $Z$-sequence of $G'$. 

    Now assume that $G'$ has a dominating sequence ($Z$-sequence) $\sigma = (v_1, \dots, v_k)$. First consider the case that $v_1$ is a vertex of $A$.  By \cref{assumption}, $G$ does not contain isolated vertices. Thus, $v_1$ footprints some vertex in $B$. All other elements of $\sigma$ must footprint some vertex of $B$ as $A \subseteq N[v_1]$. Thus, we are done if $\sigma$ only contains elements of $A$. 
    
    So assume that there is a vertex $v_i$ in $\sigma$ that lies in set $B$. It holds that $N[v_1] \cup N[v_i] = V(G)$. Hence, $v_i$ must be the last vertex of $\sigma$, i.e., $v_i = v_k$. Since $A \subseteq N[v_1]$, vertex $v_k$ must footprint some vertex $b \in B$. By \cref{assumption}, vertex $b$ was not isolated in $G$ and, thus, there is a vertex $a \in A$ that is adjacent to $b$. Note that $a$ cannot be part of $\sigma$ since otherwise $v_k$ would not footprint $b$. We replace $v_k$ by $a$ in $\sigma$ and get a dominating sequence ($Z$-sequence) of $G$ with $k$ elements that only contains vertices of $A$. Then the argumentation above implies that $\sigma$ also forms a total dominating sequence of $A$ in $G$. 

    If $\sigma$ starts in $B$, then we can use the same argumentation and get a total dominating sequence of $B$ of length $k$. Due to \cref{lemma:A-B}, there is also such a sequence of~$A$.
\end{proof}

\begin{figure}
    \centering
    \begin{tikzpicture}[yscale=0.8]
    \node[svertex, label=90:$a_1$] (1-a) at (0,1) {};
    \node[svertex, label=90:$a_2$] (2-a) at (1,1) {};
    \node[svertex, label=90:$a_3$] (3-a) at (2,1) {};
    \node[svertex, label=90:$a_4$] (4-a) at (3,1) {};
    \node[svertex, label=90:$a_5$] (5-a) at (4,1) {};
    \node[svertex, label=-90:$b_1$] (1-b) at (0,0) {};
    \node[svertex, label=-90:$b_2$] (2-b) at (1,0) {};
    \node[svertex, label=-90:$b_3$] (3-b) at (2,0) {};
    \node[svertex, label=-90:$b_4$] (4-b) at (3,0) {};
    \node[svertex, label=-90:$b_5$] (5-b) at (4,0) {};

    \foreach \x in {1,...,5} 
    {
    \draw (\x-a) -- (\x-b);
    }

    \foreach \x in {1,...,5} 
    {
    \draw (4-a) -- (\x-b);
    }

    \draw (5-a) -- (4-b);
\end{tikzpicture}
    \caption{There is no ordering of the vertices $\{a_1, \dots, a_5\}$ that is a total dominating sequence. However, if we make $A$ and $B$ to cliques, then the sequence $(a_1, a_2, a_3, b_4, b_5)$ is a total dominating sequence and an $L$-sequence.}
    \label{fig:gn-tn}
\end{figure}
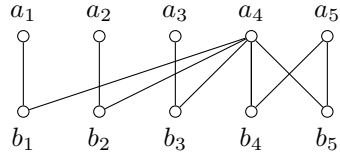

This proof does not work for \TN{} and \LN{}. An example is given in \cref{fig:gn-tn}. However, we can slightly modify the construction to also prove \WOne-hardness of these problems on co-bipartite graphs.

\begin{theorem}\label{thm:tn}
    \CONC{\LN} and \CONC{\TN} are \WOne-hard on co-bipartite graphs when they are parameterized by the solution size.
\end{theorem}

\begin{proof}
    We use a similar construction as for \cref{thm:gn}. Let $(G,k)$ be an instance of \BIP{} with $V(G) = A \dot\cup B$. W.l.o.g.~we may assume that $k > 0$. To construct the graph $G'$, we add two vertices $a_1$ and $a_2$ to $A$ and get the set $A'$. We also add two vertices $b_1$ and $b_2$ to $B$ and get the set $B'$. Finally, we make $A'$ and $B'$ to cliques. We claim that $G$ has a total dominating sequence of $A$ with $k$ elements if and only if $G'$ has a (connected) total dominating sequence ($L$-sequence) with $\ell = k + 4$ elements.

    Assume first that $G$ has a total dominating sequence $\sigma = (v_1, \dots, v_k)$ of~$A$. Let $w$ be a vertex in $B$ that is footprinted by some vertex in $\sigma$. Then, we claim that $\sigma' = (a_1, a_2, v_1, \dots, v_n, b_1, w)$ is a connected total dominating sequence ($L$-sequence) of $G'$. Vertex $a_1$ footprints $a_2$ while $a_2$ footprints $a_1$. Similarly, $b_1$ footprints $b_2$ and $w$ footprints $b_1$. As none of the vertices $a_1$ and $a_2$ is adjacent to some vertex in $B'$, vertex $v_i$ footprints a vertex of $B$ in $\sigma$ if and only if it footprints the same vertex in $\sigma'$. Furthermore, $\sigma$ induces a connected graph since $w$ has some neighbor in $A$ that is part of $\sigma$.

    So assume now that $G'$ has a total dominating sequence ($L$-sequence) $\sigma = (v_1, \dots, v_\ell)$. First consider the case that $v_1 \in A'$. We claim that $v_2$ is also in $A'$. Assume for contradiction that this is not the case, i.e., $v_2 \in B'$. Then $N(v_1) \cup N(v_2) \supseteq V(G) \setminus \{v_1, v_2\}$. Therefore, it holds that $\ell \leq 4$ since $v_3$ must footprint at least one of $v_1$ and $v_2$ and $v_4$ only might footprint the remaining vertex of $v_1$ and $v_2$. This is a contradiction since $\ell = k + 4 > 4$. 
    
    Thus, $v_2 \in A'$ and $A' \subseteq N(v_1) \cup N(v_2)$. If $\sigma$ contains a vertex $v_i \in B'$, then $N(v_1) \cup N(v_2) \cup N(v_i) \supseteq A' \cup B' \setminus \{v_i\}$. Hence, there is at most one further vertex of $B'$ in~$\sigma$. This implies that there are at least $\ell - 2 = k + 2$ vertices of $A'$ in~$\sigma$. It follows from the argumentation above that only $v_1$ and $v_2$ might footprint some vertex of $A'$. Thus, the $k$ rightmost vertices $\sigma' = (u_1, \dots, u_k)$ of $A'$ in $\sigma$ must footprint some vertex of $B'$. Since a vertex $v \in A'$ can only footprint a vertex $w \in B'$ if $v \in A$ and $w \in B$, it holds that the $k$ rightmost vertices of $A'$ in $\sigma$ form a total dominating sequence of $A$ in $G$.
    
    If $\sigma$ starts in $B'$, then we can use the same argumentation and get a total dominating sequence of $B$ of length $k$. Due to \cref{lemma:A-B}, there is also such a sequence of~$A$.
\end{proof}

All reductions used to prove \cref{thm:gn,thm:tn} are polynomial-time reductions. Hence, they also imply \NP-completeness of the problems on co-bipartite graphs.

\begin{corollary}\label{corol:tn-cb}
    \CONC{\GN}, \CONC{\TN}, \CONC{\LN}, and \CONC{\ZN} are \NP-complete on co-bipartite graphs.
\end{corollary}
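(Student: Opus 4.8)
The plan is to establish the two ingredients of \NP-completeness separately: membership in \NP{} and \NP-hardness. For membership, I would observe that for each of the four problems a sequence $(v_1, \dots, v_k)$ of pairwise distinct vertices of the claimed length serves as a certificate of polynomial size. Given such a sequence, one can verify in polynomial time whether each $v_i$ footprints some vertex, since the defining condition $N\langle v_i\rangle_1 \setminus \bigcup_{j=1}^{i-1} N\langle v_j\rangle_2 \neq \emptyset$ only requires computing and comparing neighborhoods. Hence all four problems lie in \NP{} regardless of the graph class, and in particular on co-bipartite graphs.

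For hardness, I would exploit the chain of reductions already assembled in this section. First, \MCP{} is itself \NP-hard: starting from a \textsc{Clique} instance $(H,k)$, take $k$ disjoint copies of $V(H)$ as the (independent) color classes $V_1, \dots, V_k$ and join a copy of $u$ in class $i$ to a copy of $v$ in class $j\neq i$ exactly when $uv \in E(H)$; a multicolored clique then selects pairwise-adjacent vertices from distinct classes, which (since there are no self-loops) must be distinct original vertices forming a $k$-clique of $H$, and conversely. Next, the construction of \cref{thm:bip} is a polynomial-time reduction from \MCP{} to \BIP{}, as noted at the end of its proof: $G'$ has polynomial size and $\Gamma \in \O(k^3)$. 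Finally, the constructions in the proofs of \cref{thm:gn,thm:tn} transform a \BIP{} instance into a co-bipartite graph by adding a bounded number of vertices and completing the two sides into cliques, which are plainly polynomial-time many-one reductions to \GN{} and \ZN{} (\cref{thm:gn}) and to \TN{} and \LN{} (\cref{thm:tn}), all restricted to co-bipartite graphs. Composing these polynomial-time reductions yields, for each of the four target problems, a polynomial-time many-one reduction from the \NP-hard problem \MCP{}, which together with membership gives \NP-completeness.

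The only point that I expect to require care is confirming that every link in this chain is a genuine classical polynomial-time many-one reduction rather than merely an \FPT{} reduction. In general an \FPT{} reduction is permitted to run in time $f(k)\cdot |(x,k)|^{\O(1)}$ and therefore need not preserve \NP-hardness; however, the preliminaries restrict attention to \emph{polynomial-time} \FPT{} reductions, in which the factor $f(k)$ is replaced by a constant, so each of the three reduction steps above already satisfies the stronger requirement. Consequently the composition is a valid polynomial-time reduction and no additional argument is needed beyond this bookkeeping.
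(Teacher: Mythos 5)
Your proposal is correct and follows essentially the same route as the paper: the paper's proof is simply the observation that the reductions of \cref{thm:gn,thm:tn} (composed with the polynomial-time reduction from \MCP{} in \cref{thm:bip}) are classical polynomial-time reductions, with \NP{} membership left implicit. You merely spell out the \NP{} membership certificate and the \NP-hardness of \MCP{} explicitly, which the paper takes for granted.
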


Note that the \NP-completeness of \GN{} on co-bipartite graphs has already been proven in~\cite{bresar2023computation}.  A simple reduction shows that the hypergraph version of Grundy domination is also \WOne-hard.

\begin{lemma}\label{thm:w1-hyper}
    \HYPER{} is \WOne-hard when it is parameterized by the solution size.
\end{lemma}

\begin{proof}
    We reduce from \BIP{}. Let $(G,k)$ be an instance with $V(G) = A \dot\cup B$. Let $\H = (X, \E)$ be the hypergraph with $X = B$ and $\E$ contains for every $v \in A$ the set of neighbors of $v$ in $G$. It is easy to see that a sequence $(v_1, \dots, v_k)$ is a total dominating sequence of $A$ if and only if $(N(v_1), \dots, N(v_k))$ is a covering sequence of~$\H$.
\end{proof}

Finally, we extend all \WOne-hardness results to \WOne-completeness. 

\begin{theorem}\label{thm:w3}
    The following problems are $\WOne$-complete when they are parameterized by the solution size:
    \begin{itemize}
        \item \CONC{\GN}, 
        \item \CONC{\TN}, 
        \item \CONC{\LN}, 
        \item \CONC{\ZN},
        \item \BIP{},
        \item \HYPER{}.
    \end{itemize}
\end{theorem}

\begin{proof}
    We have shown in \cref{thm:bip,thm:gn,thm:tn} as well as in \cref{thm:w1-hyper} that all these problems are \WOne-hard. It remains to show that the problems are also in \WOne{}. We first show that \LN{} is in $\WOne$. We then present a simple adaption of the proof to the other unconnected problems. Afterwards, we present a more sophisticated adaption for the connected problem variants. Finally, we use known \FPT{} reductions to prove \WOne-completeness for \BIP{} and \HYPER{}.
    
    \proofsubparagraph{\LN{}} Let $(G,k)$ be an instance of \LN{}. In the following, we present a Boolean circuit of weft~$1$ and depth~$\leq 4$ that has a satisfying assignment of weight exactly $2k$ if and only if $G$ has an $L$-sequence of length $k$. For every $v \in V(G)$ and every $i \in [k]$, we introduce the variables $x(v,i)$ and $y(v,i)$. The variable $x(v,i)$ should be \true{} if and only if vertex $v$ is the $i$-th element of the $L$-sequence. The variable $y(v,i)$ should be \true{} if vertex $v$ is footprinted by the $i$-th element of the $L$-sequence. Note that we will only mark one vertex that is footprinted by the $i$-th vertex although there might be more than one such vertex. Further note that in $L$-sequences a vertex may be footprinted twice, once by itself and a second time by another vertex.
    
    First, we give Boolean formulas which ensure that the assignment of the $x$-variables encodes a sequence of $k$ pairwise different vertices. To this end, we define the following subformulas:
    \begin{align*}
    X^i_{vw} &:= \lnot{x(v,i)} \lor \lnot{x(w,i)}
    &X^{ij}_v &= \lnot{x(v,i)} \lor \lnot{x(v,j)}
    \end{align*}
    
    Formula $X^i_{vw}$ ensures that not both $v$ and $w$ can have the $i$-th position in the sequence. $X^{ij}_v$ ensures that vertex $v$ cannot be both at position $i$ and at position $j$. Using these formulas, we can give the following formula. 
    \begin{align*}
    X &:= \big(\bigwedge_{i \in [k]} \bigwedge_{\substack{v,w \in V(G) \\ v \neq w}} X^i_{vw}\big) \land \big(\bigwedge_{v \in V(G)} \bigwedge_{\substack{i,j \in [k] \\ i \neq j}} X^{ij}_v\big)
    \end{align*}

    Formula $X$ ensures that there is at most one vertex at every position $i \in [k]$ and that every vertex $v \in V(G)$ is placed at maximal one position in the sequence. Therefore, the $x$-variables represent a sequence of at most $k$ distinct vertices.

    We define a similar formula $Y$ for the $y$-variables as follows. 

    \begin{align*}
    Y &:= \bigwedge_{i \in [k]} \bigwedge_{\substack{v,w \in V(G) \\ v \neq w}} \big(\lnot{y(v,i)} \lor \lnot{y(w,i)}\big)
    \end{align*}

    This formula ensures that for every $i$ there is at most one $y$-variable set to \true{}. In difference to formula $X$, we do not force a vertex to have at most one true $y$-variable since a vertex may be footprinted twice.
    We will later force that both $X$ and $Y$ are satisfied. Since the weight of the satisfying assignment is $2k$, this ensures that there are exactly $k$ $x$-variables and $k$ $y$-variables set to \true{}.
    
    Next, we want to ensure that the $x$- and the $y$-variables match each other. To this end, we consider the following formulas:
    \begin{align*}
    A^{ij}_{wv} &:= \lnot y(w,i) \lor \lnot x(v,j) & & \forall i,j \in [k] \text{ with } j < i, \forall v,w \in V(G) \text{ with } w \in N(v) \\
    B^i_{wv} &:= \lnot y(w,i) \lor \lnot x(v,i) & & \forall i \in [k], \forall v,w \in V(G) \text{ with } w \notin N[v]
    \end{align*}
    Let $A$ be the conjunction of all formulas $A^{ij}_{wv}$ and $B$ be the conjunction of all formulas $B^{i}_{wv}$. For a fixed $w$ and $i$, $A$ ensures that if vertex $w$ is footprinted at position $i$, then it is not in the open neighborhood of any vertex that is placed at some position before~$i$. Formula $B$ ensures that if vertex $w$ is footprinted at position $i$, then the vertex placed at position $i$ contains $w$ in its closed neighborhood. 
    
    We define the final formula as $F := X \wedge Y \wedge A \wedge B$. It is easy to check that the circuit of $F$ has weft~$1$ and depth~$\leq 4$. By the above arguments, the formula has a satisfying assignment with weight $2k$ if and only if there is an $L$-sequence of length $k$ in $G$.

    \proofsubparagraph{Other Unconnected Problems} The circuit works analogously for the other problems. We just have to change the usage of open and closed neighborhoods in the definitions of $A^{ij}_{wv}$ and $B^i_{wv}$. For \TN{}, we keep the definition of $A^{ij}_{wv}$. For \GN{} and \ZN{}, we use $N[v]$ instead of $N(v)$. Analogously, we keep the definition of $B^i_{wv}$ for \GN{}; for \TN{} and \ZN{} we replace $N[v]$ by $N(v)$ in that definition.

    \proofsubparagraph{Connected Problems} To ensure that the chosen vertices of the sequence induce a connected subgraph, we have to force that for every non-trivial subset of them, there is an edge from a vertex within the subset to the vertices of the sequence that are not in the subset. We introduce further variables for that. For every non-trivial subset $\emptyset \neq S \subsetneq [k]$ and every pair $i,j$ with $i \in S$ and $j \in [k] \setminus S$, we introduce the variable $z(S,i,j)$. This variable should be \true{} if the $i$-th vertex in the sequence is adjacent to the $j$-th vertex in the sequence. Similar as for the $y$-variables, we only use one certifying pair although there might be several edges between $S$ and $[k] \setminus S$. Therefore, we consider the following formulas.
    \begin{align*}
        Z^{iji'j'}_S &:= \lnot z(S,i,j) \lor \lnot z(S,i',j') &\forall \emptyset \neq S \subsetneq [k], i, i' \in S, j, j' \in [k] \setminus S \\ &&\text{ with } (i,j) \neq (i',j')
    \end{align*}
    Let $Z$ be the conjunction of all formulas $Z^{iji'j'}_S$. Formula $Z$ ensures that for every subset $S$ there is at most one variable set to \true{}. Thus, there are at most $2^k - 2$ $z$-variables set to \true{}. It remains to ensure that the assignment of the $z$-variables matches the adjacencies. To this end, we consider the following formulas.
    \begin{align*}
        \C^{ij}_{Svw} &:= \lnot z(S,i,j) \lor \lnot x(v,i) \lor \lnot x(w,j) & \forall \emptyset \neq S \subsetneq [k], i \in S, j \in [k] \setminus S, \\ &&v \in V(G), w \notin N(v) 
    \end{align*}
    Let $C$ be the conjunction of all these formulas. Formula $C$ ensures that if the $i$th and the $j$-th vertex are chosen to be adjacent, then they also are adjacent.
    
    We now replace the final formula $F$ by $F \land Z \land C$. Note that the new formula has also weft~1 and its depth increases by~1. We now ask whether this formula has a satisfying assignment with weight $2k + 2^{k} - 2$. If this is the case, then by the above mentioned upper bounds on \true{} variables, there are exactly $k$ $x$-variables, exactly $k$ $y$-variables and exactly $2^k-2$ $z$-variables that are set to \true{}. Therefore, there is a connected dominating sequence of the particular type in $G$. Conversely, such a sequence can directly be transformed into a satisfying assignment of weight $2k + 2^{k} - 2$.

    \proofsubparagraph{One-Sided and Hypergraph Problem} Due to \cref{lemma:A-B}, \TN{} and \BIP{} are \FPT-equivalent. Furthermore, the polynomial-time reduction from \HYPER{} to \TN{} given in \cite{bresar2016total} is an \FPT{} reduction. Thus, both problems are also in~\WOne.
\end{proof}

\section{Zero Forcing Problems}\label{sec:zero}

\subsection{Definitions}

Zero forcing problems are based on a domination process where one is allowed to color some vertices of the graph initially blue while all other vertices are colored white. Then one can use certain color change rules to color the white vertices blue. There have been given a wide range of such color change rule schemes in the literature (see, e.g., \cite{barioli2013parameters,hogben2022inverse,lin2019zero}). We break some of these rule schemes down into the following three basic rules (see \cref{fig:rules} for an illustration).

\begin{description}
    \item[Z-rule] If $v$ is a blue vertex and $v$ has exactly one white neighbor $w$, then change the color of $w$ to blue. We write $v \z w$.
    \item[T-rule] If $v$ is a white vertex and $v$ has exactly one white neighbor $w$, then change the color of $w$ to blue. We write $v \t w$.
    \item[D-rule] If $v$ is a white vertex and every neighbor of $v$ is blue, then change the color of $v$ to blue. We write $v \d v$.
\end{description}

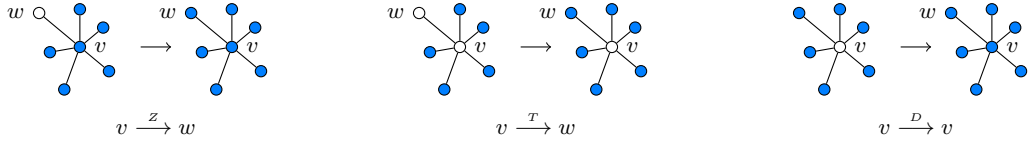
\begin{figure}
    \centering
    \begin{tikzpicture}
\footnotesize
    \begin{scope}
    \node[svertex, fill=myblue, label=0:$v$] (1) at (0,0) {};
    \node[svertex, fill=myblue] (2) at (250:0.6cm) {};3
    \node[svertex, fill=myblue] (3) at (190:0.4cm) {};
    \node[svertex, label=180:$w$] (4) at (140:0.7cm) {};
    \node[svertex, fill=myblue] (5) at (90:0.5cm) {};
    \node[svertex, fill=myblue] (6) at (40:0.4cm) {};
    \node[svertex, fill=myblue] (7) at (-40:0.5cm) {};

    \draw (1) -- (2);
    \draw (1) -- (3);
    \draw (1) -- (4);
    \draw (1) -- (5);
    \draw (1) -- (6);
    \draw (1) -- (7);
    \end{scope}

    \node at (1,-1) {$v \z w$};
    \draw[->] (0.8,0) -- (1.2,0);
    
    \begin{scope}[xshift=2cm]
    \node[svertex, fill=myblue, label=0:$v$] (1) at (0,0) {};
    \node[svertex, fill=myblue] (2) at (250:0.6cm) {};3
    \node[svertex, fill=myblue] (3) at (190:0.4cm) {};
    \node[svertex, fill=myblue, label=180:$w$] (4) at (140:0.7cm) {};
    \node[svertex, fill=myblue] (5) at (90:0.5cm) {};
    \node[svertex, fill=myblue] (6) at (40:0.4cm) {};
    \node[svertex, fill=myblue] (7) at (-40:0.5cm) {};

    \draw (1) -- (2);
    \draw (1) -- (3);
    \draw (1) -- (4);
    \draw (1) -- (5);
    \draw (1) -- (6);
    \draw (1) -- (7);
    \end{scope}

    \begin{scope}[xshift=5cm]
    \node[svertex, label=0:$v$] (1) at (0,0) {};
    \node[svertex, fill=myblue] (2) at (250:0.6cm) {};3
    \node[svertex, fill=myblue] (3) at (190:0.4cm) {};
    \node[svertex, label=180:$w$] (4) at (140:0.7cm) {};
    \node[svertex, fill=myblue] (5) at (90:0.5cm) {};
    \node[svertex, fill=myblue] (6) at (40:0.4cm) {};
    \node[svertex, fill=myblue] (7) at (-40:0.5cm) {};

    \draw (1) -- (2);
    \draw (1) -- (3);
    \draw (1) -- (4);
    \draw (1) -- (5);
    \draw (1) -- (6);
    \draw (1) -- (7);
    \end{scope}

    \node at (6,-1) {$v \t w$};
    \draw[->] (5.8,0) -- (6.2,0);
    
    \begin{scope}[xshift=7cm]
    \node[svertex, label=0:$v$] (1) at (0,0) {};
    \node[svertex, fill=myblue] (2) at (250:0.6cm) {};3
    \node[svertex, fill=myblue] (3) at (190:0.4cm) {};
    \node[svertex, fill=myblue, label=180:$w$] (4) at (140:0.7cm) {};
    \node[svertex, fill=myblue] (5) at (90:0.5cm) {};
    \node[svertex, fill=myblue] (6) at (40:0.4cm) {};
    \node[svertex, fill=myblue] (7) at (-40:0.5cm) {};

    \draw (1) -- (2);
    \draw (1) -- (3);
    \draw (1) -- (4);
    \draw (1) -- (5);
    \draw (1) -- (6);
    \draw (1) -- (7);
    \end{scope}

    \begin{scope}[xshift=10cm]
    \node[svertex, label=0:$v$] (1) at (0,0) {};
    \node[svertex, fill=myblue] (2) at (250:0.6cm) {};3
    \node[svertex, fill=myblue] (3) at (190:0.4cm) {};
    \node[svertex, fill=myblue] (4) at (140:0.7cm) {};
    \node[svertex, fill=myblue] (5) at (90:0.5cm) {};
    \node[svertex, fill=myblue] (6) at (40:0.4cm) {};
    \node[svertex, fill=myblue] (7) at (-40:0.5cm) {};

    \draw (1) -- (2);
    \draw (1) -- (3);
    \draw (1) -- (4);
    \draw (1) -- (5);
    \draw (1) -- (6);
    \draw (1) -- (7);
    \end{scope}

    \node at (11,-1) {$v \d v$};
    \draw[->] (10.8,0) -- (11.2,0);
    
    \begin{scope}[xshift=12cm]
    \node[svertex, fill=myblue, label=0:$v$] (1) at (0,0) {};
    \node[svertex, fill=myblue] (2) at (250:0.6cm) {};3
    \node[svertex, fill=myblue] (3) at (190:0.4cm) {};
    \node[svertex, fill=myblue] (4) at (140:0.7cm) {};
    \node[svertex, fill=myblue] (5) at (90:0.5cm) {};
    \node[svertex, fill=myblue] (6) at (40:0.4cm) {};
    \node[svertex, fill=myblue] (7) at (-40:0.5cm) {};

    \draw (1) -- (2);
    \draw (1) -- (3);
    \draw (1) -- (4);
    \draw (1) -- (5);
    \draw (1) -- (6);
    \draw (1) -- (7);
    \end{scope}
\end{tikzpicture}
    \caption{The three color-change rules considered here.}
    \label{fig:rules}
\end{figure}

We will also use the constants $Z$, $T$, and $D$ to refer to the respective rule. As mentioned in the introduction, the $Z$-rule was considered for many different problems. In particular, it is the defining rule of \emph{power dominating sets}~\cite{haynes2002domination} and \emph{zero forcing sets}~\cite{aim2008zero}. A combination of $Z$-rule and $T$-rule was introduced in~\cite{ima2010minimum} for \emph{skew zero forcing sets}. The $D$-rule was used in~\cite{barioli2013parameters} in combination with the $Z$-rule to define the \emph{loop zero forcing sets}. 

Given a graph $G$ and some non-empty rule set $\R \subseteq \ZTD$, we define a set $S \subseteq V(G)$ to be an \emph{$\R$-forcing set} of $G$ if the following procedure is able to color all vertices of $G$ blue:
\begin{enumerate}
    \item Color the vertices of $S$ blue and all vertices of $V(G) \setminus S$ white.
    \item Iteratively apply some rule of $\R$ to the vertices of $G$.
\end{enumerate}

If the vertices in $S$ induce a connected subgraph, then we say that $S$ is a \emph{connected \R-forcing set}~\cite{brimkov2016characterizations}. If the vertices in $S$ induce a graph without isolated vertices, then $S$ is a \emph{total \R-forcing set}~\cite{davila2015bounding}. These definitions give rise to the following parameterized problem.
\begin{center}
\fbox{\parbox{0.98\textwidth}{\noindent
{\FORCE{\R}} for some non-empty rule set $\R \subseteq \ZTD$\\[.8ex]
\begin{tabular*}{.93\textwidth}{rl}
{\em Input:} & A graph $G$, $k \in \N$.\\
{\em Parameter:} & $k$ \\
{\em Question:} & Is there an $\R$-forcing set of $G$ of size at most $k$?\end{tabular*}
}}
\end{center}

We will use the name \CON{\FORCE{\R}} and \TOT{\FORCE{\R}} for the variants where we look for a connected or total \R-forcing sets, respectively. Note that for better readability, we will leave out the set parentheses of the rule set~$\R$ when considering concrete variants of \FORCE{\R}. The problem $\FORCE{Z}$ has been considered extensively before  under the name \ZFS{}. As mentioned before, there is a strong relation between these forcing set problems and the parametric duals of the Grundy domination problems. The dual version of \GN{} is defined as follows.

\begin{center}
\fbox{\parbox{0.98\textwidth}{\noindent
{\DUAL{\GN}}\\[.8ex]
\begin{tabular*}{.93\textwidth}{rl}
{\em Input:} & A graph $G$ with $n$ vertices, $k \in \N$.\\
{\em Parameter:} & $k$ \\
{\em Question:} & Is there a dominating sequence of $G$ of length at least $n -k$?
\end{tabular*}
}}
\end{center}

We define the dual versions of  \TN{}, \ZN{}, and \LN{} equivalently. Using this terminology, we can state a result of Lin~\cite{lin2019zero} as follows.

\begin{theorem}[Lin~{\cite[Lemma 2.1]{lin2019zero}}]\label{thm:lin}
The following equivalences between problems hold.
  \begin{itemize}
      \item \FORCE{Z} $\equiv$ \DUAL{\ZN}
      \item \FORCE{Z,D} $\equiv$ \DUAL{\GN}
      \item \FORCE{Z,T} $\equiv$ \DUAL{\TN}
      \item \FORCE{Z,T,D} $\equiv$ \DUAL{\LN}
  \end{itemize}    
\end{theorem}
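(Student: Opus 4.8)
The plan is to establish each of the four equivalences by proving the underlying numerical statement: for the appropriate rule set $\R$ and sequence type, a graph $G$ on $n$ vertices has an $\R$-forcing set of size at most $k$ if and only if it admits a sequence of the corresponding type of length at least $n-k$. The guiding principle is a dictionary between the two processes in which the \emph{footprinted} vertices of a sequence correspond to the vertices that get \emph{colored} during a forcing run, so that a sequence of length $\ell$ matches a forcing set of size $n-\ell$. The four cases then line up neatly: the presence of the $D$-rule corresponds to letting the footprinting vertex $v_i$ count through its \emph{closed} neighborhood $N[v_i]$, so that it may footprint itself exactly as in $v \d v$; the presence of the $T$-rule corresponds to the earlier vertices blocking only through their \emph{open} neighborhoods $N(v_j)$. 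I would therefore set up a single argument parameterized by $\R$, writing $N\langle\cdot\rangle_1$ and $N\langle\cdot\rangle_2$ for the two brackets as in the definition of footprinting, and instantiate it four times.

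For the direction from a sequence to a forcing set, I would start from a sequence $(v_1,\dots,v_\ell)$, pick for each $i$ a footprinted vertex $f_i \in N\langle v_i\rangle_1 \setminus \bigcup_{j<i} N\langle v_j\rangle_2$, and set $S := V(G)\setminus\{f_1,\dots,f_\ell\}$. The first key step is to check that the $f_i$ can be taken pairwise distinct, so that $|S| = n-\ell$; this is immediate whenever $N\langle\cdot\rangle_1 \subseteq N\langle\cdot\rangle_2$, which covers the $Z$-, dominating, and total cases. I would then color $f_1,\dots,f_\ell$ in this order and argue that at step $i$ the only white vertices adjacent to $v_i$ are among $f_i,\dots,f_\ell$, of which only $f_i$ survives the blocking condition. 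Hence $f_i$ is forced by the rule matching its type: a self-footprint $f_i=v_i$ by the $D$-rule $v_i \d v_i$; a neighbor footprint whose forcing vertex $v_i$ is already blue by the $Z$-rule $v_i \z f_i$; and a neighbor footprint whose forcing vertex is still white (only in the $T$-cases) by the $T$-rule $v_i \t f_i$.

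For the converse I would run the forcing procedure from a size-$k$ set $S$, record the colorings in chronological order (each step colors one new vertex, so there are exactly $n-k$ of them), and read off the sequence from the forcing vertices in forward chronological order, letting each one footprint precisely the vertices it forced. The central lemmas are that every vertex is colored at most once and that the chronological ``unique white neighbor'' conditions translate into $f_i \notin \bigcup_{j<i} N\langle v_j\rangle_2$: if an earlier-acting forcer $v_j$ were adjacent to $f_i$, then $f_i$ — white until its own later step — would have been a second white neighbor of $v_j$, contradicting uniqueness. Crucially, a $T$-forcer $v_j$ is white at its step and may itself be colored later, so it must \emph{not} block itself; this is exactly why the $T$-rule forces the open bracket $N\langle v_j\rangle_2 = N(v_j)$ and no contradiction arises from $f_i = v_j$.

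I expect the main obstacle to be the simultaneous presence of the $T$- and $D$-rules in the \FORCE{Z,T,D} $\equiv$ \textsc{Dual \LN} case. Two complications appear at once. First, a single vertex may act twice, first $T$-forcing a neighbor while white and then $D$-forcing itself, so it must occupy one position of the $L$-sequence while footprinting two vertices, which forces care in how the recorded forces are grouped and ordered. Second, the clean distinctness argument breaks down precisely here, since $N[v_{i'}]\not\subseteq N(v_{i'})$ means a self-footprint can collide with a neighbor footprint chosen later; resolving this requires a dedicated selection rule that separates self-footprints from neighbor footprints. Handling these two points, while keeping the $Z$-, total-, and dominating cases as clean specializations of the same framework, is where the real work lies.
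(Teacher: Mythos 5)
The paper does not prove this theorem itself (it is cited from Lin), but the proof of \cref{thm:local-l} shows the intended argument: the forcing set is the complement of the \emph{sequence vertices}, the color changes are applied in \emph{reverse} sequence order, and the footprinted vertex plays the role of the forcer. Your correspondence is the mirror image -- forcing set equals the complement of the \emph{footprinted} vertices, processed in forward order, with the sequence vertex as forcer -- and for \FORCE{Z}, \FORCE{Z,D}, and \FORCE{Z,T} it does go through: distinctness of the footprints follows from $N\langle\cdot\rangle_1\subseteq N\langle\cdot\rangle_2$, the blocking conditions guarantee that $f_i$ is the unique white neighbor of $v_i$ at step $i$, and in the converse each vertex forces at most once, so the forcers form a sequence of length exactly $n-k$.

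The gap is the fourth equivalence, and it is not a matter of ``care'': both halves of your construction genuinely fail there. For the forward direction, take the path $v_3-v_1-v_4-v_2$ with the $L$-sequence $(v_1,v_2,v_3,v_4)$; here $N[v_2]\setminus N(v_1)=\{v_2\}$ and $N[v_4]\setminus\bigl(N(v_1)\cup N(v_2)\cup N(v_3)\bigr)=\{v_2\}$, so no system of distinct footprints exists and your set $V(G)\setminus\{f_1,\dots,f_\ell\}$ has size at least $1$, whereas the empty set is a $\{Z,T,D\}$-forcing set of $P_4$. For the converse, take $K_{1,3}$ with center $x$ and leaves $a,b,c$: the run $a\t x$, $a\d a$, $b\d b$, $c\d c$ shows $\emptyset$ is a $\{Z,T,D\}$-forcing set, but the distinct forcers are only $\{a,b,c\}$, so ``forcers in chronological order'' cannot yield an $L$-sequence of length $4$ (the vertex $x$ is colored but never forces, and $a$ forces twice). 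Both defects disappear if you switch to the other correspondence -- sequence vertices as the colored vertices in reverse chronological order, footprints as the forcers -- which is exactly what the proof of \cref{thm:local-l} does; within your framework they are not repairable, so the \FORCE{Z,T,D}$\,\equiv\,$\textsc{Dual \LN} case remains unproven.
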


Note that the first equivalence has been proven already by Bre\v{s}ar et al.~\cite{bresar2017grundy}. Further note that the proofs of Bre\v{s}ar et al.~\cite{bresar2017grundy} and Lin~\cite{lin2019zero} present more structural insights into the solutions of these two problem variants. They show that the complement set of the respective version of a dominating sequence is an $\R$-forcing set and vice versa. Even stronger, the ordering of the sequence directly implies the ordering of the color change rule applications and vice versa.

The following lemma about the $T$-rule will help us throughout the section.

\begin{lemma}\label{lemma:t-independent}
    Let $\R \subseteq \{Z,T,D\}$ with $T \in \R$. Let $S$ be an $\R$-forcing set of a graph $G$ and let $(R_1, \dots, R_\ell)$ be the sequence of $\R$-rules applied to color $G$ blue. Let $A$ be those vertices of $G$ to which a $T$-rule is applied, i.e., vertex $v$ is in $A$ if and only if there is some rule $R_i = v \t w$ for some $w \in V(G)$. Then, $A$ induces an independent set in $G$. In particular, there is no rule $R_i = v \t w$ with $w \in A$.
\end{lemma}

\begin{proof}
    Assume for contradiction that there is an edge $uv \in E(G)$ with $u,v \in A$. W.l.o.g.~we may assume that $u$ applies its $T$-rule before $v$, i.e., $v$ is still white when $u$ applies its $T$-rule. This implies that $v$ must be the neighbor of $u$ that is colored blue by the $T$-rule. However, this contradicts the fact that some $T$-rule will be applied to $v$ later.
\end{proof}

\subsection{A Short Intermezzo: Local L-Sequences}\label{sec:lln}

There are seven non-empty subsets of $\ZTD$ of which only four are mentioned in \cref{thm:lin}. In this subsection, we will consider the other three. The results are not needed to understand the following sections. So the impatient reader can skip this subsection and can directly jump to \cref{sec:treewidth}.

Two of the remaining subsets of $\ZTD$ are rather uninteresting as the following observation shows.

\begin{observation}\label{obs:vc}
The following statements hold.
\begin{itemize}
\item A set $S \subseteq V(G)$ is a $\{D\}$-forcing set of $G$ if and only if $S$ is a vertex cover of $G$.
\item A set $S \subseteq V(G)$ is a $\{T\}$-forcing set of $G$ if and only if $S = V(G)$.
\end{itemize}
\end{observation}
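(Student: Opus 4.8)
The plan is to treat the two items separately and, in each case, prove both directions of the equivalence; both reduce to a short observation about when the corresponding rule can possibly fire. Throughout I work with the color change process started from $S$ blue and $V(G)\setminus S$ white.

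For the first item, I would begin with the easy direction: if $S$ is a vertex cover, then $V(G)\setminus S$ is independent, so every white vertex $v$ has all of its neighbors in $S$. These are blue from the outset, so the $D$-rule applies to $v$ immediately; since coloring a vertex never turns another vertex white, every white vertex can be colored (in one round), and $S$ is a $\{D\}$-forcing set. For the converse I would argue by contraposition. If $S$ is not a vertex cover, fix an edge $uv$ with $u,v\notin S$, and call a vertex \emph{bad} if it lies outside $S$ and has at least one neighbor outside $S$; note $u$ and $v$ are bad. I claim no bad vertex is ever colored. Suppose otherwise and let $t$ be the first time step at which some bad vertex $x$ turns blue. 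As $x\notin S$ it was white, so it is colored by the $D$-rule, whence every neighbor of $x$ is blue just before step $t$; in particular a neighbor $y\notin S$ (which exists since $x$ is bad) is blue, and since $y\notin S$ it was colored at some step strictly before $t$. But $y$ has the neighbor $x\notin S$, so $y$ is itself bad, contradicting the minimality of $t$. Hence $u$ and $v$ stay white forever, so $S$ is not a $\{D\}$-forcing set.

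For the second item, the direction $S=V(G)\Rightarrow$ forcing set is immediate, as there is nothing left to color. For the converse I would use a ``last vertex'' argument: assume $S\neq V(G)$, so at least one vertex is white initially, and suppose for contradiction that the process eventually colors all of $G$. Consider the last vertex $w$ to turn blue; just before it is colored, $w$ is the \emph{unique} white vertex. But the $T$-rule colors $w$ only when some white vertex $v$ adjacent to $w$ has $w$ as its only white neighbor, and such a $v\neq w$ cannot exist when $w$ is the only white vertex. This contradiction shows the last remaining white vertex is never colorable, so $S=V(G)$ is necessary.

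Neither part presents a genuine obstacle; the only step requiring care is the converse of the first item, where one must set up the notion of a \emph{bad} vertex correctly and run the minimality argument so that the witnessing neighbor $y$ is seen to be bad as well. The $T$-rule statement is then essentially immediate from the observation that the rule can never fire when a single white vertex remains.
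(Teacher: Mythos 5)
Your proof is correct and follows essentially the same elementary route as the paper. For the first item, the paper simply asserts that the $D$-rule ``can never be applied'' to either endpoint of an uncovered edge; your minimality argument over \emph{bad} vertices makes that circular dependency explicit, which is a more careful rendering of the same idea. For the second item the paper argues from the other end: if $w \notin S$ is colored by $v \t w$, then afterwards all neighbors of the white vertex $v$ are blue, so $v$ itself can never be colored; you instead look at the last vertex to turn blue and note the $T$-rule cannot fire when only one white vertex remains. Both are one-line contradictions and equally valid; no gaps.
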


\begin{proof}
    Let $S$ be a $\{D\}$-forcing set of $G$. If there is an edge $xy$ with $x,y \in V(G) \setminus S$, then the $D$-rule can never be applied to $x$ or $y$. So $S$ is a vertex cover of $G$. Conversely, every vertex cover is a $\{D\}$-forcing set since we can apply the $D$-rule to every white vertex.

    Let $S$ be a $\{T\}$-forcing set of $G$. Assume for contradiction that $w \in V(G) \setminus S$. Then $w$ becomes blue because the $T$-rule $v \t w$ is applied for some white $v \in V(G) \setminus S$ with $v \neq w$. However, $v$ has also be colored blue by a $T$-rule; a contradiction to \cref{lemma:t-independent}.
\end{proof}

The only remaining subset of $\ZTD$ is $\TD$. The problem $\FORCE{T,D}$ can also be related to some Grundy domination problem that -- to the best of our knowledge -- has not been considered so far. This problem concerns so-called \emph{local $L$-sequences}.

\begin{definition}
    A sequence $(v_1,\dots,v_k)$ of vertices of $G$ is a \emph{local $L$-sequence} of $G$ if for all $i \in [k]$ it holds that $(N[v_i] \cap \{v_1, \dots, v_i\}) \setminus \bigcup_{j=1}^{i-1} N(v_j)$ is not empty.
\end{definition}

So, local $L$-sequences are defined similarly to $L$-sequences, except that a vertex is only allowed to footprint a vertex that is to the left of itself in the sequence. The following theorem presents the relation to $\TD$-forcing sets. 

\begin{theorem}\label{thm:local-l}
    Let $G$ be a graph with $n$ vertices. There is a local $L$-sequence of length $k$ in $G$ if and only if there is $\TD$-forcing set of size $n - k$ in $G$.
\end{theorem}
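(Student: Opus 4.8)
The plan is to mirror the approach behind \cref{thm:lin}: I will exhibit a direct correspondence in which the $\{T,D\}$-forcing set is precisely the complement $S = V(G) \setminus \{v_1, \dots, v_k\}$ of the set of vertices appearing in the local $L$-sequence, and in which the order of the color change rule applications is the \emph{reverse} of the order of the sequence. Since the vertices of a local $L$-sequence are pairwise distinct, $|S| = n - k$, so the two size bounds match automatically; the entire content is that the forcing succeeds exactly when the footprint condition holds.

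For the forward direction, suppose $(v_1, \dots, v_k)$ is a local $L$-sequence and color $S = V(G) \setminus \{v_1, \dots, v_k\}$ blue. I will force the vertices $v_k, v_{k-1}, \dots, v_1$ blue in that order and maintain the invariant that, just before $v_i$ is forced, the white vertices are exactly $\{v_1, \dots, v_i\}$. Let $u_i$ be a vertex footprinted by $v_i$, so $u_i \in (N[v_i] \cap \{v_1, \dots, v_i\}) \setminus \bigcup_{j=1}^{i-1} N(v_j)$, and split into two cases. If $u_i = v_i$, then $v_i \notin \bigcup_{j=1}^{i-1} N(v_j)$, so $v_i$ has no neighbor among $v_1, \dots, v_{i-1}$; all its neighbors lie in $S$ or in $\{v_{i+1}, \dots, v_k\}$ and are therefore already blue, and the $D$-rule $v_i \d v_i$ applies. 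If $u_i \neq v_i$, then $u_i = v_m$ with $m < i$, $v_m \in N(v_i)$, and $v_m \notin \bigcup_{j=1}^{i-1} N(v_j)$; the latter forces $v_m$ to be non-adjacent to $v_1, \dots, v_{i-1}$, so within the current white set $\{v_1, \dots, v_i\}$ the only white neighbor of $v_m$ is $v_i$, and the $T$-rule $v_m \t v_i$ applies. In either case $v_i$ becomes blue and the invariant is preserved, so $S$ is a $\{T,D\}$-forcing set.

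For the converse, given a $\{T,D\}$-forcing set $S$ of size $n-k$, I record the $k$ vertices of $V(G) \setminus S$ in the order $w_1, \dots, w_k$ in which they are colored blue and define $v_i := w_{k-i+1}$, so that the white set just before $v_i = w_{k-i+1}$ is forced is exactly $\{v_1, \dots, v_i\}$. I then read off the footprint from whichever rule colored $v_i$: if $v_i$ turned blue via the $D$-rule, every neighbor of $v_i$ was already blue, so $v_i$ has no neighbor in $\{v_1, \dots, v_{i-1}\}$ and hence $v_i$ footprints itself; if $v_i$ turned blue via a $T$-rule $v_m \t v_i$, then $v_m$ was white (hence $m < i$), adjacent to $v_i$, and had $v_i$ as its unique white neighbor, which forces $v_m \notin \bigcup_{j=1}^{i-1} N(v_j)$, so $v_i$ footprints $v_m$. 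In both cases the relevant set is nonempty, so $(v_1, \dots, v_k)$ is a local $L$-sequence.

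The step I expect to require the most care is pinning down the correct correspondence in the first place. Unlike the zero forcing correspondences, the footprinted vertices of a local $L$-sequence need not be distinct --- on a single edge $ab$, for instance, the sequence $(a,b)$ has both $a$ and $b$ footprinting $a$ --- so one cannot simply take the complement of the set of footprinted vertices. The right object is the complement of the set of \emph{sequence} vertices, and the key invariant is that the white set equals the prefix $\{v_1, \dots, v_i\}$ exactly when $v_i$ is processed; combined with the clean dichotomy ``self-footprint $\leftrightarrow$ $D$-rule'' and ``footprinting a left-neighbor $\leftrightarrow$ $T$-rule'', this is what makes both directions go through.
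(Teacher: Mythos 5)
Your proposal is correct and follows essentially the same route as the paper's proof: both directions take the forcing set to be the complement of the sequence vertices, run the rules in reverse order of the sequence, maintain the invariant that the white set is the prefix $\{v_1,\dots,v_i\}$, and use the dichotomy that a self-footprint corresponds to a $D$-rule and a footprint of an earlier sequence vertex corresponds to a $T$-rule. The only differences are cosmetic (your indexing of the forced vertices versus the paper's indexing of the rule applications).
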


\begin{proof}
    The proof follows the same arguments as the proof of \cref{thm:lin} given by Lin~\cite{lin2019zero}. First assume there is a local $L$-sequence $(v_1, \dots, v_k)$ in $G$. For every $i \in [k]$, let $u_i$ be a vertex contained in $(N[v_i] \cap \{v_1, \dots, v_i\}) \setminus \bigcup_{j=1}^{i-1} N(v_j)$. Then we claim that $V(G) \setminus \{v_1, \dots, v_k\}$ is a $\TD$-forcing set. Let $(R_k, \dots, R_1)$ be a sequence of color change rules where $R_i$ is $v_{i} \d v_{i}$ if $v_{i} = u_{i}$ or, otherwise, $R_i$ is $u_i \t v_i$. 

    We claim that before applying rule $R_i$, all vertices in $\{v_1, \dots, v_i\}$ are white and all other vertices are blue. Note that this holds for the first rule $R_k$. Now let $i \leq k$. If $R_i$ is $v_i \d v_i$, then, by definition, $u_i = v_i$. Therefore, $v_i$ is not in the neighborhood of any vertex in $\{v_1, \dots, v_{i-1}\}$, i.e., every neighbor of $v_i$ is blue. Thus, we can apply $v_i \d v_i$. If $R_i$ is $u_i \t v_i$, then $u_i$ is not in the neighborhood of any vertex in $\{v_1, \dots, v_{i-1}\}$. Hence, $v_i$ is the only white neighbor of $u_i$. Since $u_i \in \{v_1, \dots, v_{i-1}\}$, it also holds that $u_i$ is white. Hence, we can apply $u_i \t v_i$. Note that in both cases only $v_i$ becomes blue. Thus, for $R_{i-1}$, all vertices in $\{v_1, \dots, v_{i-1}\}$ are white and all other vertices are blue.

    For the reverse direction, let $S$ be a $\TD$-forcing set of $G$ and let $k = |S|$. Let $(R_1, \dots, R_\ell)$ be the sequence of $T$-rules and $D$-rules used to color all vertices of $V(G) \setminus S$ blue. If $R_i$ is a $D$-rule, then we call the corresponding vertex $w_i$, i.e., $R_i = w_i \d w_i$. Similarly, if $R_i$ is a $T$-rule, then $R_i = v_i \t w_i$. Note that every vertex that is not in $S$ appears exactly once as $w_i$. This implies that $\ell = n - k$. Note that $w_i$ can additionally appear also once as $v_h$ with $h < i$ if $R_i$ is a $D$-rule. We claim that $\sigma = (w_\ell, \dots, w_1)$ is a local $L$-sequence of $G$. Let $i \in [\ell]$ be arbitrary. If $R_i = w_i \d w_i$, then all neighbors of $w_i$ have been blue. 
    This implies that all neighbors of $w_i$ are either part of $S$ or are some vertex $w_h$ with $h < i$. Therefore, none of the vertices to the left of $w_i$ in $\sigma$ is a neighbor of $w_i$ and $w_i \in N[w_i] \cap \{w_i, \dots, w_\ell\} \setminus \bigcup_{j=i+1}^{\ell} N(w_j)$. If $R_i = v_i \t w_i$, then $v_i$ has been white. This implies that $v_i = w_h$ for some $h > i$. Furthermore, $w_i$ has been the only white neighbor of $w_h$. Thus, none of the vertices in $\{w_{i+1}, \dots, w_\ell\} \setminus \{w_h\}$ is adjacent to $w_h$ and $w_h \in N[w_i] \cap \{w_i, \dots, w_\ell\} \setminus \bigcup_{j=i+1}^{\ell} N(w_j)$.
\end{proof}

We define the (parameterized) problem \LLN{} where one has to determine whether there is a local $L$-sequence of length $\geq k$. Note that, due to \cref{thm:local-l}, \FORCE{T,D} is equivalent to \DUAL{\LLN}. 

In the following, we will settle the (parameterized) complexity of \LLN. We start with a lemma that relates local $L$-sequences to the \emph{independence number} $\alpha(G)$ of a graph $G$, i.e., the size of the largest independent set of $G$.

\begin{lemma}\label{lemma:local-independent}
    Let $G$ be a graph. Every local $L$-sequence of $G$ of length $\ell$ contains an independent set of size $\left\lceil \frac{\ell}{2}\right\rceil$. Therefore, if $k$ is the length of the longest local $L$-sequence of $G$, then $\alpha(G) \leq k \leq 2 \alpha(G)$.
\end{lemma}

\begin{proof}
    Every ordering $(v_1, \dots, v_\ell)$ of an independent set forms a local $L$-sequence since $v_i \in (N[v_i] \cap \{v_1, \dots, v_i\}) \setminus \bigcup_{j=1}^{i-1} N(v_j)$. Therefore, $\alpha(G) \leq k$.

    So let $\sigma = (v_1, \dots, v_\ell)$ be an arbitrary local $L$-sequence. We distinguish two different types of vertices in that sequence. We call a vertex $v_i$ an $A$-vertex if it is not adjacent to any vertex $v_j$ with $j < i$. Otherwise, we call the vertex $B$-vertex. 
    
    By definition, the $A$-vertices form an independent set of $G$. Thus, their number is bounded by $\alpha(G)$. Let $v_i$ be a $B$-vertex. Since it has a neighbor to the left in $\sigma$, $v_i \notin N[v_i] \setminus \bigcup_{j=1}^{i-1} N(v_j)$. Therefore, $v_i$ footprints some vertex that is to the left of it in $\sigma$. This vertex cannot be a $B$-vertex as these vertices have already been seen by other vertices. Thus, it is an $A$-vertex. Furthermore, every $A$-vertex can be footprinted at most once by a vertex different from itself. Therefore, there is an injective mapping from the $B$-vertices to the $A$-vertices and the number of $B$-vertices in $\sigma$ is bounded by the number of $A$-vertices. This implies that $\sigma$ contains an independent set of size at least $\left\lceil \frac{\ell}{2}\right\rceil$ and, thus, $\ell \leq 2\alpha(G)$.
\end{proof}

This directly implies an \XP{} algorithm for \CONC{\LLN} when parameterized by $\alpha(G)$.

\begin{corollary}\label{corol:lln-alpha}
    \CONC{\LLN} can be solved in $n^{\O(\alpha(G))}$ time given graph $G$. In particular, it can be solved in polynomial time on co-bipartite graphs.
\end{corollary}

This is a contrast to the other Grundy domination problems considered in \cref{sec:grundy} which are \NP-complete on co-bipartite graphs and, hence, on graphs of independence number~2 (see \cref{corol:tn-cb}). Nevertheless, we can prove that \CONC{\LLN} is \WOne-complete and \NP-complete.

\begin{theorem}
    \CONC{\LLN} is \NP-complete and \WOne-complete when it is parameterized by the solution size plus the graph's independence~number.
\end{theorem}

\begin{proof} 
We first present the \WOne-hardness (and \NP-hardness) proof. Then we explain how we can adapt the proof of \cref{thm:w3} to also show \WOne-completeness.
\proofsubparagraph{Hardness}
    We reduce from \MCP{}. Let $G$ be the input graph and $V^1, \dots, V^k$ be the partition of $V(G)$ into independent sets. We consider the complement $\overline{G}$ of $G$ and construct a graph $G'$ from $\overline{G}$ as follows. For every $i \in [k]$, we add a vertex $x^i$ that is adjacent to all vertices of $V^i$. Furthermore, we add two vertices $y^{k+1}$ and $x^{k+1}$ that are adjacent and form the set $V^{k+1}$. Let $X = \{x^i \mid i \in [k+1]\}$. We add edges such that the set $X \cup \{y^{k+1}\}$ forms a clique. Note that $\alpha(G') \leq k+1$. We claim that $G$ has a multicolored clique of size $k$ if and only if $G'$ has a local $L$-sequence of length $2k+2$.

    First assume that $G$ has a clique $S = \{v^1_{p_1}, \dots, v^k_{p_k}\}$, i.e., $S$ forms an independent set in~$G'$. Consider the sequence $\sigma = (v^1_{p_1}, \dots, v^k_{p_k}, y^{k+1}, x^{k+1}, x^1, \dots, x^k)$. Note that the vertices of this sequence induce a connected subgraph of $G'$. Every vertex of $S$ footprints itself. The same holds for $y^{k+1}$. Note that all these vertices may be footprinted a second time since we consider $L$-sequences. Vertex $x^{k+1}$ footprints $y^{k+1}$. Every other vertex $x^i$ footprints $v^i_{p_i}$. Therefore, $\sigma$ is a connected local $L$-sequence of $G'$ of length $2k+2$.

    Now let $\sigma$ be an arbitrary local $L$-sequence of length $2k +2$. Due to \cref{lemma:local-independent}, $\sigma$ contains an independent set $S$ of size $k+1$. This set must contain exactly one vertex of every clique $V^i \cup \{x^i\}$ in $G'$ with $i \in [k+1]$. As both $x^{k+1}$ and $y^{k+1}$ are adjacent to all vertices in $X$, there cannot be any vertex $x^i$ in $S$ with $i \leq k$. Therefore, $S$ contains $k$ vertices of $G$ that form an independent set in $G'$ and, thus, a clique in $G$.

    \proofsubparagraph{Completeness}
    The \NP-completeness trivially holds since the local $L$-sequence is a certificate that can be checked in polynomial time. 
    For the \WOne-completeness, we adapt the proof of \cref{thm:w3} as follows. We keep the variables $x(v,i)$ that are \true{} if and only if vertex $v$ is the $i$-th vertex in the sequence. We also keep formula $X$ that ensures that every vertex $v$ is assigned to at most one position and to every position~$i$ there is assigned at most one vertex. 
    
    Since the vertices in a local $L$-sequence are only allowed to footprint vertices that are to left of them in the sequence, we have to change the definition of the $y$-variables. We use variables $y(j,i)$ with $j \leq i$. If this variable is set to \true{}, then the $i$-th vertex in the sequence footprints the $j$-th vertex in the sequence. We redefine the formula $Y$ as follows.
    \begin{align*}
        Y &:= \bigwedge_{i \in [k]} \bigwedge_{j < j' \leq i} \big(\lnot y(j,i) \lor \lnot y(j',i)\big)
    \end{align*}
    So $Y$ ensures that for every position $i$, there is at most one variable $y(j,i)$ set to \true{}. We use the following subformulas to guarantee that the assignment matches the adjacencies.
    \begin{align*}
        A^{ijj'}_{wv} &:= \lnot y(j,i) \lor \lnot x(w,j) \lor \lnot x(v,j') & \forall i,j,j' \in [k] \text{ with } j,j' \leq i, \\ && \forall v,w \in V(G) \text{ with } w \in N(v) \\
        B^{ij}_{wv} &:= \lnot y(j,i) \lor \lnot x(w,j) \lor \lnot x(v,i) & \forall i,j \in [k] \text{ with } j \leq i, \\ && \forall v,w \in V(G) \text{ with } w \notin N[v]
    \end{align*}
    The conjunction $A$ of all formulas $A^{ijj'}_{wv}$ ensures that if the vertex at position $i$ footprints the vertex at position $j$, then the vertex at position $j$ is not in the open neighborhood of any vertex to the left of position $i$. The conjunction $B$ of all formulas $B^{ij}_{wv}$ ensures that if the vertex at position $i$ footprints the vertex at position $j$, then one of them is contained in the closed neighborhood of the other. So the formula $F = X \land Y \land A \land B$ has a satisfying assignment with weight $2k$ if and only if there is a local $L$-sequence of $G$ of size $k$. For the connected variant, we can use the same approach as in the proof of \cref{thm:w3}.
\end{proof}

Due to \cref{thm:local-l}, this result implies \NP-completeness of $\FORCE{T,D}$.

\begin{corollary}
    $\FORCE{T,D}$ is \NP-complete.
\end{corollary}

Since the used reduction from \MCP{} increases the parameter value only linearly, \cref{thm:lower} implies the following.

\begin{corollary}
Assuming the Exponential Time Hypothesis, there is no $f(t) n^{o(t)}$ time algorithm for \CONC{\LLN} for any computable function~$f$ where $t = k + \alpha(G)$ and $n$ is the number of vertices of the graph.
\end{corollary}

\subsection{FPT Algorithms for Parameter Treewidth}\label{sec:treewidth}

Bhyravarapu et al.~\cite{bhyravarapu2025parameterized} presented the idea for an \FPT{} algorithm for \ZFS{} when parameterized by treewidth which is based on an approach of Guo et al.~\cite{guo2008improved} for \PDS{}. Instead of solving the problem directly, it solves an equivalent edge orientation problem instead. Here, we will present a more direct algorithm for all problems \FORCE{\R}. Nevertheless, this approach also uses some directed graphs as auxiliary substructures.

Our algorithm will use an extended version of \emph{nice} tree-decompositions. In these decompositions, the tree is considered to be a rooted binary tree. Furthermore, there are particular types of nodes. We will use a tree decomposition consisting of the following five node types.

\begin{description}
    \item[Leaf] Such a node is a leaf of $T$. Its bag is empty. 
    \item[Introduce Node] Such a node $t$ has exactly one child $t'$. There is a vertex $v \notin X_{t'}$ such that $X_{t} = X_{t'} \cup \{v\}$. We say that $v$ \emph{is introduced} at $t$. 
    \item[Forget Node] Such a node $t$ has exactly one child $t'$. There is a vertex $v \in X_{t'}$ such that $X_{t} = X_{t'} \setminus \{v\}$. We say that $v$ \emph{is forgotten} at $t$. 
    \item[Rule Node] Such a node $t$ has exactly one child $t'$ and it holds that $X_t = X_{t'}$. Furthermore, the parent of $t$ is a forget node. These tree nodes will be the place of our algorithm where we decide which particular rules are applied to the forgotten vertex.
    \item[Join Node] Such a node has exactly two children $t_1$ and $t_2$ and it holds that $X_t = X_{t_1} = X_{t_2}$.
\end{description}

Note that we also assume that the bag of the root node of the tree is empty, i.e., for every vertex of the graph there is exactly one forget node where this node is forgotten. For every $t \in V(T)$, we write $N_t(v)$ as short form of $N(v) \cap X_t$. Furthermore, for every $t \in V(T)$, we define $V^t := \{ v \mid v \in X_{t'} \text{ for some descendant $t'$ of $t$}\}$, where $t$ is considered a descendant of itself. As usual, our algorithm consists of a dynamic programming approach. For every node/bag of the tree decomposition, we will consider certain signatures that describe which decision we have made for the vertices within the bag. These signatures are six-tuples of the form $\Omega = (\Gamma, \Phi, b_\Gamma, b_\Phi, \D, \omega)$ whose entries are explained in the following.

\begin{description}
    \item[$\Gamma$-Type] For every vertex $v$, we have a value $\Gamma(v) \in \R \cup \{\bot\}$ describing how $v$ becomes blue. Here, $Z$, $T$, and $D$ stand for the respective rule, while $\bot$ describes that $v$ is part of the $\R$-forcing set.
    \item[$\Phi$-Type] For every vertex $v$, we have a value $\Phi(v) \in (\R \setminus \{D\}) \cup \{\bot\}$ describing whether $v$ colors some other vertex blue using some rule of $\R$. If this is not the case, then $\Phi(v) = \bot$. 
    \item[Function $b_\Gamma$] For every vertex $v$, we have a boolean value $b_\Gamma(v)$. This value is $\true$ if and only if $\Gamma(v) = \bot$ or we have already fixed some vertex $x \in V^t$ to be the vertex that colors $v$ blue by applying the rule $x \crule{\Gamma(v)} v$.
    \item[Function $b_\Phi$] For every vertex $v$, we have a boolean value $b_\Phi(v)$. This value is $\true$ if and only if $\Phi(v) = \bot$ or we have already fixed some vertex $x \in V^t$ to be the vertex that is colored blue by applying the rule $v \crule{\Phi(v)} x$.
    \item[Dependency Graph \D] The dependency graph $\mathfrak{D}$ is a directed graph. For every vertex $v \in X_t$, the digraph $\D$ contains an \emph{event node} $\gamma_v$ and, if $\Phi(v) \neq \bot$, it also contains an event node $\phi_v$. The vertex $\gamma_v$ represents the event when $v$ is made blue. The vertex $\phi_v$ represents the event when the $Z$-rule or $T$-rule is applied to $v$. An arc from one of these event nodes $x$ to another event node $y$ implies that the event represented by $x$ must happen before the event represented by $y$.
    \item[Weight $\omega$] Every signature is associated with a value $\omega \in \{0, \dots, n\} \cup \infty$ describing how many vertices in $V^t \setminus X_t$ have been assigned the $\Gamma$-type $\bot$, or, in case that $\omega = \infty$, describing that $\D$ is not acyclic. We call signatures with $\omega < \infty$ \emph{valid}.
\end{description}

The main ideas of the algorithm are the following:

\begin{itemize}
    \item When a vertex $v$ is introduced in its introduce node, then we fix the $\Gamma$- and $\Phi$-type of $v$, i.e., we specify which rule type should make $v$ blue and whether some rule is applied to $v$ to make some other vertex blue. Note that there are three combinations of $(\Gamma(v), \Phi(v))$ that do not have to be considered: $(\bot, T)$, since a vertex that is part of $S$ cannot apply a $T$-rule, $(D,Z)$, since a vertex colored blue with a $D$-rule has no white neighbor anymore, and $(T,T)$, due to \cref{lemma:t-independent}.
    
    For both decisions, we do not yet specify the other vertex of the corresponding rule as this vertex might not be part of the bag. Nevertheless, fixing the type of rule already implies some arcs for the dependency graph $\D$.
    \item We lazily fix the vertices of the rules in the rule nodes. This means, we only specify the other vertex $w$ of a rule concerning $v$ if either $v$ or $w$ is forgotten in the next bag. Again, fixing these vertices implies several arcs of the dependency graph~$\D$. These arcs also concern vertices that are neither $v$ nor $w$. For example, if we want to apply rule $v \z w$, then all other neighbors of $v$ in the bag must be blue before the $Z$-rule is applied to $v$. Conversely, assume $x$ is a neighbor of $v$ and it is fixed that some $T$-rule is applied to $x$ that colors some vertex different from $v$ blue. Then $v$ must become blue before the $T$-rule is applied to $x$. 
\end{itemize}

In the following, we will describe the detailed procedures for every possible type of tree node~$t$. In this descriptions, we will say that we \emph{bypass $v$ in $\D$} if we consider the subgraph $\D_v$ of $\D$ induced by the vertices $\gamma_v$ and $\phi_v$ as well as the union of their in-neighborhoods and out-neighborhoods and add all edges of the transitive closure of $\D_v$ to~$\D$.

\begin{description}
    \item[Leaf] There is one signature with $\omega=0$, empty functions $\Gamma$, $\Phi$, $b_\Gamma$, $b_\Phi$, and empty digraph~$\D$.
    \item[Introduce Node] Let $v$ be the introduced vertex and let $t'$ be the child of $t$. For every valid signature of $t'$, we construct for every possible choice $(\Gamma(v), \Phi(v))$ from $(\R \cup \{\bot\}) \times (\R \setminus \{D\} \cup \{\bot\}) \setminus \{(T,T), (D,Z), (\bot,T)\}$ a signature of $t$ by keeping the weight $\omega$ and adding vertex $\gamma_v$ and, if $\Phi(v) \neq \bot$, vertex $\phi_v$ to $\D$. 
    
    If $\Gamma(v) \in \{D,\bot\}$, then we set $b_\Gamma = \true$ and if $\Phi(v) = \bot$, then we set $b_\Phi = \true$. If $\Phi(v) = Z$, then we add the arc $(\gamma_v, \phi_v)$ to $\D$ and if $\Phi(v) = T$, then we add the arc $(\phi_v,\gamma_v)$ to $\D$.
    \item[Forget Node] Let $v$ be the forgotten vertex and let $t'$ be the child of $t$. For every valid signature of $t'$ where $b_\Gamma(v) = \true$ and $b_\Phi(v) = \true$, we construct a signature of $t$ as follows. We remove $v$ from the functions $\Gamma$, $\Phi$, $b_\Gamma$ and $b_\Phi$.  We construct the new dependency graph by bypassing $v$ in $\D$ and deleting $\gamma_v$ and $\phi_v$ afterwards. If $\Gamma(v) = \bot$, then we increase the weight $\omega$ by one.
    \item[Join Node] Let $t_1$ and $t_2$ be the children of $t$. We say that the signatures $\Omega^1$ of $t_1$ and $\Omega^2$ of $t_2$ with $\Omega^i = (\Gamma^i, \Phi^i, b_\Gamma^i, b_\Phi^i, \D^i, \omega^i)$ are \emph{compatible} if 
    \begin{itemize}
        \item $\Gamma^1 = \Gamma^2$, $\Phi^1 = \Phi^2$,
        \item for every vertex $v$ with $\Gamma(v) \notin \{D,\bot\}$ it holds that $b^1_\Gamma(v) \land b^2_\Gamma(v) = \false$
        \item for every vertex $v$ with $\Phi(v) \neq \bot$ it holds that $b^1_\Phi(v) \land b^2_\Phi(v) = \false$
        \item the union of $\D^1$ and $\D^2$ is acyclic. 
    \end{itemize} For every of those pairs of signatures, we create a signature of $t$ by keeping $\Gamma^1$ and $\Phi^1$, setting $b_\Gamma(v) = b^1_\Gamma(v) \lor b^2_\Gamma(v)$ and $b_\Phi(v) = b^1_\Phi(v) \lor b^2_\Phi(v)$, $\D = \D^1 \cup \D^2$ and $\omega = \omega_1 + \omega_2$.
    \item[Rule Node] Let $t'$ be the child of $t$ and let $v$ be the vertex that is forgotten in the parent of~$t$. For every valid signature $\Omega$ of $t'$, we create signatures of $t$ in the following four-step procedure:
    \begin{enumerate}[(R1)]
        \item If $b_\Gamma(v) = \false$, then create for every $f \in N_t(v)$ with $\Phi(f) = \Gamma(v)$ and $b_\Phi(f) = \false$ a new signature $\Omega(f)$ which is a copy of $\Omega$. Vertex $f$ will be the vertex that colors $v$ blue. If $b_\Gamma(v) = \true$, i.e., the vertex coloring $v$ blue is already fixed, then create only one such copy $\Omega(\varnothing)$. \label{r1}
        \item For every of the signatures $\Omega(f)$ created in Step~(R\ref{r1}), do the following: If $b_\Phi(v) = \false$, then create for every $g \in N_t(v)$ with $\Gamma(g) = \Phi(v)$ and $b_\Gamma(g) = \false$, a new signature $\Omega(f,g)$ which is a copy of $\Omega(f)$. Vertex $g$ will be the vertex that is colored blue by $v$. If $b_\Phi(v) = \true$, i.e., the vertex colored blue by $v$ has already been fixed, then create only one such copy $\Omega(f,\varnothing)$.\label{r2}
        \newpage
        \item For every of the created signatures $\Omega(f,g)$, do the following:\label{r3}
         \begin{enumerate}
        \item If $f \neq \varnothing$, then add $(\phi_f, \gamma_v)$ to $\D$; set $b_\Gamma(v) = b_\Phi(f) = \true$; if $\Phi(v) = T$, then add arc $(\phi_v,\phi_f)$ to $\D$. This means that we apply rule $f \overset{\scriptscriptstyle \Gamma(v)}{\longrightarrow} v$.\label{r3a}
        \item If $g \neq \varnothing$, then add $(\phi_v, \gamma_g)$ to $\D$; set $b_\Gamma(g) = b_\Phi(v) = \true$; if $\Phi(g) = T$, then add arc $(\phi_g,\phi_v)$ to $\D$. This means that we apply rule $v \overset{\scriptscriptstyle \Phi(v)}{\longrightarrow} g$.\label{r3b}
        \item For each $w \in N_t(v)$ with $\Phi(w) \neq \bot$ and $w \neq f$, add arc $(\gamma_v,\phi_w)$ to $\D$.\label{r3c}
        \item For each $w \in N_t(v)$ with $\Gamma(w) = D$, add arc $(\gamma_v,\gamma_w)$ to $\D$.\label{r3c2}
        \item If $\Phi(v) \neq \bot$, then for each $w \in N_t(v)$ with $w \neq g$, add arc $(\gamma_w,\phi_v)$ to $\D$.\label{r3d}
        \item If $\Gamma(v) = D$, then for each $w \in N_t(v)$, add arc $(\gamma_w,\gamma_v)$ to $\D$.\label{r3e}
        \end{enumerate}
    \item  Check for every of the signatures $\Omega(f,g)$ whether its dependency graph $\D$ is acyclic. If not, then set its weight $\omega$ to $\infty$.
    \end{enumerate}
\end{description}

Note that if we construct two signatures of the same node that only differ in the weight~$\omega$, then we only keep the signature with smaller weight.

In the following, we will prove that there is a valid signature of the root node with weight $\omega = k$ if and only if there is an $\R$-forcing set of $G$ of size $k$. First assume that there is a valid signature $\Omega_{\tilde{t}}$ of the root node ${\tilde{t}}$. We define the \emph{signature tree $\T$ of $\Omega_{\tilde{t}}$} to be the set of signatures that contains $\Omega_{\tilde{t}}$ and exactly one signature $\Omega_t$ for every $t \in V(T)$ such that the signature $\Omega_t \in \T$ was created using the signature $\Omega_{t'}$ where $t'$ is a child of $t$. We define $\fR$ to be the set of rules that have been used to compute the signatures of $\T$. In particular, $\fR$ contains the rule $v \d v$ for every vertex $v$ with $\Gamma$-type $D$ in some signature of $\T$ as well as all the rules that have been applied in~(R\ref{r3a}) and (R\ref{r3b}) to compute the signatures of $\T$. Furthermore, we define $\D^*$ to be the union of the dependency graphs of all signatures in $\T$.

The following two lemmas present some properties of $\D^*$.

\begin{lemma}\label{lemma:rule-arcs}
The following statements are true.
    \begin{enumerate}
        \item If $p \z q$ is in $\fR$, then $(\gamma_p, \phi_p), (\phi_p, \gamma_q) \in \D^*$ and for all $r \in N(p)$ with $r \neq q$, it holds that $(\gamma_r, \phi_p) \in \D^*$.
        \item If $p \t q$ is in $\fR$, then $(\phi_p, \gamma_p), (\phi_p, \gamma_q) \in \D^*$ and for all $r \in N(p)$ with $r \neq q$, it holds that $(\gamma_r, \phi_p) \in \D^*$. Furthermore, if there is a rule $s \z p$ in $\fR$, then $(\phi_p, \phi_s) \in \D^*$.
        \item If $p \d p$ is in $\fR$, then for all $r \in N(p)$, it holds that $(\gamma_r, \gamma_p) \in \D^*$.
    \end{enumerate}
\end{lemma}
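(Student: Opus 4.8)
The plan is to prove all three statements by a uniform, direct trace through the arc-creation steps of the algorithm, working inside the fixed signature tree $\T$. The first observation I would record is that $\D^*$ is the \emph{union} of the dependency graphs of all signatures in $\T$; since the only operations that ever remove an event node (the bypass at forget nodes and the subsequent deletion of $\gamma_v,\phi_v$) act on later graphs, any arc that is inserted at some node of $\T$ survives in $\D^*$. Hence it suffices, for each arc claimed in the lemma, to exhibit one node of $\T$ whose processing inserts it while both endpoints still exist. The arcs stem from only two places: the introduce nodes, which insert $(\gamma_v,\phi_v)$ when $\Phi(v)=Z$ and $(\phi_v,\gamma_v)$ when $\Phi(v)\in\{T,D\}$, and the rule nodes, whose Steps (R3a)--(R3d) insert the remaining arcs. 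A preliminary fact I would state up front is that whenever $p\z q$, $p\t q$, or $p\d p$ lies in $\fR$, the value $\Phi(p)$ equals $Z$, $T$, or $D$ respectively: if the rule was applied in Step (R3b) this is immediate, and if it was applied in Step (R3a) it is forced by the selection condition $\Phi(f)=\Gamma(v)$ of Step (R1). This pins down exactly which introduce-arc is created at the introduce node of $p$, giving at once $(\gamma_p,\phi_p)$ in statement~1 and $(\phi_p,\gamma_p)$ in statements~2 and~3.

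Next I would handle the forcing arcs $(\phi_p,\gamma_q)$. Since $p$ and $q$ are adjacent they occur together in a bag, and the rule $p\z q$ (resp.\ $p\t q$) is fixed at the rule node of whichever of $p,q$ is forgotten first. If $q$ is forgotten first, Step (R3a) at the rule node of $q$ (with $f=p$) adds $(\phi_p,\gamma_q)$; if $p$ is forgotten first, Step (R3b) at the rule node of $p$ (with $g=q$) adds $(\phi_p,\gamma_q)$. For the ``furthermore'' clause of statement~2, suppose $s\z p$ is also in $\fR$. The arc $(\phi_p,\phi_s)$ is precisely the output of the special $T$-clause: when the rule $s\z p$ is fixed, the endpoint $p$ carries $\Phi(p)=T$, so Step (R3a) (if $p$ is forgotten first, via its ``if $\Phi(v)=T$'' branch) or Step (R3b) (if $s$ is forgotten first, via its ``if $\Phi(g)=T$'' branch) inserts $(\phi_p,\phi_s)$.

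The crux of the proof, and the step I expect to be the main obstacle, is the family of arcs $(\gamma_r,\phi_p)$ for neighbors $r$ of $p$ (with $r\neq q$ in statements~1 and~2, and for \emph{all} $r\in N(p)$ in statement~3), since these must be created between event nodes that both still exist, while $N(p)$ is in general not contained in any single bag. I would isolate a co-occurrence claim, proved from the subtree property of tree decompositions: for an edge $rp$ the bags containing $r$ and the bags containing $p$ form two subtrees whose topmost nodes are comparable, so at the rule node of whichever of $r,p$ is forgotten first the bag contains both vertices. Granting this, the arc is produced as follows. If $r$ is forgotten first, Step (R3c) at the rule node of $r$ (taking the bag-neighbour $w=p$) adds $(\gamma_r,\phi_p)$; if $p$ is forgotten first, then $r$ is still in $N_t(p)$ and Step (R3d) at the rule node of $p$ adds it. It remains to check the side conditions. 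Step (R3c) requires $\Phi(p)\neq\bot$ and $p\neq f$, where $f$ is the vertex colouring $r$; the first holds by the preliminary fact, and the second holds because $p$ colours $q\neq r$ (or, in statement~3, only itself), so $p$ does not colour $r$. Step (R3d) requires $r\neq g=q$, which is exactly the hypothesis $r\neq q$ (and in statement~3 one has $g=\bot$, so the condition is vacuous). Assembling the three parts---introduce arcs, forcing arcs, and neighbour arcs---then yields each statement; I expect the write-up effort to concentrate almost entirely on the co-occurrence claim and the careful case split by which endpoint is forgotten first, while the remaining verifications are mechanical.
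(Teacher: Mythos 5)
Your proposal is correct and follows essentially the same approach as the paper's proof: identify the introduce-node arcs via the fact that $\Phi(p)$ equals the rule type, then obtain the forcing arc $(\phi_p,\gamma_q)$, the neighbour arcs $(\gamma_r,\phi_p)$, and the arc $(\phi_p,\phi_s)$ by a case split on which endpoint is forgotten first, locating each arc in step (R3a), (R3b), (R3c), or (R3d) of the corresponding rule node. The paper writes out only statement~2 and justifies $p\neq f$ at the rule node of $r$ via the $b_\Phi$ flag rather than your ``$p$ colours only $q$'' phrasing, but these amount to the same observation.
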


\begin{proof}
    We only prove the second statement. The other statements can be proved analogously. First observe that $(\phi_p, \gamma_p)$ is inserted into the dependency graph in the introduce node of $p$ since $\Phi(p) = T$.
    
    Next, we show that $(\phi_p,\gamma_q) \in \D^*$. First assume that $p$ is forgotten before $q$. Then, the rule $p \t q$ was applied in the rule node of $p$. Hence, $g$ was chosen to be $q$ in (R\ref{r2}) and we have added $(\phi_p,\gamma_q)$ to $\D$ in (R\ref{r3b}). If $q$ is forgotten before $p$, then the rule $p \t q$ was applied in the rule node of $q$. Hence, $f$ was chosen to be $p$ in (R\ref{r1}) and we have added $(\phi_p,\gamma_q)$ to $\D$ in (R\ref{r3a}).

    Now let $r \in N(p) \setminus \{q\}$. If $r$ is forgotten before $p$, then $p$ is present in the rule node of~$r$. Furthermore, $p$ was not chosen to be $f$ to create the signature for this node in (R\ref{r3a}) since, otherwise, $b_\Phi(p)$ would have been $\false$ beforehand and then set to $\true$ implying that the rule $p \t q$ was neither applied before nor applied afterwards. Hence, we have added $(\gamma_r, \phi_p)$ in (R\ref{r3c}). Analogously, if $p$ is forgotten before $r$, then $r$ was present in the rule node of $p$ and was not chosen to be $g$. Therefore, we have added $(\gamma_r, \phi_p)$ in (R\ref{r3d}).

    Finally assume that there is a rule $s \z p$ in $\fR$. If $s$ is forgotten before $p$, then $p$ was chosen to be $g$ in the rule node of $s$. Hence, we have added $(\phi_p, \phi_s)$ in (R\ref{r3b}). If $p$ is forgotten before $s$, then $s$ was chosen to be $f$ in the rule node of $p$, and $(\phi_p, \phi_s)$ was added in~(R\ref{r3a}). 
\end{proof}

\begin{lemma}\label{lemma:dependency-cycle}
    $\D^*$ is acyclic. 
\end{lemma}

\begin{proof}
    Assume for contradiction that there is a directed cycle in $\D^*$. Let $C$ be such a cycle of minimal length. Let $(x,y)$ be the arc in $C$ that was removed last from some dependency graph $\D_t$ of some signature $\Omega_t$ in $\mathfrak{T}$, i.e., all other arcs of $C$ are also part of $\D_t$ or of the dependency graph $\D_{t'}$ of some descendant of tree node $t$. Since the signature $\Omega_t$ is valid, the graph $\D_t$ is acyclic. Hence, some arc $e = (p,q)$ of $C$ is not present in $\D_t$. The arc $e$ may only vanish from a dependency graph of a tree node $t_1$ to the dependency graph of $t_1$'s parent $t_2$ if $t_2$ is the forget node of a vertex $v$ and $v$ is the vertex that belongs to at least one of the event nodes incident to $e$. W.l.o.g.~we may assume that this event node is $q$. We choose $e$ to be an arc that vanishes first, i.e., no other arc of $C$ vanishes in some tree node below $t_1$. Note that some arcs of $C$ might not be present in $\D_{t_1}$ as they are introduced later. However, the arc $(q,s)$ of $C$ has to be present in $\D_{t_1}$. This holds since it cannot be introduced later because the event node $q$ is not present in any dependency graph above $t_1$. Furthermore, the event node $p$ is present in $\D_{t_1}$ since $e$ is part of $\D_{t_1}$. Hence, when bypassing $q$, the arc $(p,s)$ is introduced to the dependency graph $\D_{t_1}$. However, this implies that this arc is also part of $\D^*$, contradicting the fact that $C$ has minimal length.
\end{proof}

These lemmas enable us to prove the existence of an \R-forcing set if we have found a valid signature in the root node.

\begin{lemma}\label{lemma:tw-hin}
    If there is a valid signature $\Omega_{\tilde{t}}$ in the root node ${\tilde{t}}$ with weight $\omega = k$, then there is an $\R$-forcing set of $G$ of size $k$.
\end{lemma}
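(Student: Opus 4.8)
The plan is to read the forcing set and a valid order of rule applications straight off the signature tree $\T$ of $\Omega_r$. Since the $\Gamma$-type of a vertex is fixed once and for all when the vertex is introduced and is never changed afterwards, each $v\in V(G)$ has a well-defined value $\Gamma(v)$, and I take $S:=\{v\in V(G)\mid \Gamma(v)=\bot\}$ as the candidate $\R$-forcing set. To see $|S|=k$, I follow the weight: by its defining meaning $\omega$ counts the vertices of $V^t\setminus X_t$ with $\Gamma$-type $\bot$, and it is incremented by one exactly when a vertex $v$ with $\Gamma(v)=\bot$ is forgotten. As the root bag is empty and every vertex is forgotten exactly once along $\T$, the root weight equals $|S|=k$.

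Next I would show that the applied rules in $\fR$ are in bijection with $V(G)\setminus S$ and hence number $n-k$. Each rule colours precisely one vertex (a $Z$- or $T$-rule $p\to q$ colours $q$, a $D$-rule $p\d p$ colours $p$), and that vertex lies outside $S$. Conversely, the bookkeeping through $b_\Gamma$ and $b_\Phi$, together with the forget-node condition that both flags be \true{} before a vertex is forgotten, guarantees that every $v$ with $\Gamma(v)\neq\bot$ receives exactly one incoming rule and every $v$ with $\Phi(v)\neq\bot$ fires exactly one outgoing rule. This makes the assignment ``rule $\mapsto$ the vertex it colours'' a bijection onto $V(G)\setminus S$.

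The heart of the proof is to turn $\fR$ into a valid forcing sequence. By \Cref{lemma:dependency-cycle} the combined dependency graph $\D^*$ is acyclic, so I fix a topological order $\prec$ of its event nodes and order the rules of $\fR$ by the $\prec$-position of the firing event $\phi_p$ of their source $p$. I then prove by induction along this order the invariant that, immediately before the rule with firing event $\phi_p$ is applied, the blue vertices are exactly $S$ together with the targets of all earlier rules. The crucial auxiliary observation is that for every $v\notin S$ its \emph{colouring event} (the $\phi$-event of the unique rule that makes it blue) precedes $\gamma_v$ in $\prec$; this is immediate from the arcs $(\phi_{\mathrm{source}},\gamma_v)$ and $(\phi_v,\gamma_v)$ placed in the introduce and rule nodes. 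With this in hand I verify each rule's precondition using \Cref{lemma:rule-arcs}: for $p\z q$ the arcs $(\gamma_p,\phi_p)$ and $(\gamma_r,\phi_p)$ for $r\in N(p)\setminus\{q\}$ make $p$ and all its neighbours except $q$ blue, while $(\phi_p,\gamma_q)$ keeps $q$ white, so $q$ is the unique white neighbour of the blue vertex $p$; for $p\t q$ the arc $(\phi_p,\gamma_p)$ keeps $p$ itself white, the arcs $(\gamma_r,\phi_p)$ make its other neighbours blue, and the extra arc $(\phi_p,\phi_s)$ arising from any $Z$-rule $s\z p$ ensures $p$ fires its $T$-rule before $s$ turns it blue; for $p\d p$ the arcs $(\gamma_r,\phi_p)$ over all $r\in N(p)$ make the whole neighbourhood blue while $(\phi_p,\gamma_p)$ keeps $p$ white until it self-forces. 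Since the rules colour every vertex of $V(G)\setminus S$ blue, $S$ is an $\R$-forcing set of size $k$.

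The main obstacle I expect is exactly this last translation step: the abstract event nodes $\gamma_v$ and $\phi_v$ must be shown to model the concrete blue/white state of the forcing process faithfully, and one must check that the precedence arcs of \Cref{lemma:rule-arcs} really suffice to certify every rule's ``unique white neighbour'' condition without conflict. The delicate cases are the $T$-rule together with a later $Z$-rule colouring the same vertex (resolved by the $(\phi_p,\phi_s)$ arc) and the correct accounting of the self-colouring $D$-rule; once the induction invariant is phrased in terms of colouring events rather than $\gamma$-events, these follow from the arcs already collected.
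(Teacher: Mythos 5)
Your proposal is correct and follows essentially the same route as the paper: extract $S$ from the $\bot$-typed vertices, invoke \Cref{lemma:dependency-cycle} to obtain a topological order of $\D^*$, schedule the rules of $\fR$ along that order, and certify each rule's precondition from the arcs collected in \Cref{lemma:rule-arcs}. The only (cosmetic) difference is that the paper refines the topological sorting so that $\phi_v$ immediately precedes $\gamma_w$ for every applied rule, whereas you order rules by their source firing events and carry an induction invariant together with the observation that each colouring event precedes the corresponding $\gamma$-node.
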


\begin{proof}
    Consider the signatures in the signature tree $\T$. Note that a vertex of $G$ has the same $\Gamma$-type in all signatures of $\T$. Hence, there are exactly $k$ vertices of $G$ that got the $\Gamma$-type $\bot$. Let $S$ be the set of those vertices. Since forget nodes only adopt those signatures from their child where the forgotten vertex has $b_\Gamma$-value $\true$, every vertex that is not in $S$ has been colored blue by exactly one of the rules in $\fR$. Due to \cref{lemma:dependency-cycle}, the complete dependency graph $\D^*$ is acyclic. Thus, there is a topological sorting of $\D^*$. We claim that there is also a sorting fulfilling certain properties.

    \begin{claim}
        There is a topological sorting $\sigma$ of $\D^*$ such that $\sigma$ starts with the $\gamma$-nodes of the vertices in $S$ and if there is a rule $v \z w$ or $v \t w$ in $\fR$, then $\phi_v$ is the direct predecessor of $\gamma_w$ in the sorting. 
    \end{claim}

    \begin{claimproof}
        There are only two options how a event node $\gamma_x$ can get its first incoming edge. Either it holds that $\Phi(x) = T$ or it holds that $\Gamma(x) \neq \bot$. Since both properties do not hold for the vertices in $S$, their $\gamma$-nodes have in-degree zero and, thus, can be taken at the beginning of the topological sorting.

        Now assume that one of the rules  $v \z w$ or $v \t w$ is in $\fR$. Then, due to \cref{lemma:rule-arcs}, $(\phi_v,\gamma_w) \in \D^*$. We now show that if there is another arc $(x,\gamma_w)$ in $\D^*$ with $x \neq \phi_v$, then there is also an arc $(x,\phi_v)$ in $\D^*$. This implies that we can take $\gamma_w$ directly after $\phi_v$ in $\sigma$.  

        Let $(x,\gamma_w)$ with $x \neq \phi_v$ be the first such arc that is added to some dependency graph. If $(x,\gamma_w)$ is a bypassing arc, then some arcs $(x,y)$ and $(y, \gamma_w)$ have already been present in the dependency graph. Due to the assumption on $(x,\gamma_w)$, it holds that $y = \phi_v$. Therefore, $(x,\phi_v)$ is in $\D^*$. 
        
        So we may assume that $(x,\gamma_w)$ is not a bypassing arc. As $x \neq \phi_v$, there are only two options how such an arc can be created. The first case is that we add arcs $(\gamma_z, \gamma_w)$ in some rules nodes if $\Gamma(w) = D$. However, there is the rule $v \z w$ or $v \t w$ in $\fR$, so $\Gamma(w) \in \ZT$. Thus, this case cannot occur. 
        
        The only remaining case is the following. We add the arc $(\phi_w, \gamma_w)$ in the introduce node of $w$ if $\Phi(w) = T$ and it remains to show that the arc $(\phi_w,\phi_v)$ is in $\D^*$. If $\Phi(w) = T$, then there is a rule $w \t z \in \fR$ for some vertex~$z$. Furthermore, it holds that $\Gamma(w) \in \ZD$ (see possible choices for $(\Gamma(v), \Phi(v))$ in the introduce nodes and \cref{lemma:t-independent}). As we have seen above, the case of $\Gamma(w) = D$ is not possible. So it holds that $\Gamma(w) = Z$. This implies that the rule that colors $w$ blue is the $Z$-rule $v \z w$. Now we can apply \cref{lemma:rule-arcs}. As we have seen, there are the rules $w \t z$ and $v \z w$ in $\fR$. If we choose $p = w$, $q = z$, and $s = v$ in the Case~2 of \cref{lemma:rule-arcs}, then the lemma implies that $(\phi_w,\phi_v)$ is in~$\D^*$.
    \end{claimproof}

    So let $\sigma$ have the form as stated in the claim. We arrange the $Z$- and $T$-rules in $\fR$ according to the order of the respective event nodes in $\sigma$. Note that this is possible since the respective $\phi$- and $\gamma$-nodes appear consecutively in $\sigma$. Furthermore, we place each $D$-rule $v \d v$ at the position of $\gamma_v$. We claim that this ordering of the rules is a valid ordering, i.e., all respective vertices have the right color. This follows from \cref{lemma:rule-arcs}. For example, if we have a rule $p \t q$, then neither $p$ nor $q$ have been colored blue before the rule (due to $(\phi_p,\gamma_p), (\phi_p, \gamma_q) \in \D^*$) and all other neighbors $r$ of $p$ are blue since $(\gamma_r, \phi_p) \in \D^*$. For the other rules, we can use the same arguments. This finalizes the proof.
\end{proof}

It remains to show that an $\R$-forcing set of size $\leq k$ can also be found by our algorithm.

\begin{lemma}\label{lemma:tw-rueck}
    If there is an $\R$-forcing set of $G$ of size $\leq k$, then there is a valid signature $\Omega_{\tilde{t}}$ in the root node ${\tilde{t}}$ with weight $\omega \leq k$.
\end{lemma}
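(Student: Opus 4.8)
The plan is to run the argument of \cref{lemma:tw-hin} in reverse: starting from a given solution I will \emph{read off} the signature that the algorithm is forced to have retained at every node of the tree decomposition, and then simply evaluate its weight at the root. First I would fix an $\R$-forcing set $S$ with $|S| \le k$ together with one concrete sequence of rule applications that colours $V(G)\setminus S$ blue. From this process I extract a global assignment: for each $v$, let $\Gamma(v)$ be the rule that colours $v$ (with $\Gamma(v)=\bot$ exactly when $v\in S$), and let $\Phi(v)$ be the rule with which $v$ colours another vertex, or $\bot$ if $v$ never forces. Here I would first record the structural fact that in any such process every vertex colours at most one \emph{other} vertex---after a $Z$- or $T$-force a vertex has no white neighbour left, and a $D$-force only recolours the vertex itself---so that $\Phi$ is well defined as a single value. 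I would also note that the temporal order of the process yields a global acyclic digraph $\D^{*}_{\mathrm{glob}}$ on the event nodes $\gamma_v,\phi_v$ whose arcs are exactly those described in \cref{lemma:rule-arcs}, and that the combinations forbidden at an introduce node are precisely the ones that cannot occur globally (for instance $(\Gamma,\Phi)=(\bot,T)$ is impossible because a vertex of $S$ is never white, and $(D,Z)$ is impossible because a $D$-coloured vertex has no white neighbour left to $Z$-force). Thus every local choice dictated by the solution is among those the algorithm enumerates.

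Next, for every tree node $t$ I would define the \emph{induced signature} $\Omega_t$ by restricting $\Gamma$ and $\Phi$ to $X_t$, restricting $\D^{*}_{\mathrm{glob}}$ to the event nodes of $X_t$ (adding the transitive arcs that arise from having bypassed the already-forgotten vertices), setting $b_\Gamma(v)=\true$ exactly when $\Gamma(v)\in\{D,\bot\}$ or the vertex performing the force that colours $v$ already lies in $V^t$, setting $b_\Phi(v)=\true$ exactly when $\Phi(v)=\bot$ or the vertex that $v$ forces already lies in $V^t$, and letting $\omega$ count the vertices of $V^t\setminus X_t$ with $\Gamma$-type $\bot$. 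Each $\Omega_t$ is valid because its dependency graph is a restriction of the acyclic $\D^{*}_{\mathrm{glob}}$, exactly as in \cref{lemma:dependency-cycle}. I would then prove by bottom-up induction that the algorithm retains at $t$ a signature equal to $\Omega_t$ up to a possibly smaller weight, checking each node type: leaf and introduce are immediate from the previous paragraph; a forget node matches because the weight is incremented precisely for a forgotten $\bot$-vertex; and at a join node the two induced child signatures are compatible, since they carry the same global $\Gamma,\Phi$, each force is committed in only one subtree (so the $b$-flags are $\true$ on at most one side), and the union of the two restrictions is again a sub-digraph of $\D^{*}_{\mathrm{glob}}$, hence acyclic.

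The crux, as in the forward direction, is the rule node. Here I would verify that the vertex $f$ that colours the forgotten vertex $v$ in the fixed process, and the vertex $g$ that $v$ colours, are legitimate choices in Steps~(R\ref{r1}) and~(R\ref{r2})---that is, they are neighbours of $v$ with the matching $\Phi$- resp.\ $\Gamma$-type that are still unfixed---and that the arcs added in Step~(R\ref{r3}) are exactly the arcs of $\D^{*}_{\mathrm{glob}}$ incident to $\gamma_v$ and $\phi_v$ listed in \cref{lemma:rule-arcs}, so that the acyclicity check in~(R\ref{r3}) is passed. Once the induction reaches the root, $X_r=\emptyset$ forces the dependency graph to be empty and the weight of the retained signature to equal the number of forgotten $\bot$-vertices, namely $|S|\le k$, which is the desired valid root signature. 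I expect the bookkeeping of the flags $b_\Gamma$ and $b_\Phi$ across the rule and join nodes to be the most delicate part, since one must ensure that each force is ``charged'' to exactly one rule node and to exactly one side of every join, so that the reconstructed $b$-values and arc set coincide with what the algorithm computes rather than merely implying a valid solution.
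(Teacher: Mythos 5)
Your proposal is correct and follows essentially the same route as the paper: guide the algorithm's choices along the fixed forcing process, and embed every dependency graph that arises into a single global digraph whose arcs respect the temporal order of the rule applications, so that acyclicity (and hence validity) of all retained signatures follows, with the root weight equal to $|S|\le k$. The paper's version is terser---it defines the global digraph $\D^{\sigma}$ directly as the transitive closure of the rule arcs plus the temporal-order arcs and observes that $\D^{*}$ is a subgraph of it---whereas you spell out the node-by-node induction, but the underlying argument is the same.
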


\begin{proof}
    Let $S$ be an $\R$-forcing set of size $\leq k$ and let $\sigma = (R_1, \dots, R_\ell)$ be the sequence of rules of $\R$ applied to $G$. For every tree node, we only consider signatures that match $S$ and $\sigma$, i.e, if $v \in S$, then $\Gamma(v) = \bot$ and if there is a rule $v \x w$ with $X \in \R \setminus \{D\}$, then $\Phi(v) = X$ and $\Gamma(w) = X$ and if there is a rule $v \d v$, then $\Gamma(v) =D$. Furthermore, in every rule node, we apply exactly those rules to the forgotten vertex that are part of $\sigma$. We claim that doing this, we construct a valid signature of the root node $\tilde{t}$ with weight $\omega \leq k$.

    To this end, we define the dependency graph $\D^\sigma$. Here, we add arcs with respect to $\sigma$, i.e., for every rule in $\sigma$, we add the arcs as given in \cref{lemma:rule-arcs}. Furthermore, if rule $p \x q$ is applied before rule $r \y s$, then we add the arcs $(\phi_p, \phi_r)$ and $(\gamma_q,\phi_r)$. Let $\D^\sigma$ be the transitive closure of the digraph containing all these arcs. It is clear that the graph $\D^\sigma$ is acyclic since it represents the linear order of the rules of $\sigma$. Furthermore, the graph $\D^*$ is a subgraph of $\D^\sigma$ and, thus, $\D^*$ is also acyclic. As the dependency graphs of every signature are subgraphs of $\D^*$, they are also acyclic and, thus, all these signatures are valid. Hence, the algorithm has computed a valid signature of the root node $\tilde{t}$. Since we only have used $|S|$ many vertices with $\Gamma$-value $\bot$, the weight of this signature has to be at most $|S| \leq k$.
\end{proof}

Finally, we can state the main result of this section.

\begin{theorem}\label{thm:tw-forcing}
    Let $\R$ be a non-empty subset of $\ZTD$. Then $\FORCE{\R}$ can be solved in $2^{\O(d^2)} \cdot n$ time where $n$ is the number of vertices and $d$ is the treewidth of the given graph.
\end{theorem}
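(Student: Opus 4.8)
The plan is to establish only the running time, since correctness is already in hand: \cref{lemma:tw-hin} and \cref{lemma:tw-rueck} together show that the minimum weight of a valid signature at the root $r$ equals the size of a smallest $\R$-forcing set of $G$, so it suffices to argue that the dynamic program can be carried out within the claimed bound. First I would invoke a known fixed-parameter algorithm to compute a tree decomposition of width $\O(t)$ (an approximate decomposition is enough, since inflating the width by a constant factor only changes the constant hidden in $2^{\O(t^2)}$), turn it into a nice tree decomposition in the standard fashion, and then insert one Rule node directly below each forget node as required by the definition. Because each of the $n$ vertices is forgotten exactly once and a nice tree decomposition of width $\O(t)$ has $\O(t\cdot n)$ nodes, the resulting extended decomposition still has $\O(t\cdot n)$ nodes, and every bag contains $\O(t)$ vertices.

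The key quantitative step is to bound the number of signatures stored at a single node. A signature is a tuple $(\Gamma,\Phi,b_\Gamma,b_\Phi,\D,\omega)$. The maps $\Gamma$ and $\Phi$ assign to each of the $\O(t)$ bag vertices one of the four values in $\R\cup\{\bot\}$, contributing $2^{\O(t)}$ possibilities each, and the Boolean maps $b_\Gamma,b_\Phi$ contribute $2^{\O(t)}$ each. The weight $\omega$ does not multiply the count, because among all signatures that agree on every entry except $\omega$ we keep only the one of minimum weight. The dominant factor is the dependency graph $\D$: it is a digraph whose vertices are the event nodes $\gamma_v$ and (when $\Phi(v)\neq\bot$) $\phi_v$ for $v\in X_t$, i.e.\ at most $\O(t)$ nodes, so there are only quadratically many possible arcs and hence at most $2^{\O(t^2)}$ distinct such digraphs. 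Multiplying, each node of the decomposition stores at most $2^{\O(t^2)}$ signatures; moreover each signature is encodable in $\O(t^2)$ bits, so a dictionary keyed on the first five components lets us detect and discard dominated (larger-weight) signatures in time near-linear in the number of signatures.

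Next I would verify that each node is processed within $2^{\O(t^2)}$ time. For leaf, introduce, and forget nodes the algorithm loops over the $2^{\O(t^2)}$ child signatures and performs a bounded number of local edits per signature (introduce tries a constant number of pairs $(\Gamma(v),\Phi(v))$; forget bypasses $v$, which only rewrites the transitive closure around $\gamma_v,\phi_v$ at cost $\mathrm{poly}(t)$). At a rule node, for each child signature the procedure branches over the choices of $f$ and $g$ in $N_t(v)$, i.e.\ over $\O(t^2)$ pairs, and for each pair inserts $\mathrm{poly}(t)$ arcs and runs one acyclicity test on a digraph with $\O(t)$ nodes, giving $2^{\O(t^2)}\cdot\mathrm{poly}(t)$ overall. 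At a join node we iterate over pairs of compatible signatures of the two children; compatibility forces $\Gamma^1=\Gamma^2$ and $\Phi^1=\Phi^2$, and there are at most $2^{\O(t^2)}$ signatures per child, so the number of pairs is $2^{\O(t^2)}$, and testing acyclicity of $\D^1\cup\D^2$ costs $\mathrm{poly}(t)$ per pair. Hence every node costs $2^{\O(t^2)}$, and over the $\O(t\cdot n)$ nodes the total is $2^{\O(t^2)}\cdot n$, absorbing the $t\cdot\mathrm{poly}(t)$ factor into the exponential term.

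The main obstacle I anticipate is not any single estimate but the bookkeeping for the dependency graph: one must be confident that each bag's digraph really carries only $\O(t)$ event nodes, so that the $2^{\O(t^2)}$ bound applies, and that the bypass at forget nodes together with the union at join nodes never needs to reason about event nodes outside the current bag. This locality is exactly what \cref{lemma:dependency-cycle} underwrites, so that acyclicity can be decided bag-locally and no global cycle is ever missed or spuriously created. Once these points are settled, the time bound follows as the product of the node count and the per-node cost, and the answer is obtained by reading off a minimum-weight valid signature at the root via \cref{lemma:tw-hin} and \cref{lemma:tw-rueck}.
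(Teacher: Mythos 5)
Your proposal is correct and follows essentially the same route as the paper: correctness is delegated to \cref{lemma:tw-hin,lemma:tw-rueck}, an approximate tree decomposition of width $\O(t)$ is computed by a known \FPT{} algorithm and converted into the extended nice form, the signature count per node is bounded by $2^{\O(t^2)}$ with the dependency graph as the dominant factor, and the per-node processing cost (including join nodes, which you treat slightly more explicitly than the paper) is absorbed into $2^{\O(t^2)} \cdot n$.
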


\begin{proof}
    \Cref{lemma:tw-hin,lemma:tw-rueck} show that the algorithm works correctly. So consider the running time. We apply Korhonen's algorithm~\cite{korhonen2021single} to compute a tree decomposition of width $\leq 2d +1$ in $2^{\O(d)} \cdot n$ time. It is easy to see that we can bring such a decomposition into the form using our five node types in $\O(n)$ time. Let $\ell = 2d + 2$. Then for a particular tree node $t$, there are at most $4^\ell \cdot 3^\ell \cdot 2^\ell \cdot 2^\ell \cdot 2^{\O(\ell^2)} \in 2^{\O(d^2)}$ different choices for $(\Gamma, \Phi, b_\Gamma, b_\Phi, \D)$. For every of these choices, we have at most one signature of $t$. Since there are $\O(n)$ tree nodes, the total number of signatures is bounded by $2^{\O(d^2)} \cdot n$. The number of steps used for one of these signatures is polynomial in $d$. Overall, this implies a running time of $\O(d^c) \cdot 2^{\O(d^2)} \cdot n = 2^{\O(d^2)} \cdot n$.
\end{proof}

Note that the algorithm does not only solve the decision problem whether there is an $\R$-forcing set of size $\leq k$ but even computes the smallest size of such a set. Therefore, \cref{thm:lin,thm:local-l} imply the following.

\begin{corollary}
    \GN{}, \TN{}, \ZN{}, and \LLNC{} can be solved in $2^{\O(d^2)} \cdot n$ time where $n$ is the number of vertices and $d$ is the treewidth of the given graph.
\end{corollary}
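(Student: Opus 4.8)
The plan is to leverage the fact---noted immediately above the statement---that the algorithm of \cref{thm:tw-forcing} does not merely decide the existence of an $\R$-forcing set of a prescribed size but actually returns the minimum size $\mathrm{opt}_\R(G)$ of such a set, realized as the smallest weight $\omega$ among the valid signatures of the root node. Since every forcing variant runs on the same input graph $G$, and hence with the same treewidth $d$, each of these minima is computable in $2^{\O(d^2)} \cdot n$ time by a single application of the algorithm with the appropriate rule set $\R$.

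First I would invoke the equivalences of \cref{thm:lin} together with the definition of the dual problems. For each rule set the equivalence says that $G$ admits the respective sequence of length $\geq \ell$ precisely when $G$ has an $\R$-forcing set of size $\leq n - \ell$; equivalently, the maximum length of that sequence type equals $n - \mathrm{opt}_\R(G)$. Concretely, $\ZN$ pairs with $\FORCE{Z}$, $\GN$ with $\FORCE{Z,D}$, $\TN$ with $\FORCE{Z,T}$, and $\LN$ with $\FORCE{Z,T,D}$. Thus running the algorithm with the matching $\R$, computing $\mathrm{opt}_\R(G)$, and returning $n - \mathrm{opt}_\R(G)$ yields the exact Grundy (total/$Z$-/$L$-) domination number; any threshold question ``is the number at least $k$?'' is then answered by the single comparison $n - \mathrm{opt}_\R(G) \geq k$, i.e.\ via the complementation $\ell \mapsto n - \ell$.

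For \textsc{Local \LN{}} I would proceed identically but appeal to \cref{thm:local-l} in place of \cref{thm:lin}: that theorem establishes that $G$ has a local $L$-sequence of length $k$ if and only if $G$ has a $\{T,D\}$-forcing set of size $n - k$, so the maximum local $L$-sequence length is exactly $n - \mathrm{opt}_{\{T,D\}}(G)$, again obtained by one run of the algorithm with $\R = \{T,D\}$.

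I do not anticipate a genuine obstacle, since all the combinatorial content has already been carried by \cref{thm:tw-forcing,thm:lin,thm:local-l}; the proof is essentially a bookkeeping argument. The only two points worth stating carefully are (i)~that the algorithm outputs the \emph{optimum} rather than a yes/no answer, which is what legitimizes the additive complementation used to recover the sequence length, and (ii)~that the treewidth of $G$ is identical across all five forcing variants, so that the uniform bound $2^{\O(d^2)} \cdot n$ of \cref{thm:tw-forcing} transfers verbatim to every Grundy domination problem in the list.
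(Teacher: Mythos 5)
Your proposal is correct and matches the paper's own argument: the paper likewise observes that the algorithm of \cref{thm:tw-forcing} computes the minimum size of an $\R$-forcing set and then invokes \cref{thm:lin,thm:local-l} to recover the respective Grundy domination numbers by complementation. No difference in approach worth noting.
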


At the end of the subsection, we sketch how we can adapt the algorithm to also solve the connected variants of the Grundy domination problems as well as \CON{\FORCE{\R}} and \TOT{\FORCE{\R}}. 

For \TOT{\FORCE{\R}}, we introduce for every vertex of $\Gamma$-type $\bot$ a boolean value that is set to $\true$ if the vertex is adjacent to some other vertex of $\Gamma$-type $\bot$. So whenever we forget such a vertex, we update its boolean value and the boolean values of its neighbors.

For \CON{\FORCE{\R}}, we add some function to the signatures that maps the vertices of $\Gamma$-type $\bot$ to some value in $[d+1]$. This integer represents the connected component of the subgraph induced by the vertices of $\Gamma$-type $\bot$. Whenever we introduce a new vertex of $\Gamma$-type $\bot$, then we join components if this new vertex is adjacent to both of them. Afterwards, we add the vertex to the unique adjacent component or if no such component exists, we open a new one. If a vertex $v$ of $\Gamma$-type $\bot$ is forgotten, we check whether it is the only vertex of its component in the bag. If so, then there a two cases. If $v$ is the only vertex of $\Gamma$-type $\bot$ in the bag, then we forbid to introduce new vertices of $\Gamma$-type $\bot$ in later bags. Otherwise, the $\R$-forcing set is not connected and the signature is made invalid. It is not difficult to see that in both cases the additional effort is bounded by $2^{\O(d^2)}$. Thus, the following holds.

\begin{theorem}\label{thm:conn-tot}
    Let $\R$ be a non-empty subset of $\ZTD$. Then $\TOT{\FORCE{\R}}$ and $\CON{\FORCE{\R}}$ can be solved in $2^{\O(d^2)} \cdot n$ time where $n$ is the number of vertices and $d$ is the treewidth of the given graph.
\end{theorem}

We use a similar approach for the connected variants of the Grundy domination problems. Here, we keep track of the components of all vertices of $\Gamma$-type $\neq \bot$ since these vertices are part of the dominating sequence, due to the proofs of \cref{thm:lin} given in~\cite{bresar2017grundy,lin2019zero} as well as the proof of \cref{thm:local-l}.

\begin{theorem}
    \CON{\GN}, \CON{\TN}, \CON{\ZN}, and \CON{\LLNC} can be solved in $2^{\O(d^2)} \cdot n$ time where $n$ is the number of vertices and $d$ is the treewidth of the given graph.
\end{theorem}

\subsection{Parameterized Algorithms for Parameter Solution Size}

Bhyravarapu et al.~\cite{bhyravarapu2025parameterized} used their \FPT{} algorithm for \ZFS{} parameterized by the treewidth to give an \FPT{} algorithm for the problem when parameterized by the solution size. To this end, they showed that the pathwidth of a graph with a $\{Z\}$-forcing set of size $k$ is bounded by $k$, a result that has already been given by Aazami~\cite[Theorem 2.2.1]{aazami2008hardness}. Barioli~et~al.~\cite{barioli2013parameters} showed that this even holds for a more general rule set including set $\ZD$.

\begin{lemma}[Barioli et al.~{\cite[Theorem~2.44]{barioli2013parameters}}]\label{lemma:barioli}
    If a graph $G$ has a $\ZD$-forcing set of size $k$, then the pathwidth of $G$ is at most $k$.
\end{lemma}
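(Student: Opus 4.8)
The plan is to read a linear order of $V(G)$ off one successful run of the $\{Z,D\}$-forcing process and to turn the ``boundaries'' of its prefixes into the bags of a path decomposition. First I would serialize the process: since each application of the $Z$-rule or the $D$-rule colours exactly one new vertex blue, a fixed successful run produces an increasing chain of blue sets $S = B_0 \subsetneq B_1 \subsetneq \dots \subsetneq B_\ell = V(G)$ with $\ell = n-k$ and $B_i = B_{i-1} \cup \{w_i\}$, where $w_i$ is the vertex coloured by the $i$-th rule. For a set $B \subseteq V(G)$ write $\partial B$ for the set of vertices of $B$ having a neighbour in $V(G)\setminus B$, the \emph{blue boundary}. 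Also set $\tau(v) = 0$ for $v \in S$ and $\tau(w_i) = i$ otherwise.

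The heart of the argument is the invariant $|\partial B_i| \le k$ for every $i$, which I would prove by induction. The base case is immediate, since $\partial B_0 \subseteq B_0 = S$. For the step, the key observation is monotonicity: because the blue set only grows, no vertex ever gains a white neighbour, so a vertex can leave the boundary but never re-enter it, the sole exception being the newly coloured vertex $w_i$. If $w_i$ arose from a $Z$-rule $v \z w_i$, then $w_i$ was the unique white neighbour of $v$, so $v$ now has no white neighbour and drops out of the boundary; hence $\partial B_i \subseteq (\partial B_{i-1}\setminus\{v\}) \cup \{w_i\}$, and the size does not grow. If $w_i$ arose from a $D$-rule $w_i \d w_i$, then all neighbours of $w_i$ were already blue, so $w_i \notin \partial B_i$ and $\partial B_i \subseteq \partial B_{i-1}$. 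Either way $|\partial B_i| \le |\partial B_{i-1}| \le k$.

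With the invariant established I would build the decomposition directly, placing bags $Y_0,\dots,Y_{\ell-1}$ along a path with $Y_0 := S \cup \{w_1\}$ and $Y_i := \partial B_i \cup \{w_{i+1}\}$ for $1 \le i \le \ell-1$. Each bag has at most $k+1$ vertices by the invariant (the vertex $w_{i+1}\notin B_i \supseteq \partial B_i$ is genuinely new), so the width is at most $k$. Checking the three axioms is then routine. Every vertex occurs in a bag, as $S \subseteq Y_0$ and each $w_b$ lies in $Y_{b-1}$. The bags containing a fixed vertex form a contiguous interval: the number of white neighbours of a vertex is non-increasing in $i$, so the vertex stays on the boundary exactly for an initial run of steps and then leaves forever, and this interval connects to the introduction bag without a gap. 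Finally, for an edge $uv$ with $\tau(u) \le \tau(v) =: b$ and $b \ge 1$, the endpoint $u$ lies in $B_{b-1}$ and has the white neighbour $v = w_b \notin B_{b-1}$, so $u \in \partial B_{b-1}$ and both $u,v \in Y_{b-1}$; edges with both endpoints in $S$ are covered by $Y_0$.

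The one point I would flag is precisely that edges inside $S$ (and the boundary bookkeeping at the very start) are not caught by the generic boundary bags, which is why $Y_0$ is defined to hold all of $S$ rather than just $\partial S$; the degenerate case $S = V(G)$ is handled by the single bag $V(G)$ of width $n-1 \le k$. The genuine obstacle, however, is the monotonicity invariant itself, and its crux is that in both rules the newly relevant vertex behaves favourably: under the $Z$-rule the forcing vertex leaves the boundary, and under the $D$-rule the forced vertex never enters it, so the boundary can only shrink from its initial value $|\partial S| \le k$.
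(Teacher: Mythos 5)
Your argument is correct, and it is worth noting that the paper itself gives no proof of this lemma at all: it is imported verbatim as a citation to Barioli et al.\ (Theorem~2.45), so you have supplied a self-contained elementary argument where the paper relies on an external reference. The proof is sound as written. The invariant $|\partial B_i|\le k$ is exactly the right quantity to track, and your case analysis is the correct one: under $v \z w_i$ the forcing vertex $v$ loses its last white neighbour and leaves the boundary (and $v$ \emph{was} in $\partial B_{i-1}$, since $w_i$ was white), so the exchange $\partial B_i \subseteq (\partial B_{i-1}\setminus\{v\})\cup\{w_i\}$ keeps the size; under $w_i \d w_i$ the forced vertex has no white neighbours left and never enters the boundary. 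The verification of the three axioms also goes through: in particular, the connectivity axiom follows from the monotonicity you state (once a vertex is blue, ``has a neighbour outside $B_i$'' is a downward-closed property of $i$, so the set of bags containing a vertex is $\{b-1\}\cup\{i\ge b: w_b\in\partial B_i\}$, an interval), and the edge axiom is correctly handled by always charging an edge to the bag of its later-coloured endpoint, which also covers edges whose earlier endpoint was forced by a $Z$-rule while other neighbours were still white. The two degenerate points you flag ($S=V(G)$ and edges inside $S$) are genuine and are resolved by your choice of $Y_0$. The one cosmetic remark is that you could drop the separate treatment of $Y_0$ by defining $Y_i=\partial B_i\cup\{w_{i+1}\}$ for all $0\le i\le \ell-1$ and prepending the single bag $S$; but as written the decomposition is already valid and of width at most $k$.
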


This allows us to extend the result of Bhyravarapu et al.~\cite{bhyravarapu2025parameterized} for \FORCE{Z} to \FORCE{Z,D}. Note that Bhyravarapu et al.~only claimed a running time of $2^{\O(k^2)} \cdot n^{\O(1)}$.

\begin{theorem}\label{thm:solution-size}
    \FORCE{Z} and \FORCE{Z,D} can be solved in $2^{\O(k^2)} \cdot n$ time.
\end{theorem}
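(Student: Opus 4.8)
The plan is to reduce the solution-size parameterization back to the treewidth parameterization already solved by \cref{thm:tw-forcing}, using \cref{lemma:barioli} as the bridge between the two parameters. The crucial observation is that a small forcing set forces small treewidth: if $G$ admits a $\{Z,D\}$-forcing set of size at most $k$, then by \cref{lemma:barioli} its pathwidth, and hence its treewidth, is at most $k$. The same holds for \FORCE{Z}, since every $\{Z\}$-forcing set is also a $\{Z,D\}$-forcing set (the combined process may simply never invoke the $D$-rule), so a $\{Z\}$-forcing set of size at most $k$ likewise yields treewidth at most $k$. Consequently, whenever the treewidth of $G$ is larger than $k$, we may immediately reject the instance.

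Concretely, I would first run Korhonen's algorithm~\cite{korhonen2021single} with target width $k$: in $2^{\O(k)} \cdot n$ time it either reports that the treewidth of $G$ exceeds $k$ or returns a tree decomposition of width at most $2k+1$. In the former case the observation above lets us answer \emph{no}. In the latter case I would run the dynamic program from the proof of \cref{thm:tw-forcing} on this width-$(2k+1)$ decomposition; recall that this program computes the minimum size of an $\R$-forcing set, which we then compare against $k$. Since every bag now has size at most $\ell := 2k+2$, the signature count per tree node is $2^{\O(\ell^2)} = 2^{\O(k^2)}$, so the dynamic program runs in $2^{\O(k^2)} \cdot n$ time. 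Together with the $2^{\O(k)} \cdot n$ spent on the decomposition, the total running time is $2^{\O(k^2)} \cdot n$, in particular keeping the dependence on $n$ linear.

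The argument is essentially a matter of bookkeeping, and I do not expect a genuine obstacle. The one point that must be handled with care is the logical guard: the expensive dynamic program may only be launched once a decomposition of width $\O(k)$ is in hand, and \cref{lemma:barioli} guarantees that every instance for which Korhonen's algorithm fails to deliver such a decomposition can be safely rejected. Matching the $\{Z\}$-case to the $\{Z,D\}$-bound of \cref{lemma:barioli} via the inclusion of forcing sets, and verifying the pathwidth-to-treewidth step, are the only subtleties, and both are immediate.
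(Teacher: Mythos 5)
Your proposal is correct and follows essentially the same route as the paper: apply \cref{lemma:barioli} to reject instances of treewidth exceeding $k$ (with the $\{Z\}$-case handled via the observation that every $\{Z\}$-forcing set is a $\{Z,D\}$-forcing set), then run Korhonen's algorithm followed by the dynamic program of \cref{thm:tw-forcing} on the resulting width-$\O(k)$ decomposition. Your explicit justification of the $\{Z\}$-case is a small clarification the paper leaves implicit, but the argument is the same.
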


\begin{proof}
    Given an instance $(G,k)$ of one of the two problems, we apply Korhonen's algorithm~\cite{korhonen2021single} to $(G,k)$ which needs  $2^{\O(k)} \cdot n$ time. If it outputs that the treewidth of $G$ is $> k$, then -- due to \cref{lemma:barioli} -- there is neither a $\ZD$-forcing set nor a $\{Z\}$-forcing set of size at most $k$. Otherwise, we apply the algorithm of \cref{thm:tw-forcing} for $d = k$ which needs $2^{\O(k^2)} \cdot n$ time.
\end{proof}

Due to \cref{thm:lin}, the result of \cref{thm:solution-size} implies the following for the duals of \GN{} and \ZN{}.

\begin{corollary}
   \DUAL{\GN} and \DUAL{\ZN} can be solved in $2^{\O(k^2)} \cdot n$ time.
\end{corollary}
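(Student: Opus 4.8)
The plan is to obtain the corollary directly from the equivalences of \cref{thm:lin} together with the algorithm of \cref{thm:solution-size}, with no new machinery. Recall that \cref{thm:lin} asserts $\FORCE{Z} \equiv \textsc{Dual \ZN}$ and $\FORCE{Z,D} \equiv \textsc{Dual \GN}$, so each of the two dual Grundy domination problems is merely a reformulation of the corresponding forcing problem, and the whole task reduces to checking that the reformulation is cheap enough to preserve the time bound.

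First I would make the equivalence explicit at the level of \emph{parameters}, since this is the only place where something could go wrong. As noted after \cref{thm:lin}, the equivalence is witnessed by complementation: on a graph $G$ with $n$ vertices, the complement $V(G) \setminus \{v_1, \dots, v_\ell\}$ of a (respectively, $Z$-)dominating sequence is a $\{Z,D\}$-forcing (respectively, $\{Z\}$-forcing) set, and conversely. Hence there is a dominating sequence of length $\geq n - k$ if and only if there is a $\{Z,D\}$-forcing set of size $\leq k$, and likewise for $Z$-sequences and $\{Z\}$-forcing sets. Crucially, the underlying graph is left unchanged and the parameter on both sides is the \emph{same} integer $k$, because the forcing-set size equals exactly the deficit $n - \ell$. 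Thus an instance of \textsc{Dual \GN} (respectively, \textsc{Dual \ZN}) with parameter $k$ is a yes-instance precisely when the instance $(G,k)$ of $\FORCE{Z,D}$ (respectively, $\FORCE{Z}$) is, and translating between them takes only $\O(n)$ time.

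Having established that the transformation is the identity on the graph and leaves $k$ untouched, I would simply invoke \cref{thm:solution-size}: both $\FORCE{Z}$ and $\FORCE{Z,D}$ are solvable in $2^{\O(k^2)} \cdot n$ time, and since that algorithm in fact outputs the minimum forcing-set size, reading off the answer to the dual decision question costs nothing extra. The sole point requiring care --- and hence the only potential obstacle --- is confirming that the equivalence is genuinely \emph{parameter-preserving} rather than merely polynomial-time equivalent; were $k$ inflated by some gadget, the quadratic exponent could degrade. Because the witness is plain complementation with no auxiliary vertices, no such inflation occurs, so the $2^{\O(k^2)} \cdot n$ bound transfers verbatim and the corollary follows.
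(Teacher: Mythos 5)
Your proposal is correct and follows exactly the paper's route: the corollary is obtained by combining \cref{thm:solution-size} with the parameter-preserving equivalences of \cref{thm:lin}, which act on the same graph with the same $k$. You simply spell out in more detail why the complementation-based equivalence preserves both the instance and the parameter, which the paper leaves implicit.
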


Using \cref{thm:conn-tot} instead of \ref{thm:tw-forcing} in the proof of \cref{thm:solution-size} implies the following.

\begin{corollary}\label{corol:solution-size-connected}
    \CON{\FORCE{Z}}, \TOT{\FORCE{Z}}, \CON{\FORCE{Z,D}}, and \TOT{\FORCE{Z,D}} can be solved in $2^{\O(k^2)} \cdot n$ time.
\end{corollary}

Our algorithm can also be used to solve \CONC{\FORCE{D}} in $2^{\O(k^2)} \cdot n$. However, we have seen in \cref{obs:vc} that these problems are equivalent to \CONC{\textsc{Vertex Cover}} which allow much better \FPT{} algorithms (see, e.g., \cite{chen2006improved,cygan2012deterministic}).
 
Unfortunately, we cannot use the idea of \cref{thm:solution-size} to give an \FPT{} algorithm for \FORCE{Z,T}, \FORCE{T,D}, or \FORCE{Z,T,D} since the treewidth of graphs having $\ZT$-forcing sets or $\TD$-forcing sets of size $k$ cannot be bounded by some value $f(k)$. In fact, almost no reasonable graph parameter can be bounded by some $f(k)$ as is shown by the following result.

\begin{proposition}
    Let $\xi$ be an unbounded graph parameter that does not decrease when leaves are added to the graph. Then for every $k \in \N$, there is a graph $G$ with $\xi(G) \geq k$ such that the empty set is a $\ZT$-forcing set and a $\TD$-forcing set of $G$.
\end{proposition}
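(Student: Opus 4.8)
The plan is to construct, for each $k$, a single graph $G$ that simultaneously witnesses both claims, by starting from an arbitrary high-width graph and attaching pendant leaves. Since $\xi$ is unbounded, I would first fix a graph $H$ with $\xi(H) \geq k$. I then form $G$ from $H$ by attaching to every vertex $v \in V(H)$ a new pendant vertex $v'$ whose only neighbor is $v$. As $G$ is obtained from $H$ by successively adding leaves and $\xi$ does not decrease under this operation, we get $\xi(G) \geq \xi(H) \geq k$ immediately. It therefore remains only to show that the empty set forces all of $G$ blue under both rule sets.

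Both forcing processes begin identically. Starting with every vertex white, each added leaf $v'$ is a white vertex whose unique neighbor $v$ is also white, so the $T$-rule $v' \t v$ applies and colors $v$ blue. Applying this rule once for every leaf colors all original vertices of $H$ blue while leaving every leaf white. At this point the two rule sets diverge. For $\{T,D\}$, each leaf $v'$ is now a white vertex all of whose neighbors (namely the single vertex $v$) are blue, so the $D$-rule $v' \d v'$ colors it blue; doing this for every leaf finishes the process. For $\{Z,T\}$, each original vertex $v$ is now blue and has exactly one white neighbor, its leaf $v'$, because all of its $H$-neighbors are original vertices and hence already blue; thus the $Z$-rule $v \z v'$ colors $v'$ blue, and again iterating over all vertices completes the forcing. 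Hence $\emptyset$ is both a $\{Z,T\}$-forcing set and a $\{T,D\}$-forcing set of $G$.

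The only point requiring care -- and the step I expect to be the main obstacle -- is justifying that the opening phase of $T$-rule applications can actually be carried out in full, i.e.\ that no leaf gets ``stuck'' before its neighbor is colored. This follows because in our chosen order the color of an original vertex $v$ is changed solely by the rule $v' \t v$, so $v$ remains white until exactly the moment that rule is applied, and each leaf $v'$ stays white throughout this phase; thus the precondition of $v' \t v$ (a white vertex with a unique white neighbor) holds whenever we invoke it, independently of how many other original vertices have already turned blue. The corresponding uniqueness condition in the $\{Z,T\}$ finish -- that $v'$ is the \emph{only} white neighbor of $v$ -- likewise hinges on the fact that every other neighbor of $v$ lies in $V(H)$ and has already been colored in the first phase. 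One should also note that the argument is unaffected if $H$ contains isolated vertices, since such a vertex $v$ together with its leaf $v'$ forms a $K_2$ that is handled by the same two steps.
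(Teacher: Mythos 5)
Your proof is correct and follows essentially the same construction as the paper: attach a pendant leaf to every vertex of a high-width graph $H$, color $V(H)$ blue via $T$-rules applied at the leaves, and then finish with $Z$-rules (from the vertices of $H$) or $D$-rules (at the leaves). Your extra care about the order of the $T$-rule applications is a welcome, if minor, elaboration of the paper's argument.
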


\begin{proof}
    Since $\xi$ is unbounded, there is a graph $H$ with $\xi(H) \geq k$. We construct $G$ by appending a leaf to every vertex of $H$. Due to the condition on $\xi$, it holds that $\xi(G) \geq \xi(H) \geq k$.

    We start our $\ZT$-forcing procedure and our $\TD$-forcing procedure with a complete white $G$. Now we iteratively apply the $T$-rule to the added leaves. Note that this is possible since every leaf is white and has exactly one white neighbor. After this, all vertices of $H$ are blue. Therefore, we can iteratively apply the $Z$-rule to them as all of them are blue and have exactly one white neighbor. Instead of applying the $Z$-rule to the vertices of $H$, we can also apply the $D$-rule to the leaves.
\end{proof}

Note that the result also holds for total forcing sets as the empty set does not induce isolated vertices. A similar result also holds for connected forcing sets. If the empty set is considered as connected, then this is trivial. Otherwise, coloring one vertex of $H$ blue leads to a connected forcing set.

We are only aware of a few reasonable graph parameters that might decrease when leaves are added to the graph. One example is the minimum degree $\delta(G)$ that has also been considered in the context of parameterized algorithms at least once~\cite{cook20102k2}. It is easy to see that $\delta(G)$ is at most $k+1$ if there is a $\ZTD$-forcing set of size $k$ in $G$. So an \FPT{} algorithm for the parameter $\delta(G)$ would directly imply an \FPT{} algorithm for the solution size. Nevertheless, it is also easy to see that \FORCE{T,D}, \FORCE{Z,T}, and \FORCE{Z,T,D} are para-\NP-hard for $\delta(G)$.\footnote{For a graph $G$, we construct $G'$ by adding two universal vertices $u_1$ and $u_2$ to $G$ and a vertex $t$ that is only adjacent to $u_1$ and $u_2$. Then every forcing set of $G'$ contains at least one vertex of $\{u_1, u_2, t\}$. Conversely, adding $u_1$ to a forcing set of $G$ creates a forcing set of $G'$.}

Although we are not able to adapt the \FPT{} algorithm, we can at least show that the other forcing set problems can be solved in \XP{} time. To this end, we show that we can greedily check whether a given set of vertices is a forcing set.

\begin{lemma}
    Given a graph $G$ and a set $S \subseteq V(G)$, we can decide in polynomial time whether $S$ is a $\ZTD$-forcing set or a $\ZT$-forcing set or a $\TD$-forcing set.
\end{lemma}

\begin{proof}
    Consider $\FORCE{\R}$ where $\R$ is one of the sets $\ZTD$, $\ZT$ or $\TD$. We greedily apply any of the allowed rules if possible. Assume for contradiction that we get stuck although the set $S$ is an $\R$-forcing set. Consider the first applied rule $u \x v$ that was a mistake, i.e., the sequence of applied rules up till this point could be extended to a rule sequence $\sigma$ that colors the whole graph blue, but after applying $u \x v$, this is not possible anymore. Consider the first rule of $\sigma$ that cannot be applied anymore. This must be a rule that needs $v$ to be white. If it is the $D$-rule $v \d v$, then we can remove it from $\sigma$ without any consequences. So we may assume that it is a $T$-rule $v \t w$. Due to \cref{lemma:t-independent}, this implies that $X = Z$ and we can replace the rule $v \t w$ by the rule $v \z w$.
\end{proof}

This theorem implies \XP{} algorithms for all these forcing set problems.

\begin{theorem}\label{thm:xp}
    The following problems can be solved in time $n^{\O(k)}$ where $n$ is the number of vertices of the graph and $k$ is the solution size:
    \begin{itemize}
        \item \FORCE{Z,T,D}, \FORCE{Z,T} and \FORCE{T,D},
        \item \CON{\FORCE{Z,T,D}}, \CON{\FORCE{Z,T}}, and \CON{\FORCE{T,D}},
        \item \TOT{\FORCE{Z,T,D}}, \TOT{\FORCE{Z,T}} and \TOT{\FORCE{T,D}}.
    \end{itemize}
\end{theorem}

We have seen in \cref{sec:lln} that we can improve this running time for \FORCE{T,D} to polynomial time if we consider graphs of bounded independence number $\alpha(G)$. Here, we show that all problems mentioned in \cref{thm:xp} can be solved in \FPT{} time when parameterized by $k$ plus the independence number as long as the independence number is given as additional input. A key ingredient is the observation that the pathwidth of a graph is bounded by the sum of the size of its smallest \ZTD-forcing set and its independence number.

\begin{lemma}
     If a graph $G$ has a \ZTD-forcing set of size $k$, then the pathwidth of $G$ is at most $k + \alpha(G)$.
\end{lemma}

\begin{proof}
    Let $S$ be a \ZTD-forcing set of size $k$. Let $A$ be those vertices of $G$ to which a $T$-rule is applied, i.e., vertex $v$ is in $A$ if and only if the forcing process applies some rule $v \t w$ for some $w \in V(G)$. 
    
    Due to \cref{lemma:t-independent}, we know that $|A| \leq \alpha(G)$. Let $B$ be the vertices that are colored blue using $T$-rules. Note that $|B| = |A| \leq \alpha(G)$. It is easy to see that $S \cup B$ is a \ZD-forcing set since the $T$-rules are not needed anymore. Thus, we know that there is a \ZD-forcing set of size at most $k + \alpha(G)$. Due to \cref{lemma:barioli}, the pathwidth of $G$ is at most $k + \alpha(G)$.
\end{proof}

Using the same arguments as for \cref{thm:solution-size,corol:solution-size-connected}, we get the following.

\begin{theorem}
    When given the independence number $\alpha(G)$ as additional input, the following problems can be solved in $2^{\O(t^2)} \cdot n$ time where $t = k + \alpha(G)$:
    \begin{itemize}
        \item \FORCE{Z,T,D}, \FORCE{Z,T}, and \FORCE{T,D},
        \item \CON{\FORCE{Z,T,D}}, \CON{\FORCE{Z,T}}, and \CON{\FORCE{T,D}},
        \item \TOT{\FORCE{Z,T,D}}, \TOT{\FORCE{Z,T}}, and \TOT{\FORCE{T,D}},
        \item \DUAL{\TN} and \DUAL{\LLNC}.
    \end{itemize}
\end{theorem}

Note that it might be possible that there is also an \FPT{} algorithm for \FORCE{Z,T} and its variants when parameterized by the solution size, i.e., we might omit the independence number from the parameter. 

\subsection{W[1]-Hardness Results}\label{sec:force-w1}

In this section, we will complement the \XP{} algorithms presented in \cref{thm:xp} by respective lower bounds -- at least for \FORCE{Z,T,D} and \FORCE{T,D}. Furthermore, we show that the problems are \WOne-hard.

\begin{theorem}\label{thm:force-w1}
    The following problems are \WOne-hard on split graphs and on bipartite graphs when they are parameterized by the solution size:
    \begin{itemize}
        \item \FORCE{T,D} and \FORCE{Z,T,D},
        \item \CON{\FORCE{T,D}} and \CON{\FORCE{Z,T,D}},
        \item \TOT{\FORCE{T,D}} and \TOT{\FORCE{Z,T,D}},
    \end{itemize}
    Furthermore, assuming the Exponential Time Hypothesis, there is no $f(k) n^{o(k)}$ time algorithm for these problems for any computable function~$f$ even if the given $n$-vertex graph is a bipartite graph or a split graph.
\end{theorem}

In the following, we will present a proof of that theorem using several lemmas. Again, we reduce from \MCP{}. Let $G$ be the input graph and $V^1, \dots, V^k$ be the partition of $V(G)$ into independent sets. W.l.o.g.~we may assume that $k \geq 2$ and that all color classes are of the same size, i.e., we assume that $V^i = \{v^i_1, \dots, v^i_q\}$ for all $i \in [k]$ and some fixed value $q \in \N$. We also may assume that $q \geq 4$ and that all vertices have at least two neighbors in every other color class. This assumption is justified by the way the \WOne-hardness of \MCP{} is proven (see \cite{fellows2009param,pietrzak2003parameterized}). We construct the graph $G'$ as follows:

\begin{description}
    \item[Selection Gadgets] For every $i \in [k]$, we have a selection gadget $\S^i$ (see \cref{fig:force-w1} for an illustration). For every $p \in [q]$, it contains two vertices $x^i_p(1)$ and $x^i_p(2)$ -- called \emph{selection vertices} -- as well as four vertices $y^i_p(1), \dots, y^i_p(4)$. For every $p$, these vertices form a complete bipartite graph with the $x$-vertices on the one side and the $y$-vertices on the other side. 
    Additionally, the gadget contains the four vertices $s^i(1), \dots, s^i(4)$. For every $p \in [q]$ and $a \in \{1,2,3,4\}$, the vertex $y^i_p(a)$ is adjacent to the vertex $s^i(a)$. 
    \item[Verification Gadgets] For every two-element set $\{i,j\} \subseteq [k]$, we have two verification gadgets $\C^{ij}(1)$ and $\C^{ij}(2)$. Such a gadget $\C^{ij}(b)$ contains for every edge $v^i_pv^j_r \in E(G)$ an \emph{edge vertex} $w^{ij}_{pr}(b)$. Furthermore, it contains a \emph{verification vertex} $c^{ij}(b)$. The vertex $c^{ij}(b)$ is adjacent to all edge vertices $w^{ij}_{pr}(b)$. Note that we can write $\C^{ij}(b)$ and $\C^{ji}(b)$ interchangeably. Similarly, the vertices $c^{ij}(b)$ and $c^{ji}(b)$ as well as the vertices $w^{ij}_{pr}$ and $w^{ji}_{rp}$ are identical.
    \item[Clean-Up Gadget] For every selection vertex $x^i_p(a) \in \S^i$, we have a vertex $\hat{x}^i_p(a)$. We call all these vertices \emph{clean-up vertices}.
    \item[Connectivity Gadget] We have two vertices $z_1$ and $z_2$ as well as two vertices $t_1$ and $t_2$, where $t_1$ and $t_2$ are both adjacent to $z_1$ and $z_2$. We call these four vertices \emph{connectivity vertices}.
\end{description}

Besides the edges within the gadgets, there are also edges between vertices of different gadgets.

\begin{figure}
    \centering
    \begin{tikzpicture}
    \scriptsize
    \node[svertex, label={[name=lxi11]-90:$x^i_1(1)$}] (xi11) at (0,0) {};
    \node[svertex, label=-90:$x^i_1(2)$] (xi12) at (1,0) {};

    \node[svertex, label=180:$y^i_1(1)$] (yi11) at (-1.5,1) {};
    \node[svertex, label=180:$y^i_1(2)$] (yi12) at (0,1) {};
    \node[svertex, label=0:$y^i_1(3)$] (yi13) at (1.5,1) {};
    \node[svertex, label=0:$y^i_1(4)$] (yi14) at (3,1) {};

    \begin{scope}[xshift=7cm]
        \node[svertex, label=-90:$x^i_q(1)$] (xiq1) at (0,0) {};
        \node[svertex, label={[name=lxiq2]-90:$x^i_q(2)$}] (xiq2) at (1,0) {};
    
        \node[svertex, label=180:$y^i_q(1)$] (yiq1) at (-1.5,1) {};
        \node[svertex, label=180:$y^i_q(2)$] (yiq2) at (0,1) {};
        \node[svertex, label=0:$y^i_q(3)$] (yiq3) at (1.5,1) {};
        \node[svertex, label=0:$y^i_q(4)$] (yiq4) at (3,1) {};
    \end{scope}

    \node at (4.25,1) {$\cdots$};
    \node at (4.25,-0.26) {$\cdots$};

    \node[svertex, label=90:$s^i(1)$] (si1) at (2,3) {};
    \node[svertex, label=90:$s^i(2)$] (si2) at (3.5,3) {};
    \node[svertex, label=90:$s^i(3)$] (si3) at (5,3) {};
    \node[svertex, label=90:$s^i(4)$] (si4) at (6.5,3) {};

    \draw (xi11) -- (yi11);
    \draw (xi11) -- (yi12);
    \draw (xi11) -- (yi13);
    \draw (xi11) -- (yi14);

    \draw (xi12) -- (yi11);
    \draw (xi12) -- (yi12);
    \draw (xi12) -- (yi13);
    \draw (xi12) -- (yi14);

    \draw (xiq1) -- (yiq1);
    \draw (xiq1) -- (yiq2);
    \draw (xiq1) -- (yiq3);
    \draw (xiq1) -- (yiq4);

    \draw (xiq2) -- (yiq1);
    \draw (xiq2) -- (yiq2);
    \draw (xiq2) -- (yiq3);
    \draw (xiq2) -- (yiq4);

    \draw (si1) -- (yiq1);
    \draw (si2) -- (yiq2);
    \draw (si3) -- (yiq3);
    \draw (si4) -- (yiq4);

    \draw (si1) -- (yi11);
    \draw (si2) -- (yi12);
    \draw (si3) -- (yi13);
    \draw (si4) -- (yi14);
\end{tikzpicture}
    \caption{Selection gadget $\S^i$ of color class $V^i$ in the construction of \cref{thm:force-w1}.
    }
    \label{fig:force-w1}
\end{figure}

\begin{enumerate}[(E1)]
    \item For every $i \in [k]$ and $a \in \{1,2,3,4\}$, vertex $s^i(a)$ is adjacent to all edge vertices and all clean-up vertices.
    \item Every clean-up vertex is adjacent to all verification vertices.
    \item The clean-up vertex $\hat{x}^i_p(a)$ is adjacent to the selection vertex $x^i_p(a)$.
    \item Selection vertex $x^i_p(a)$ is adjacent to edge vertex $w^{ij}_{rs}(b)$ if and only if $r = p$.\label{item:force-w1-edge1}
    \item The vertices $z_1$ and $z_2$ are adjacent to all vertices $s^i(a)$, to all selection vertices and to all verification vertices.\label{e5}
\end{enumerate}

Observe that $G'$ is bipartite since we can partition $V(G')$ into two independent sets $A$ and $B$. Set $A$ contains all selection vertices, all verification vertices, all vertices $s^i(a)$ as well as the vertices $t_1$ and $t_2$. Set $B$ contains all $y$-vertices, all edge vertices, all clean-up vertices as well as the vertices $z_1$ and $z_2$. We also consider the graph $G''$ which is constructed from $G'$ by making the set $A$ to a clique. Note that $G''$ is a split graph.

In the following, we will prove that there is a multicolored clique in $G$ of size $k$ if and only if there is a respective forcing set of $G'$ and of $G''$ of size $2k + 1$.

Similar as in the proof of \cref{thm:bip}, we will distinguish different phases of the coloring process. In the \emph{selection phase}, we choose some vertices of the selection gadgets to be part of the forcing set and these vertices should represent a multicolored clique. Then, in the \emph{verification phase} all verification vertices should be colored blue which we will show is only possible if our choice in the selection phase has been correct. Finally, we will have a \emph{clean-up phase}, which colors all remaining vertices blue.

We start by showing that a multicolored clique implies a connected $\TD$-forcing set of $G'$ and $G''$ of size $2k + 1$. Note that such a set is also a total $\TD$-forcing set as well as a connected and total $\ZTD$-forcing set.

\begin{lemma}\label{lemma:force-w1-direction1}
    If there is a multicolored clique $\K = \{v^1_{p_1}, \dots, v^k_{p_k}\}$ in $G$, then there is a connected $\TD$-forcing set of $G'$ and $G''$ of size $2k + 1$.
\end{lemma}

\begin{proof}
    We claim that the set $\{x^1_{p_1}(1), x^1_{p_1}(2), \dots, x^k_{p_k}(1), x^k_{p_k}(2)\} \cup \{z_1\}$ is a connected $\TD$-forcing set of $G'$ and $G''$. First observe that the set induces a connected subgraph since all $x$-vertices are adjacent to $z_1$ in both graphs, due to (E\ref{e5}).
    
    We start our $\TD$-forcing procedure by coloring all these vertices blue. Then we color the connectivity vertices blue. Observe that $t_1$ is white and has only one white neighbor $z_2$, so we can apply rule $t_1 \t z_2$. Now both $t_1$ and $t_2$ have only blue neighbors, so we can apply the rules $t_1 \d t_1$ and $t_2 \d t_2$.
    
    For every $i \in [k]$, the vertices $y^i_{p_i}(a)$ are white and have exactly one white neighbor: vertex $s^i(a)$. Thus, we can apply all the rules $y^i_{p_i}(a) \t s^i(a)$. Afterwards, all vertices $s^i(a)$ are blue. 

    Since $\K$ is a clique, there are the edge vertices $w^{ij}_{p_ip_j}(1)$ and $w^{ij}_{p_ip_j}(2)$ for every $\{i,j\} \subseteq [k]$. These vertices are white and have exactly one white neighbor: $c^{ij}(1)$ and $c^{ij}(2)$, respectively. Thus, we can apply the rules $w^{ij}_{p_ip_j}(1) \t c^{ij}(1)$ and $w^{ij}_{p_ip_j}(2) \t c^{ij}(2)$. Afterwards, all verification vertices are blue. Therefore, every clean-up vertex $\hat{x}^i_r(a)$ with $r \neq p_i$ is white and has exactly one white neighbor: vertex $x^i_r(a)$. Hence, we can apply the rules $\hat{x}^i_r(a) \t x^i_r(a)$. The remaining white vertices are the $y$-vertices, the edge vertices, and the clean-up vertices. Observe that all these vertices have only blue neighbors. Thus, we can make them all blue by applying the $D$-rule.
\end{proof}

For the reverse direction, it is sufficient to prove that a \ZTD-forcing set of $G'$ and $G''$ of size $\leq 2k + 1$ implies a multicolored clique of size $k$ in $G$ since every \TD-forcing set is also a \ZTD-forcing set. 

So we assume in the following that $S$ is a (not necessarily connected or total) \ZTD-forcing set of $G'$ or $G''$ of size $\leq 2k + 1$. 

\begin{lemma}\label{lemma:force-w1:conn}
    The set $S$ contains at least one connectivity vertex.
\end{lemma}

\begin{proof}
    Assume for contradiction that this is not the case. Let $v$ be the first connectivity vertex that becomes blue. Observe that this cannot happen with the $D$-rule $v \d v$ since $v$ is adjacent to two other connectivity vertices that are still white. So there is a vertex $w$ such that either rule $w \t v$ or $w \z v$ is applied to color $v$. However, if $v$ is one of the $z$-vertices, then $w$ is also adjacent to the other $z$-vertex. If $v$ is one of the $t$-vertices, then $w$ is also adjacent to the other $t$-vertex. So in both cases, $w$ had two white neighbors when the rule is applied; a contradiction.
\end{proof}

Next, we show that all other vertices of $S$ have to be part of some selection gadget.

\begin{lemma}\label{lemma:force-w1:sel1}
     For every $i \in [k]$, $S$ contains exactly two vertices of selection gadget $\S^i$. Therefore, $S$ contains exactly one connectivity vertex and $2k$ vertices from selection gadgets.
\end{lemma}

\begin{proof}
    Assume for contradiction that there is a selection gadget $\S^i$ that contains less than two vertices of $S$. Since we have assumed that $q \geq 4$, every vertex of $\S^i$ has at least three neighbors in $\S^i$. Hence, every vertex has at least two white neighbors at the beginning and we cannot color any of the vertices of $\S^i$ blue by applying one of the rules to a vertex of $\S^i$. Thus, the first vertex of $\S^i$ that is colored blue by some rule has to be colored from outside of $\S^i$. However, all neighbors of vertices of $\S^i$ that are not part of $\S^i$ are also neighbors of the four vertices $s^i(1), \dots, s^i(4)$ and at least three of them are still white. So none of the vertices outside of $\S^i$ can color some vertex in $\S^i$ blue. This contradicts the fact that $S$ is a \ZTD-forcing set.

    Summarizing, every selection gadget contains at least two vertices of $S$. Due to \cref{lemma:force-w1:conn}, at most $2k$ vertices of $S$ can be contained in selection gadgets. Since there are $k$~selection gadgets, every selection gadget contains exactly two vertices of $S$.
\end{proof}

Next, we show that the two $S$-vertices in $\S^i$ represent exactly one vertex of~$G$.

\begin{lemma}\label{lemma:force-w1:sel2}
    For every $i \in [k]$, there is a unique $p_i \in [q]$ such that $x^i_{p_i}(a) \in S$ for some $a \in \{1,2\}$.
\end{lemma}

\begin{proof}
    Let $v$ be the first vertex that colors some vertex of $\S^i \setminus S$ blue. Due to \cref{lemma:force-w1:sel1}, there are exactly two vertices of $S$ in $\S^i$. So at least two vertices of $s^i(1), \dots, s^i(4)$ are not in~$S$. Similar as we have observed in the proof of \cref{lemma:force-w1:sel1}, no neighbor of $\S^i$ outside of $\S^i$ can be $v$ since all of them have at least two white neighbors in $\S^i$. Therefore, $v$ is in $\S^i$. As there are only two vertices of $S$ in $\S^i$, $v$ can have at most three neighbors in $\S^i$. Thus, $v$ is some vertex~$y^i_{r}(a)$. As two of its neighbors must be in $S$, at least one of them is a vertex $x^i_{r}(a')$ and the other is different from any $x^i_{r'}(a'')$ with $r' \neq r$.
\end{proof}

We now consider the verification gadgets. We first observe that edge vertices cannot become blue before their respective verification vertex becomes blue.

\begin{lemma}\label{lemma:force-w1:ver1}
    If an edge vertex in $\C^{ij}(b)$ becomes blue, then $c^{ij}(b)$ is already blue.
\end{lemma}

\begin{proof}
    Assume for contradiction that $w^{ij}_{pr}(b)$ is the first vertex in $\C^{ij}(b)$ that becomes blue. Due to \cref{lemma:force-w1:sel1}, $w^{ij}_{pr}(b)$ is not in $S$. Hence, it is made blue by applying some of the rules in $\ZTD$. If this rule is a $D$-rule, then all neighbors of $w^{ij}_{pr}(b)$ are already blue. In particular, this holds for $c^{ij}(b)$; a contradiction to our assumption.
    
    Otherwise, there must be some neighbor of $w^{ij}_{pr}(b)$ that has only one white neighbor. However, we assumed above that every vertex of $G$ has at least two neighbors in every other color class. Therefore, all neighbors of $w^{ij}_{pr}(b)$ have at least one further neighbor among the edge vertices of $\C^{ij}(b)$. Since we have chosen $w^{ij}_{pr}(b)$ to be the first edge vertex to become blue in $\C^{ij}(b)$, the other vertex is also white and we neither can apply the $Z$-rule nor the $T$-rule to color $w^{ij}_{pr}(b)$ blue. This contradicts our assumption.
\end{proof}

Now, we show that the verification phase -- i.e., the coloring of the verification vertices -- has to happen before any further selection vertex is colored blue.

\begin{lemma}\label{lemma:force-w1:ver2}
    Let $i \in [k]$ and let $p_i$ be chosen as in \cref{lemma:force-w1:sel2}. If one vertex $x^i_{r}(a)$ with $r \neq p_i$ becomes blue, then for all $j \in [k]$ with $j \neq i$ and all $b \in \{1,2\}$ it holds that $c^{ij}(b)$ is already blue. 
\end{lemma}

\begin{proof}
    Let $x^i_{r}(a)$ be the first vertex with $r \neq p_i$ that becomes blue  and let $x^i_{r}(a')$ be the unique other vertex in $\S^i$ with $a' \neq a$. Due to \cref{lemma:force-w1:sel2}, neither $x^i_{r}(a)$ nor $x^i_{r}(a')$ are in $S$. Thus, $x^i_{r}(a)$ becomes blue through some rule from $\ZTD$ and $x^i_{r}(a')$ is still white when this happens. First assume that $x^i_{r}(a)$ is colored using a $D$-rule. As vertex $v^i_r$ has some neighbor in every other color class of $G$, this would imply that every verification gadget $\C^{ij}(b)$ has at least one edge vertex that is already blue. Due to \cref{lemma:force-w1:ver1}, all the verification vertices $c^{ij}(b)$ are also already blue.
    
    So we may assume that $x^i_r(a)$ is not colored by some $D$-rule. Then there is a neighbor of $x^i_{r}(a)$ whose only white neighbor is $x^i_{r}(a)$. In particular, this implies that the neighbor is neither adjacent to $x^i_{r}(a')$ nor to any other white selection vertex of $\S^i$. Note that there are at least four other white selection vertices in $\S^i$, due to \cref{lemma:force-w1:sel2} and the fact that $q \geq 4$. Both in $G'$ and $G''$, there is only one neighbor of $x^i_{r}(a)$ that fulfills this condition: the clean-up vertex~$\hat{x}^i_{r}(a)$. However, this vertex is adjacent to all verification vertices and, thus, they all must be blue already.
\end{proof}

Finally, this allows us to prove the existence of a multicolored clique in $G$.

\begin{lemma}\label{lemma:force-w1-direction2}
    Let the values $p_1, \dots, p_k$ be chosen as in \cref{lemma:force-w1:sel2}. Then the set $\{v^1_{p_1}, \dots, v^k_{p_k}\}$ induces a clique in $G$.
\end{lemma}

\begin{proof}
    Let $\{i,j\} \subseteq [k]$ with $i \neq j$ be arbitrary. It is sufficient to show that $v^i_{p_i}v^j_{p_j} \in E(G)$. Let $c^{ij}(b)$ be the first of the two vertices $c^{ij}(1)$ and $c^{ij}(2)$ that is colored blue and let $c^{ij}(b')$ be the other. Again, we can conclude using \cref{lemma:force-w1:sel1} that $c^{ij}(b)$ has been colored blue by applying some rule of $\ZTD$. 
    Due to \cref{lemma:force-w1:ver1}, all edge vertices of $\C^{ij}(b)$ are white when $c^{ij}(b)$ becomes blue. Thus, $c^{ij}(b)$ was not colored blue by a $D$-rule. Since $c^{ij}(b')$ is still white when $c^{ij}(b)$ becomes blue, the vertex that colors $c^{ij}(b)$ blue cannot be adjacent to $c^{ij}(b')$. Thus, it is an edge vertex $w^{ij}_{rt}(b)$. This implies that the vertices $x^i_r(1)$, $x^i_r(2)$, $x^j_t(1)$, and $x^j_t(2)$ are already blue. Due to \cref{lemma:force-w1:ver2}, it holds that $r = p_i$ and $t = p_j$ since otherwise $c^{ij}(b)$ would have already been blue. Hence, the edge $v^i_{p_i}v^j_{p_j}$ exists in $G$.
\end{proof}

The combination of \cref{lemma:force-w1-direction1,lemma:force-w1-direction2} with the fact that $G'$ and $G''$ can be constructed in polynomial time shows that the given construction is a proper \FPT{} reduction from \MCP{} to all the problems mentioned in \cref{thm:force-w1}. This concludes the \WOne-hardness proof. Note that the reduction increases the parameter only linearly. Thus, we can apply \cref{thm:lower} to get the lower bound assuming the Exponential Time Hypothesis.

\bigskip
An adaption of this proof to \FORCE{Z,T} seems to be non-trivial. The main challenge is how we can substitute the $D$-rules that have been used in \cref{lemma:force-w1-direction1} to color the edge vertices, $y$-vertices and clean-up vertices blue.

Since our reduction is a polynomial-time reduction, we can also conclude \NP-completeness of the problems on bipartite graphs and split graphs.

\begin{corollary}
    The following problems are \NP-complete on bipartite graphs and on split graphs:
     \begin{itemize}
        \item \FORCE{T,D} and \FORCE{Z,T,D},
        \item \CON{\FORCE{T,D}} and \CON{\FORCE{Z,T,D}},
        \item \TOT{\FORCE{T,D}} and \TOT{\FORCE{Z,T,D}},
        \item \LLN{} and \LN.
    \end{itemize}
\end{corollary}

Note that the \NP-completeness of \FORCE{Z,T,D} and \LN{} on these graph classes has already been shown earlier~\cite{bresar2020grundy,bresar2017grundy}.

\section{Conclusion}

We have presented an extensive study on the parameterized complexity of Grundy domination problems and their parametric duals called zero forcing problems. Nevertheless, some open questions remain. Probably the most intriguing of them is the parameterized complexity of \FORCE{Z,T}, or equivalently, of \DUAL{\TN{}}.

Further questions concern lower bounds for running times. It remains open whether the \FPT{} algorithms presented in \cref{sec:zero} whose running time exponents are all quadratic in the parameter can be improved to linear exponents. Furthermore, it remains open whether there are polynomial kernels for these parameterizations. Similar questions occur for the \WOne-hard problems. For \LLN{} as well as \DUAL{\LLNC}, we have shown that there is no $n^{o(k)}$ algorithm assuming the Exponential Time Hypothesis, while they can be solved in $n^{\O(k)}$ time. Similar results can be obtained for \PDS{}~\cite{guo2008improved,kneis2006parameterized}\footnote{The result is not stated explicitly in~\cite{guo2008improved,kneis2006parameterized}. However, both papers present parameter-preserving reductions from \textsc{Dominating Set}, and \cref{thm:lower} also holds for that problem~\cite{cygan2015param,lokshtanov2011lower}.} and \UD{} \cite{araujo2023parameterized,dublois2022upper}. However, our \WOne-hardness proofs for the other Grundy domination problems only imply that they cannot be solved in $n^{o(\sqrt[3]{k})}$ time, while all these problems can straightforwardly be solved in $n^{\O(k)}$ time. To best of our knowledge, there is also no tight lower bound known for \UI{}. The used reductions only imply that it cannot be solved in $n^{o(\sqrt{k})}$ time~\cite{downey2000complexity,jiang2012parameterized}.

There has been quite some research on the classical complexity of Grundy domination problems on certain graph classes. This led to \NP-hardness results on several classes (see \cref{fig:reductions} for an overview). For some of these cases, we were able to show that either the respective Grundy domination problem or its parametric dual is also \WOne-hard. For \TN{} and \LN{} on co-bipartite graphs, we have seen that only the primal parameterization is \WOne-hard while the dual parameterization is \FPT{}. It remains open whether there is some class where both the primal and the dual parameterization is \WOne-hard or both are in \FPT{}.

Further questions concern the adaption of our results to other variants of dominating sequences and forcing sets. For example, Haynes and Hedetniemi~\cite{haynes2021vertex} introduced \emph{double dominating sequences}. Algorithmic results on these sequences have been given in~\cite{bresar2022computational,sharma2025double,torres2026grundy}. Herrman and Smith generalized this notion to \emph{$k$-dominating sequences}~\cite{herrman2022extending}, which are related to \emph{$k$-forcing sets} introduced by Amos et al.~\cite{amos2015upper}.

\bibliography{lit}

\end{document}